\renewcommand{\epsilon}{\varepsilon}
\newcommand{\F}{\mathbb{F}}
\newcommand{\N}{\mathbb{N}}
\newcommand{\Z}{\mathbb{Z}}
\newcommand{\R}{\mathbb{R}}
\newcommand{\from}{\gets}
\newcommand{\E}{\mathop{{}\mathbb{E}}}
\newcommand{\Var}{\mathop{{}\mathsf{Var}}}
\DeclareMathOperator{\calC}{\mathcal{C}}
\DeclareMathOperator{\calE}{\mathcal{E}}
\DeclareMathOperator{\calG}{\mathcal{G}}
\DeclareMathOperator{\calH}{\mathcal{H}}
\DeclareMathOperator{\calU}{\mathcal{U}}
\DeclareMathOperator{\calS}{\mathcal{S}}
\DeclareMathOperator{\calW}{\mathcal{W}}
\DeclareMathOperator{\Ber}{\mathrm{Ber}}
\DeclareMathOperator{\Hyp}{\mathrm{Hyp}}
\DeclareMathOperator{\wt}{\mathrm{wt}}
\DeclareMathOperator{\len}{\mathrm{len}}
\newcommand{\bias}{\mathsf{bias}}
\newcommand{\KeyGen}{\mathsf{KeyGen}}
\newcommand{\Encode}{\mathsf{Encode}}
\newcommand{\Decode}{\mathsf{Decode}}
\newcommand{\Enc}{\mathsf{Enc}}
\newcommand{\Dec}{\mathsf{Dec}}
\newcommand{\m}{{\sf m}}
\newcommand{\PRF}{\mathsf{PRF}}
\newcommand{\PRG}{\mathsf{PRG}}
\newcommand{\PRC}{\mathsf{PRC}}
\newcommand{\xLPN}{\mathsf{xLPN}}
\newcommand{\LPN}{\mathsf{LPN}}
\newcommand{\LDPC}{\mathsf{LDPC}}
\newcommand{\LDPCPRC}{\mathsf{LDPC\text-PRC}}
\newcommand{\PRCsharp}{\mathsf{PRC}^{sharp}}
\newcommand{\PRCsharppub}{\mathsf{PRC}^{sharp-pub}}
\newcommand{\ECC}{\mathsf{ECC}}
\newcommand{\PermBits}{\mathsf{PermBits}}
\newcommand{\Gideal}{\calG^{\mathsf{ideal}}}
\newcommand{\Greal}{\calG^{\mathsf{real}}}
\newcommand{\Grobustsk}{\calG^{\mathsf{robust-sk}}}
\newcommand{\Grobustpk}{\calG^{\mathsf{robust-pk}}}
\newcommand{\transcript}{\mathsf{transcript}}
\newcommand{\Gpub}{{\cal G}^{\mathsf{pub}}}
\newtheorem{theorem}{Theorem}
\newtheorem*{theorem*}{Theorem}
\newtheorem{corollary}[theorem]{Corollary}
\newtheorem{lemma}{Lemma}[section]
\newtheorem*{lemma*}{Lemma}
\newtheorem*{remark*}{Remark}
\newtheorem{definition}{Definition}
\newtheorem*{definition*}{Definition}
\newtheorem{construction}{Construction}
\newtheorem{assumption}{Assumption}
\newtheorem{fact}{Fact}[section]
\Crefname{fact}{Fact}{Facts}
\definecolor{darkgreen}{rgb}{0.0, 0.5, 0.0}
\title{Ideal Pseudorandom Codes}
\author{
    Omar Alrabiah\thanks{UC Berkeley. Email: \texttt{oalrabiah@berkeley.edu}}
    \and Prabhanjan Ananth\thanks{UCSB. Email: \texttt{prabhanjan@cs.ucsb.edu}}
    \and Miranda Christ\thanks{Columbia University. Email: \texttt{mchrist@cs.columbia.edu}}
    \and Yevgeniy Dodis\thanks{NYU. Email: \texttt{dodis@cs.nyu.edu}}
    \and Sam Gunn\thanks{UC Berkeley. Email: \texttt{gunn@berkeley.edu}}
}
\date{\today}
\begin{document}

\maketitle

\begin{abstract}
\noindent 
Pseudorandom codes are error-correcting codes with the property that no efficient adversary can distinguish encodings from uniformly random strings.
They were recently introduced by Christ and Gunn [CRYPTO 2024] for the purpose of watermarking the outputs of randomized algorithms, such as generative AI models.
Several constructions of pseudorandom codes have since been proposed, but none of them are robust to error channels that depend on previously seen codewords.
This stronger kind of robustness is referred to as \emph{adaptive robustness}, and it is important for meaningful applications to watermarking.

\vspace{2ex}
\noindent
In this work, we show the following.
\begin{itemize}
    \item {\bf Adaptive robustness}: We show that the pseudorandom codes of Christ and Gunn are adaptively robust, resolving a conjecture posed by Cohen, Hoover, and Schoenbach [S\&P 2025].
    Our proof involves several new ingredients, combining ideas from both cryptography and coding theory and taking hints from the analysis of Boolean functions.

    \item {\bf Ideal security}: We define an \emph{ideal pseudorandom code} as one which is indistinguishable from the ideal functionality, capturing both the pseudorandomness and robustness properties in one simple definition.
    We show that any adaptively robust pseudorandom code for single-bit messages can be bootstrapped to build an ideal pseudorandom code with linear information rate, under no additional assumptions.
    
    \item {\bf CCA security}: In the setting where the encoding key is made public, we define a CCA-secure pseudorandom code in analogy with CCA-secure encryption.
    We show that any adaptively robust public-key pseudorandom code for single-bit messages can be used to build a CCA-secure pseudorandom code with linear information rate, in the random oracle model.
\end{itemize}

\vspace{2ex}
\noindent
Together with the result of Christ and Gunn, it follows that there exist ideal pseudorandom codes assuming the $2^{O(\sqrt{n})}$-hardness of LPN. This extends to CCA security in the random oracle model.
These results immediately imply stronger robustness guarantees for generative AI watermarking schemes, such as the practical quality-preserving image watermarks of Gunn, Zhao, and Song (2024).
\end{abstract}

\thispagestyle{empty}
\newpage
{\small\tableofcontents}
\thispagestyle{empty}
\newpage

\setcounter{page}{1}
\section{Introduction}
\paragraph{Pseudorandom codes.}
Suppose that Alice wishes to send a message to Bob over some channel.
The fields of cryptography and coding theory each address a different concern that Alice might have about the channel:
\begin{itemize}
    \item if the channel is untrusted in the sense that there may be an eavesdropper, then we use encryption;
    \item if the channel is unreliable in the sense that information may be corrupted in transit, then we use error-correcting codes.
\end{itemize}
If the channel is both untrusted and unreliable, there is typically no issue with combining these techniques.
That is, if Alice wishes only to hide the \emph{content} of her message, then she can simply error correct an encryption of her message.
However, this is insufficient if Alice \emph{doesn't want the channel to even know that communication has occurred}.\footnote{This kind of covert communication is the classic problem of \emph{steganography}, but classic stateless steganographic techniques fail on unreliable channels.}

Concretely, imagine that Alice wishes for it to appear as if her transmissions are uniformly random.
What Alice needs is then \emph{robust, pseudorandom encryption} --- transmissions should appear random, and decryption should function even if they are corrupted.
Such an object is called a \emph{pseudorandom code} (PRC).

More formally, a (secret-key) PRC is a keyed error correction scheme consisting of algorithms for encoding and decoding.
To an adversary without knowledge of the key, an encoding oracle is computationally indistinguishable from an oracle that outputs a freshly random string on each query.
With the secret key, one can decode codewords even after they are subjected to an error channel.\footnote{We are interested in error channels that introduce a \emph{constant} rate of errors, and reserve the term ``pseudorandom code'' for schemes that handle such channels.
This is the more practically relevant as well as theoretically interesting setting; it is easy to handle any sub-constant error rate using just one-way functions \cite{GG24}.}

To build a PRC, the basic results of cryptography and coding theory fall short.
The central difficulty is that Alice's transmission must, on the one hand, appear \emph{completely structure-less} to the channel, and on the other, appear \emph{highly redundant} to Bob so that he can reliably recover the message even if the transmission is corrupted.
Therefore pseudorandom codes force us to push the boundaries of both cryptography and coding theory, a tension that is reflected in the fact that all known constructions of PRCs have quasipolynomial-time distinguishing attacks \cite{CG24,GM24,GG24}.

\paragraph{Watermarking.}
PRCs are not only natural cryptographic and coding-theoretic objects, but also powerful tools for constructing watermarks for generative AI.
See \Cref{subsection:relationship-wat} for an explanation of how this works; for now let us just say that at a high level, this is because one can typically view generative AI models as approximate reparameterizations of ``content'' (e.g. images or text) into a space where the content has a natural distribution (e.g. normal or uniform).
With this observation, \cite{CG24,GM24} use PRCs to construct watermarks for language models that are \emph{undetectable} in that the watermarked model is computationally indistinguishable from the original model (thereby ensuring that the watermark does not degrade generation quality), and \emph{robust} in that the watermark tolerates a constant rate of errors.
Prior text watermarks were either only undetectable but not robust \cite{CGZ24}, or robust but significantly altered the model's output distribution \cite{ZALW24,KTHL24}.
Similarly, the only known quality-preserving and robust watermarks for image generation models uses a PRC.
This approach was demonstrated in practice by \cite{GZS24}.

These schemes essentially embed a PRC codeword in the space where content has a natural distribution; pseudorandomness of the PRC implies undetectability of the watermark, and robustness of the PRC implies robustness of the watermark.
As we explain in \Cref{subsection:relationship-wat}, PRCs are in fact \emph{equivalent} to undetectable and robust watermarks for language models.

\paragraph{Oblivious vs adaptive robustness.}
While existing PRCs (and their corresponding watermarks) tolerate a high rate of errors, these errors must be made \emph{obliviously}, i.e., by a memoryless channel.
This falls short of the worst-case error model commonly considered in coding theory.
Furthermore, errors introduced by realistic adversaries are non-oblivious: A realistic adversary can see multiple watermarked responses, and sometimes even query a detection oracle.
It turns out that PRCs (and watermarks) that are highly robust to oblivious attacks may be easily removable by such adversaries, as demonstrated by an example in \cite{CHS24}.\footnote{Let $\calW$ be a watermarked model.
Consider another watermarked model $\calW'$ that, given a prompt $\pi$, outputs $\calW(\pi)$ if $\pi$ is unwatermarked and outputs an \emph{unwatermarked response} if $\pi$ is watermarked.
Then $\calW'$ has the same oblivious robustness as $\calW$, because an oblivious adversary cannot find a watermarked $\pi$; however, it is not at all robust to an adversary that makes just two queries to $\calW'$: The adversary can simply obtain a watermarked response $x$ from $\calW'$, then query $\calW'$ on $x$ to obtain an unwatermarked response.
The adversary doesn't even need to make any edits!}
In fact, this is not just a theoretical possibility.
Several practical attacks leverage access to encoding and decoding oracles to remove or forge watermarks \cite{DBLP:conf/icml/0001SV24,DBLP:journals/corr/abs-2402-16187}.

Therefore it is of both practical and theoretical importance to have PRCs that are robust to non-oblivious adversaries.
We say that a PRC is \emph{adaptively robust} if it can handle adversaries who are given an encoding oracle (or key).
We introduce \emph{ideal security} and \emph{CCA security} to handle cases where the adversary additionally queries a detection oracle.
Adaptive robustness was previously studied in \cite{CHS24}, but they showed only that the watermark of \cite{CGZ24} was adaptively robust to a very low (specifically, inverse security parameter) rate of errors.
\cite{CHS24} conjectured but did not prove that the PRC of \cite{CG24} is adaptively robust.

\paragraph{This work.}
We prove that a slight modification of the PRC of \cite{CG24} is adaptively robust to a \emph{constant rate of substitutions}, resolving the conjecture of \cite{CHS24}.
That is, for any $\delta < 1/4$ (or $\delta < 1/2$ if we wish only to detect but not decode messages), we show that no adversary can produce an error of relative weight at most $\delta$ that causes the decoder to fail --- even if the adversary is provided with the encoding key.
We call this property \emph{adaptive $\delta$-robustness}.
Our proof involves an interesting combination of techniques from cryptography and coding theory.

We then ask whether it is possible to achieve robustness to an adversary with access to both an encoding oracle \emph{and} a decoding oracle.
Rather than contribute to a growing list of separately defined properties, we take a principled approach and consider the \emph{ideal functionality} of a PRC.
This functionality dictates how the code should behave to any adversary, regardless of the adversary's goal (e.g., distinguishing codewords from random, or mauling codewords).
We say that a PRC satisfies \emph{ideal security} (or is an \emph{ideal PRC}) if the real PRC encoding and decoding oracles are indistinguishable from this ideal functionality.
This definition combines soundness, pseudorandomness, and adaptive robustness into one simple definition.

We show a generic and simple transformation from \emph{any} adaptively robust PRC capable of encoding a single-bit message to an ideal PRC with linear information rate.
Our transformation requires only a pseudorandom function, the existence of which is implied by that of a PRC.
This transformation, applied to the adaptively robust PRCs discussed above, yields a construction of an ideal PRC.

Finally, we turn to the public-key setting, where both pseudorandomness and robustness are defined with respect to adversaries that have access not only to an encoding \emph{oracle}, but an encoding \emph{key}.
While ideal security is the strongest possible definition in the secret-key setting, there is no obvious ``ideal'' security notion in the public-key setting.
Instead, we define a strengthening of adaptive robustness that we call \emph{CCA security} in analogy with CCA secure encryption.
This definition generalizes standard CCA security for pseudorandom encryption.
In the random oracle model, we show a generic transformation that builds a CCA-secure PRC from any adaptively robust public-key PRC.

We summarize our four definitions of PRC robustness in \Cref{table:definitions}.

\begin{table}[ht]
    \centering
    \begin{tabular}{|c|c|c|}
        \hline
        & \textbf{No decoder access} & \textbf{Decoder access} \\ \hline
        \textbf{Secret-key} & Secret-key adaptive robustness & Ideal security \\ \hline
        \textbf{Public-key} & Public-key adaptive robustness & CCA security \\ \hline
    \end{tabular}
\begin{center}   
\caption{The four definitions of robustness considered in this work. ``Secret-key'' means the adversary is given oracle access to the encoder and ``public-key'' means the adversary is given the encoding key. ``No decoder access'' and ``decoder access'' distinguish whether the adversary is additionally given oracle access to the decoder. We satisfy the ``no decoder access'' column in \Cref{thm:zero-bit,thm:one-bit}, ideal security in \Cref{thm:secret-key}, and CCA security in \Cref{thm:public-key}.}
    \label{table:definitions}
\end{center}
\end{table}

\subsection{Results}

In this subsection we highlight our main results, which can be summarized follows: (a) We prove unconditionally that certain PRCs based on LDPC codes are (maximally) adaptively robust in both the secret- and public-key settings; (b) In the secret-key setting, we present a generic transformation converting any adaptively robust PRC into an ideal PRC; (c) In the public-key setting, we present a generic transformation converting any adaptively robust PRC into a CCA-secure PRC in the random oracle model.

In presenting our results, we use the term ``$\delta$-robust'' to indicate that the attacker can adaptively corrupt up to a $\delta$ fraction of the codeword.
By ``zero-bit'' PRC we mean a PRC that is only capable of encrypting a fixed message; by ``single-bit'' we mean that the PRC can encrypt 0 or 1.

Our first two results show essentially optimal robustness of the zero-bit LDPC-based PRC from \cite{CG24}, and our single-bit version of it.
Any zero-bit PRC is at most $1/2$-adaptively robust, since such errors can completely randomize the codeword.
Any single-bit PRC is at most $1/4$-adaptively robust, since an adversary can always construct a string that is $1/4$-close to two codewords encoding different messages (see the remark at the end of \Cref{subsection:techo-adaptive-robustness}).

\begin{theorem}[Adaptively robust zero-bit PRC]\label{thm:zero-bit}
    For any $\varepsilon > 0$, the public-key zero-bit pseudorandom code from \cite{CG24} is adaptively $(1/2-\varepsilon)$-robust for appropriate choice of parameters.
\end{theorem}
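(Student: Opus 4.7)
The plan is to reduce adaptive robustness of the zero-bit PRC of \cite{CG24} to a Krawtchouk-style bias estimate for random sparse parities against a low-weight adversarial error. Recall that this PRC has as secret key a random sparse parity-check matrix $P \in \{0,1\}^{r \times n}$ whose rows $p_1,\dots,p_r$ each have weight $t$, and a public encoding key $\ek$ based on xLPN that allows sampling codewords $c$ with $\Pr[\langle p_i, c\rangle = 0] \ge 1/2 + \eta$ for each $i$; the decoder counts the zero parities and thresholds. The overall strategy is to first show that the adversary's entire view is effectively independent of the specific $p_i$'s, and then to show that any fixed low-weight $e$ can only slightly shift the decoder's zero-parity count when the $p_i$'s are drawn uniformly from sparse vectors of weight $t$.

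For the central technical step, fix an error $e$ with $|e| = w \le (1/2-\varepsilon)n$ committed by the adversary after seeing only a $P$-independent view. For a uniformly random $p_i$ of Hamming weight $t$, the Krawtchouk identity gives
\[
\E_{p_i}\!\left[(-1)^{\langle p_i, e\rangle}\right] \;=\; \binom{n}{t}^{-1} K_t(w, n) \;\approx\; (1 - 2w/n)^t \;\ge\; (2\varepsilon)^t,
\]
so that $\langle p_i, c^{*} \oplus e\rangle = \langle p_i, c^{*}\rangle \oplus \langle p_i, e\rangle$ still has bias at least $\eta\cdot(2\varepsilon)^t$ toward $0$. A Hoeffding bound across the $r$ nearly-independent parity checks then concentrates the zero-parity count around $r/2 + \Omega(r\eta(2\varepsilon)^t)$; choosing $t$ and $\eta$ depending on $\varepsilon$ and taking $r$ polynomial in the security parameter makes this count exceed the decoder's acceptance threshold except with negligible probability.

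The hard part will be the first step, namely rigorously decoupling the adversary's view from the rows $p_i$. Even a tiny leakage of a single row's support would let the adversary concentrate its $(1/2-\varepsilon)n$ error budget on that support and deterministically flip the corresponding parity, destroying the Krawtchouk bound. I expect to proceed by a sequence of hybrids that, under xLPN, replaces the xLPN samples embedded in $\ek$ and in every encoder output by uniform strings one at a time, ultimately showing that each $p_i$ is computationally indistinguishable from a freshly drawn uniform weight-$t$ vector given the adversary's full transcript. This lets us sample $P$ lazily \emph{after} the adversary commits to $(c^{*}, e)$, at which point the Krawtchouk bound of the second step applies cleanly to each row in turn. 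A further subtlety is that the encoder outputs are by design correlated with $P$, so the hybrid must exploit the structure of the xLPN-based encoding algorithm rather than treat the encoder as a black box.
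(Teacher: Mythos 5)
There is a genuine gap, and it sits exactly where you predicted the difficulty would be: the decoupling step, for which your proposed route cannot work. First, the statement you aim to prove by xLPN hybrids --- that each row $p_i$ of the secret parity-check matrix is computationally indistinguishable, given the adversary's full transcript, from a fresh uniform weight-$t$ vector --- is simply false: the transcript contains the public generator matrix $G$, every real row satisfies $p_i^\top G = 0$ by construction, a fresh uniform weight-$t$ vector satisfies this only with negligible probability, and orthogonality to $G$ is efficiently testable. Second, even a weaker hybrid that replaces $G$ (or encoder outputs) by uniform strings under xLPN does not transport the robustness guarantee back to the real game: the failure event ``$\wt(He)$ exceeds the threshold'' depends on the secret $H$, so a reduction that does not know $H$ cannot detect when the adversary's error succeeds, and hence cannot convert a robustness-breaking adversary into an xLPN distinguisher. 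This is why the paper's robustness proof is \emph{unconditional}: it uses no computational assumption at all, but instead the exact information-theoretic fact (\Cref{lemma:adv-knowledge}) that, conditioned on the public key, the rows of $H$ are uniform over $P_{\calC,t} = \calC^\perp \cap \calS_{t,n}$ and may therefore be sampled \emph{after} the adversary commits to its error. (A side remark: in the public-key game the adversary supplies its own encoding randomness, so your worry about encoder outputs leaking $H$ does not arise; the only leakage is $G$ itself, which is the real issue.)

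Even granting the correct conditional distribution, your second step is not yet a proof, because the Krawtchouk estimate you invoke is for parities uniform over all of $\calS_{t,n}$, whereas the decoder's checks are uniform over the much smaller set $P_{\calC,t}$ of weight-$t$ vectors orthogonal to $\calC$ --- and the adversary knows $\calC$. The technical heart of the paper is precisely closing this gap: a character-sum identity (\Cref{lemma:omar}) expressing $\bias(P_{\calC,t}\,x)$ as a normalized sum of $\bias(W_t(x\oplus c))$ over codewords $c\in\calC$, concentration of $|P_{\calC,t}|$ (\Cref{fact:parity-concentration}), and a bound showing the nonzero-codeword terms are negligible for an appropriately small dimension $d$, yielding $\bias(P_{\calC,t}\,e) \ge (1-o(1))\,\bias(e)^t$ for every $e$ with $\bias(e)=\Omega(1)$ (\Cref{theorem:parity-bias-lower-bound}). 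Your Krawtchouk-plus-Chernoff computation matches the paper's final step once this is in place, but without an argument that restricting to $\calC$-consistent parities barely changes the bias --- i.e., without something playing the role of \Cref{lemma:omar} and \Cref{theorem:parity-bias-lower-bound} --- the proposal does not establish adaptive robustness.
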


\begin{theorem}[Adaptively robust single-bit PRC]\label{thm:one-bit}
    For any $\varepsilon > 0$, our public-key single-bit pseudorandom code is adaptively $(1/4-\varepsilon)$-robust for appropriate choice of parameters.
\end{theorem}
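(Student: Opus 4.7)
The plan is to construct the single-bit PRC by running two independent instances $\mathsf{PRC}_0,\mathsf{PRC}_1$ of the zero-bit PRC from \Cref{thm:zero-bit}: sample independent key pairs $(\ek_i,\dk_i)$, encode $b$ as a fresh codeword under $\mathsf{PRC}_b$, and decode $c'$ to the unique $i\in\{0,1\}$ for which $\mathsf{PRC}_i.\Decode(c')$ accepts (outputting $\bot$ on ambiguity or joint rejection). Pseudorandomness of this composite construction reduces to pseudorandomness of \Cref{thm:zero-bit} via a routine two-step hybrid.

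For adaptive $(1/4-\eps)$-robustness, consider an adversary $\calA$ who on challenge $c\leftarrow\Encode(b)$ with uniform $b\in\{0,1\}$ outputs $c'$ with $d(c,c')\le(1/4-\eps)n$. The single-bit decoder fails to return $b$ only if either (I) $\mathsf{PRC}_b.\Decode(c')=\bot$, or (II) $\mathsf{PRC}_{1-b}.\Decode(c')\ne\bot$. Event (I) immediately breaks adaptive $(1/2-\eps')$-robustness of \Cref{thm:zero-bit}: the reduction guesses $b$ (losing a factor of $2$), samples the other key pair itself, embeds its received zero-bit challenge $(\ek_b,c)$ into the single-bit game, and forwards $\calA$'s output; since $1/4-\eps<1/2-\eps'$, any $c'$ triggering (I) also causes rejection in the zero-bit game.

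Event (II) is the main obstacle. The plan is, first, to use pseudorandomness of $\mathsf{PRC}_b$ (which holds given $\ek_b$ and, by independence, given $\ek_b,\ek_{1-b}$ jointly) to swap the real challenge $c$ for a uniformly random $c^\star$ at only negligible cost in the probability of (II); and second, to show that for uniform $c^\star$ no efficient $\calA$ given $\ek_0,\ek_1$ can find $c'\in B(c^\star,(1/4-\eps)n)$ accepted by $\mathsf{PRC}_{1-b}.\Decode$. This second step cannot be closed by a generic combinatorial volume bound on the accepting set -- computational pseudorandomness by itself does not force the decoder's accepting set to have small Hamming-sphere packing, since the decoder uses the secret $\dk_{1-b}$. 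I therefore expect the proof to exploit the specific LDPC structure of \Cref{thm:zero-bit}: producing a decodable string within $(1/4-\eps)n$ of a random target amounts to solving a planted near-codeword problem at relative noise $1/4-\eps<1/2$ in the (secretly permuted) LDPC code, which should reduce to the same LPN-type hardness assumption that underlies \Cref{thm:zero-bit}.

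Combining (I) and (II) yields adaptive $(1/4-\eps)$-robustness for an appropriate choice of rate, sparsity, and permutation parameters. The crux of the argument is step (II): one must translate the error budget $1/4-\eps$, together with the secret permutation hidden in the decoding key, into a violation of the underlying hardness assumption of \Cref{thm:zero-bit}, and it is in this translation that the slack $\eps>0$ in the theorem statement is consumed.
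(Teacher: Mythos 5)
Your decomposition into the two failure events is exactly right, and your handling of event (I) -- the $H_m$ decoder rejecting -- is the paper's as well (up to the fact that the paper re-derives this with single-bit parameters rather than importing \Cref{thm:zero-bit} as a black box, since the value of $t$ in \Cref{const:single-bit-ldpc-prc} is chosen larger so that detection falls off before radius $n/4$). The genuine gap is in event (II), and it is conceptual: you argue that no ``generic combinatorial volume bound'' can rule out an accepted string near the challenge because the accepting set is defined by the secret key, and you therefore propose to reduce (II) to an LPN-type planted-near-codeword problem. The paper's proof of \Cref{thm:one-bit} is \emph{unconditional}; it never invokes LPN or pseudorandomness at all for robustness. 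The structural fact you are missing is \Cref{lemma:adv-knowledge}: conditioned on the public code $\calC_{1-m}$, the secret matrix $H_{1-m}$ is a uniformly random sample of $r$ rows from $P_{\calC_{1-m},t}$, so the adversary sees a random (not merely ``secret'') accepting set. From there the argument is combinatorial: random $\calC_0,\calC_1$ are $\eps$-balanced and thus any $c^* \in \calC_m$ plus a weight-$\le(\delta+\eta)n$ error stays $\approx (1/2-\delta)n$-far from all of $\calC_{1-m}\setminus\{0\}$ (\Cref{fact:rlc-closest}); \Cref{lemma:omar} rewrites $\bias(P_{\calC_{1-m},t}(c^*\oplus e))$ as a sum of $\bias(W_t(\cdot))^{t}$-type terms over $\calC_{1-m}$; the Johnson bound (\Cref{lemma:johnson}) controls the few terms with nontrivial bias; and a Chernoff bound over the $r$ random rows of $H_{1-m}$ then forces $\wt(H_{1-m}(x\oplus z)) \approx r/2$, so the $H_{1-m}$ decoder rejects (\Cref{theorem:parity-bias-upper-bound}). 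None of this passes through a planted-codeword hardness assumption.

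Two further issues with the proposal as written. First, the ``swap $c$ for uniform $c^\star$'' step does not apply to the game you are actually asked to win: in the public-key adaptive robustness game (\Cref{definition:adaptive-robustness-pk}) the adversary holds $\pk=(G_0,G_1,z)$ and \emph{chooses} the encoding randomness, so $c$ is adversarially selected from the (noisy) codespace, not a challenge that can be hybridized to random via pseudorandomness. The paper handles this directly: \Cref{theorem:parity-bias-upper-bound} gives a bias upper bound simultaneously for every $c'\in\calC'\setminus\{0\}$ and every low-weight $e$, so the adversary's freedom to pick $u$ costs nothing. Second, there is no secret permutation in the LDPC-based PRCs of \Cref{const:zero-bit-ldpc-prc,const:single-bit-ldpc-prc}; you appear to be conflating them with the linear-rate transformation of \Cref{const:linear-rate-prcs}. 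The secret here is $H_{1-m}$ (together with the one-time pad $z$), and the paper's argument is built entirely around the conditional distribution of that matrix.
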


Our secret-key transformation is lightweight in that it makes no additional assumptions beyond the security of the underlying PRC,\footnote{It makes use of a pseudorandom function, whose existence is implied by that of a PRC.} and it preserves the quantitative level of robustness.

\begin{theorem}[Ideal PRC]\label{thm:secret-key}
    Suppose that there exists a secret-key single-bit pseudorandom code that is adaptively $\delta$-robust.
    Then there exists a $\delta$-robust ideal pseudorandom code with linear information rate.
\end{theorem}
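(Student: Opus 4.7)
The plan is to bootstrap the single-bit PRC into an ideal PRC by using it as a short ``header'' that carries a fresh seed $s$ together with a MAC tag $\tau$, while carrying the payload in a PRF-masked error-correcting encoding, and then applying a secret permutation $\pi$ so that the small header region is hidden among the bulk of the codeword. Concretely, the secret key consists of the single-bit PRC's secret key, a PRF key $K$, and a random permutation $\pi$ of $[N]$. To encode $\m$, I sample a fresh seed $s \in \{0,1\}^\lambda$, compute a tag $\tau = \PRF_K(\mathsf{tag} \,\|\, \m)$, form $\mathsf{header}$ by single-bit PRC-encoding each bit of $(s \,\|\, \tau)$, set $\mathsf{ct} = \PRF_K(\mathsf{pad} \,\|\, s) \oplus \ECC(\m)$, and output $c = \pi(\mathsf{header} \,\|\, \mathsf{ct})$. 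To decode $c'$, I invert $\pi$, single-bit PRC-decode each header block to recover $(s,\tau)$, ECC-decode $\mathsf{ct}' \oplus \PRF_K(\mathsf{pad} \,\|\, s)$ to obtain $\m$, and accept iff $\tau = \PRF_K(\mathsf{tag} \,\|\, \m)$. Since $|\mathsf{header}|$ depends only on the security parameter while $|\mathsf{ct}|$ scales linearly in $|\m|$, the information rate is a positive constant.

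Pseudorandomness follows from a hybrid argument: first invoke single-bit PRC pseudorandomness to replace each header block with a uniformly random string, then invoke PRF security to replace $\PRF_K$ with a truly random function. In the final hybrid, $\mathsf{header}$ and $\mathsf{ct}$ are independent uniform strings, so $\pi(\mathsf{header}\,\|\,\mathsf{ct})$ is uniform. Robustness follows from a Chernoff-style concentration argument: because $\pi$ is secret and the adversary's view is pseudorandom, any adversarially chosen $\delta N$-weight error pattern deposits, with overwhelming probability, at most roughly $\delta n_0$ errors in each single-bit PRC block and at most roughly $\delta |\mathsf{ct}|$ errors in the ciphertext. Adaptive $\delta$-robustness of the single-bit PRC then recovers $(s,\tau)$, and an ECC with relative distance just above $2\delta$ recovers $\m$.

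The heart of the proof is ideal security: no efficient adversary with encoding and decoding oracles can find $c'$ that the real decoder accepts but that is not within distance $\delta N$ of a previously encoded codeword. I would split on whether the header region of $c'$ (after undoing $\pi$) lies within the single-bit PRC's tolerance of $\mathsf{header}_j$ for some prior encoding query $j$. In the close case, adaptive robustness of the single-bit PRC forces the decoder to recover exactly $(s_j,\tau_j)$; if the ECC then decodes to $\m_j$, the unique-decoding radius forces $\mathsf{ct}'$ to be close to $\mathsf{ct}_j$ and hence $c'$ to be close to $c_j$, matching the ideal behavior, whereas if the ECC decodes to $\m' \neq \m_j$ the verification $\tau_j \stackrel{?}{=} \PRF_K(\mathsf{tag} \,\|\, \m')$ fails by PRF security. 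In the far case, the recovered $(s',\tau')$ is pseudorandom with $s'$ fresh, so $\PRF_K(\mathsf{pad} \,\|\, s')$ is pseudorandom, the decoded $\m'$ is pseudorandom, and the tag check holds only with probability $2^{-\lambda}$.

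The main obstacle I anticipate is preserving the robustness parameter $\delta$ exactly: without a secret permutation, an adversary could concentrate all $\delta N$ errors on the short header region, overshooting the single-bit PRC's per-block budget of $\delta n_0$ by a factor that grows with $|\m|$. The permutation $\pi$ circumvents this, but one must then argue that $\pi$ stays hidden from an adversary with both encoding and decoding oracles, which I would prove by a hybrid reducing to the pseudorandomness of the single-bit PRC and the PRF. A secondary technical point is choosing the ECC's parameters so that its unique-decoding radius is at most $\delta$; this rules out the subtle sub-case of the close case in which the ECC decodes to $\m_j$ despite $\mathsf{ct}'$ being far from $\mathsf{ct}_j$, which would make $c'$ far from $c_j$ while still accepted by the real decoder.
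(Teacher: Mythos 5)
Your construction has the right ingredients for rate-boosting (single-bit PRC blocks carrying a seed, a PRF-masked $\ECC$ payload, a secret permutation to balance errors, a PRF tag for authentication), and that part mirrors the paper's rate-boosting step. But there is a genuine gap in the ideal-security argument, and it is exactly the issue the paper identifies as the crux: the real decoder must implement a \emph{sharp} threshold, rejecting every string whose distance to all previously returned codewords exceeds $\delta n$ by even one coordinate, because that is what the ideal decoding oracle does. The paper achieves this by derandomizing the entire encoding --- the encoding randomness is $\PRF_{\sk}(r\|\m)$ --- so the decoder can recompute the exact original codeword and check $\wt(\text{recomputed}\oplus c)\le \delta n$ globally. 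Your decoder cannot do this: the header blocks are encoded with fresh, unrecoverable randomness, so no exact global distance check is possible, and your substitute (an $\ECC$ with unique-decoding radius exactly $\delta$) only enforces sharpness on the ciphertext half. Adaptive $\delta$-robustness of the single-bit PRC is a \emph{lower} bound on its decoding radius and gives no upper bound, so the header region can absorb strictly more than its $\delta$-share of errors while still decoding correctly. Your case split (header within the PRC's tolerance of $\mathsf{header}_j$ vs.\ not) does not align with the ideal-world criterion (total distance $\le \delta N$ vs.\ not), and the mismatched middle case is never handled.

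This is not merely a missing argument; the construction as stated is breakable for legitimate instantiations. Since any $2\delta$-robust single-bit PRC is in particular $\delta$-robust, instantiate your scheme with one. The adversary queries $\Encode(\m)$ to get $c$, flips a uniformly random set of exactly $\delta N+1$ coordinates to get $c'$, and queries the decoder. In the ideal game the answer is $\bot$ with probability $1-\negl$, since $\wt(c\oplus c')>\delta N$. In the real game, the flips land so that the ciphertext half receives at most $\delta\abs{\mathsf{ct}}$ errors with probability roughly $1/2$ (the hypergeometric count is essentially symmetric about $\delta\abs{\mathsf{ct}}+1/2$), each header block receives about a $\delta$ fraction of errors and so decodes correctly under the $2\delta$-robust PRC, and the tag check passes; hence the real decoder outputs $\m$ with constant probability. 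This gives a constant distinguishing advantage, so ideal security fails. To repair the proof you would need the paper's mechanism (or an equivalent one): make the whole codeword a deterministic function of $(r,\m)$ via the PRF so that the decoder can re-encode and perform the exact distance test, and then run the hybrid-over-decoding-queries argument as in \Cref{theorem:ideal-security-sk}. A secondary, smaller point: your tag binds only $\m$, whereas binding $(s,\m)$ jointly (as the paper does with $\PRF_{\sk}(r\|\m)$) is what lets the far-from-all-codewords case be dismissed cleanly.
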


The result is somewhat worse for the public-key case, incurring a constant-factor loss in robustness and requiring the random oracle model.

\begin{theorem}[CCA-secure PRC]\label{thm:public-key}
    Suppose that there exists a public-key single-bit pseudorandom code that is adaptively $\delta$-robust.
    Then in the random oracle model, there exists an $\Omega(\delta)$-robust CCA-secure pseudorandom code with linear information rate.
\end{theorem}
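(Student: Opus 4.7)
The plan is to follow a Fujisaki--Okamoto-style encapsulation. Let $\PRC_1$ denote the given adaptively $\delta$-robust public-key single-bit PRC, let $\ECC$ be a linear-rate binary error-correcting code that corrects a small constant fraction of errors, and model $H_{\mathsf{enc}}, H_{\mathsf{mask}}, H_{\mathsf{tag}}$ as independent random oracles. The encoder samples a fresh $\lambda$-bit seed $k$ and emits
\[
y \;=\; c_1(k) \;\|\; \bigl(\ECC(M \,\|\, H_{\mathsf{tag}}(k, M)) \oplus H_{\mathsf{mask}}(k)\bigr),
\]
where $c_1(k)$ consists of bit-by-bit $\PRC_1$ encodings of $k$, \emph{derandomized} via $H_{\mathsf{enc}}(k)$, repeated enough times so that both the $c_1$ half and the $\ECC$ half each make up a constant fraction of $|y|$. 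Decoding recovers $k$ from $c_1$ by running $\PRC_1.\Decode$ on each repetition and taking majority, unmasks the $\ECC$ block with $H_{\mathsf{mask}}(k)$, $\ECC$-decodes it to $(M', \tau')$, and accepts iff $\tau' = H_{\mathsf{tag}}(k, M')$ \emph{and} $c_1$ is within the decoding radius of the freshly re-encoded $k$. Because $|M| = \Theta(|y|)$, this achieves linear information rate.

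For robustness, any $\delta'$-fraction adversarial error over $y$ places at most $2\delta'$-fraction errors in each half. In the $c_1$ half, Markov's inequality gives that at least half of the repeated encodings of $k$ see at most $4\delta'$-fraction error, so by adaptive $\delta$-robustness of $\PRC_1$ each such copy decodes correctly to $k$ provided $4\delta' \le \delta$; majority then recovers $k$. The $\ECC$ half has error rate at most $2\delta'$, which a suitable linear-rate $\ECC$ corrects. Setting $\delta' = \delta/4$ proves $\Omega(\delta)$-robustness.

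For CCA-pseudorandomness, proceed by hybrids. First replace the real decoding oracle with a simulator that does not use $\PRC_1$'s secret key: on each query $y$, iterate over the set $K$ of seeds that the adversary has queried to any of the three random oracles, and for each $k \in K$ (i) re-encode $k$ using the public key and $H_{\mathsf{enc}}(k)$ and check that the $c_1$ half of $y$ lies within the decoding radius, (ii) unmask and $\ECC$-decode the second half to $(M, \tau)$, and (iii) verify $\tau = H_{\mathsf{tag}}(k, M)$; output the first $M$ that passes all checks, else $\bot$. The simulator matches the real decoder except when the real decoder's recovered seed $k'$ was never queried by the adversary, which forces an unqueried $H_{\mathsf{tag}}$ output to hit the correct value---a $q_D \cdot 2^{-|\tau|}$ event. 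Having eliminated $\mathsf{sk}$ from the decoder, invoke the adaptive pseudorandomness of $\PRC_1$ to replace $c_1(k^*)$ in the challenge with a uniform string, and then use the fact that $k^*$ is never queried to the random oracle (except with probability $q_H \cdot 2^{-\lambda}$) to replace $H_{\mathsf{mask}}(k^*)$ and $H_{\mathsf{tag}}(k^*, M^*)$ with uniform, fresh bits. The challenge is then uniform, so the adversary's advantage is negligible.

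The main obstacle I expect is step (i) of the simulator: deciding whether the $c_1$ half corresponds to a candidate seed $k$ without knowing $\mathsf{sk}$. Derandomizing $\PRC_1$ via $H_{\mathsf{enc}}$ is what makes this tractable, but it must be set up carefully: each bit of $k$ and each repetition needs an independent slice of $H_{\mathsf{enc}}$, and one must verify that this derandomization preserves the adaptive robustness guarantee of $\PRC_1$---a standard but delicate point, since in the CCA game the adversary sees $H_{\mathsf{enc}}(k)$-style randomness via its random oracle and simultaneously has a decoding oracle. A secondary subtlety is uniqueness of the matching candidate: two different RO-queried seeds could in principle both pass step (i) on the same $y$, but this is ruled out (up to negligible probability) by the pseudorandomness of $\PRC_1$, which ensures that codewords for distinct seeds are essentially uncorrelated and hence far apart.
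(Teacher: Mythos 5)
Your high-level architecture (an FO-style encapsulation of a seed $k$, a hash-derived mask and tag, a re-encode-and-check decoder, and a decoding simulator that only uses the adversary's random-oracle queries) parallels the paper's construction, but there is a genuine gap in the step that is supposed to deliver $\Omega(\delta)$-robustness with linear rate: the use of per-bit repetition plus majority to carry the seed. Your Markov argument only shows that at most half of \emph{all} blocks in the $c_1$ half suffer error rate above $4\delta'$; it does not show that a majority of the copies of \emph{each individual bit} of $k$ are lightly corrupted. An adaptive adversary can concentrate its budget: with $|k|=\lambda$ bits each repeated $R$ times, destroying all $R$ copies of a single bit (e.g.\ overwriting them with encodings of a fixed bit, or pushing them past the decoding radius) costs on the order of $R\cdot n_1$ positions, which is only a $1/\lambda$ fraction of the allowed $\Theta(\delta)\cdot\lambda R n_1$ error budget. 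Viewed as an outer code on $k$, ``repeat each bit $R$ times'' has relative distance $1/\lambda=o(1)$, so majority decoding gives no worst-case guarantee; once one bit of $k$ is wrong, $H_{\mathsf{mask}}(k)$ and $H_{\mathsf{tag}}(k,\cdot)$ are useless and decoding fails. This also undermines the CCA simulator-consistency step, since ``the received $c_1$ is within the global radius of the re-encoding of $k$'' no longer implies that the real decoder recovers $k$, which is exactly the property the reduction needs.

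The paper avoids this by first encoding the seed with a genuine error-correcting code of constant relative distance and only then encoding each bit of that codeword under the single-bit PRC (with a second ECC, masked by $\PRG(r)$, carrying the message): then an $\alpha$ fraction of badly corrupted blocks is tolerable \emph{wherever} the adversary places them, yielding adaptive robustness $\alpha\beta/2=\Omega(\delta)$ for the composed multi-bit code, and the FO-style step's simulator correctness is then proved by a reduction to this adaptive robustness (not to pseudorandomness). Replacing your repetition code by such an ECC repairs the argument. Two smaller points: your worry about derandomizing $\PRC_1$ via $H_{\mathsf{enc}}$ is actually resolved by the paper's public-key adaptive robustness definition, in which the adversary chooses the encoding randomness, so robustness holds for hash-derived randomness for free; and uniqueness of the matching candidate in the simulator should be argued via the distance/robustness of the code (or, as in the paper, tolerated by picking an arbitrary match and reducing any disagreement with the real decoder to an adaptive-robustness break), not via pseudorandomness of $\PRC_1$ alone, since the candidates in question are adversarially chosen RO queries rather than honestly sampled codewords.
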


By applying Theorems~\ref{thm:secret-key}~and~\ref{thm:public-key} to the PRCs from \cite{CG24}, which are pseudorandom assuming the certain-subexponential hardness of LPN and adaptively $\delta$-robust for $\delta < 1/4$ by Theorem~\ref{thm:one-bit}, we have the following corollary.

\begin{corollary}\label{cor:final}
Assuming the $2^{O(\sqrt{\secpar})}$-hardness of learning parity with noise (LPN), there exist separate linear-rate PRCs satisfying
\begin{itemize}
    \item $\delta$-robust ideal security for any $\delta < 1/4$, and
    \item $\delta$-robust CCA security for some $\delta = \Omega(1)$ in the random oracle model.
\end{itemize}
\end{corollary}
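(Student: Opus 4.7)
The proof is essentially a direct composition of the theorems already stated in this paper, so the plan is to carefully match up hypotheses and conclusions rather than introduce any new machinery.

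First, I would observe that the pseudorandom code of \cite{CG24} (in both its secret-key and public-key forms) is a single-bit PRC whose pseudorandomness reduces to the hardness of LPN; specifically, under the $2^{O(\sqrt{\secpar})}$-hardness of LPN, the LDPC-based PRC of \cite{CG24} (appropriately modified as in our single-bit version) is computationally indistinguishable from uniform. I would cite the relevant construction and pseudorandomness statement from \cite{CG24} directly, since we are not modifying the pseudorandomness argument.

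Next, by \Cref{thm:one-bit}, this single-bit PRC is adaptively $(1/4-\epsilon)$-robust for every $\epsilon > 0$, with an appropriate choice of parameters. This gives us, unconditionally, an adaptively $\delta$-robust single-bit PRC for every $\delta < 1/4$ (in both the secret-key and public-key flavors), assuming only the LPN hardness needed for pseudorandomness.

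To obtain the first bullet of the corollary, I would invoke \Cref{thm:secret-key}: plugging the adaptively $\delta$-robust secret-key single-bit PRC above into that generic transformation yields a $\delta$-robust ideal PRC with linear information rate, for any $\delta < 1/4$. Note that the transformation uses only a pseudorandom function, which is implied by the existence of a PRC, so no additional assumption is needed beyond the LPN assumption already in hand. To obtain the second bullet, I would instead feed the adaptively $\delta$-robust public-key single-bit PRC (with $\delta$ set to any constant below $1/4$) into \Cref{thm:public-key}; this produces an $\Omega(\delta)$-robust CCA-secure PRC with linear information rate in the random oracle model.

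There is no substantive obstacle in this argument: all the work is already done in \Cref{thm:one-bit,thm:secret-key,thm:public-key}, together with the pseudorandomness analysis of \cite{CG24}. The only place where mild care is needed is in checking that the parameter choices required by \Cref{thm:one-bit} to reach a given $\delta < 1/4$ are consistent with the parameter regime in which the \cite{CG24} construction remains pseudorandom under $2^{O(\sqrt{\secpar})}$-LPN, and that the generic transformations of \Cref{thm:secret-key,thm:public-key} preserve the linear information rate when instantiated with our single-bit base code. Both are immediate from the statements as given.
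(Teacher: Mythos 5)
Your proposal is correct and follows exactly the route the paper takes: instantiate the single-bit LDPC-based PRC (pseudorandom under $2^{O(\sqrt{\secpar})}$-hardness of LPN via \cite{CG24}, adaptively $\delta$-robust for $\delta < 1/4$ by \Cref{thm:one-bit}), then apply \Cref{thm:secret-key} for the ideal-security bullet and \Cref{thm:public-key} for the CCA-security bullet. Nothing is missing; the parameter-compatibility check you flag is indeed the only point of care, and the paper treats it the same way.
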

While the above robustness is optimal for the ideal PRC result, we do not achieve the optimal robustness for the CCA-secure PRC result.

\subsection{Relationship to watermarking} \label{subsection:relationship-wat}
Christ and Gunn \cite{CG24} presented a general template for using pseudorandom codes to watermark sufficiently high-entropy outputs of randomized algorithms.
We recall that template here for reference.

Consider a randomized algorithm that, given an input $x$ and a random seed $r$, generates an output $y$.
Suppose that there exists a \emph{randomness recovery} algorithm that, given $y$, outputs an approximation of the random seed $r$ used to generate it.
Randomness recovery algorithms indeed exist for generative models used in practice, and their approximation accuracy increases with the amount of entropy in the output distribution.
The basic observation is that replacing $r$ with an output of $\PRC.\Encode(1)$ immediately yields a watermark.
Pseudorandomness ensures that this replacement does not perceptibly alter the quality of the model.
Error correction enables detection, with the content being considered watermarked if $\PRC.\Decode({\sf RandomnessRecovery}(y)) = 1$.
This watermarking approach works for \emph{any} randomized algorithm with a randomness recovery algorithm whose approximation error is within the PRC's tolerance.

\begin{remark*}
    It is not a coincidence that randomness recovery algorithms exist in almost every generative AI framework.
    In some cases, randomness recovery can be viewed as an essential part of training.
    It is also useful to have randomness recovery in many independent generative AI applications, because it can allow for content manipulation in a more natural ``latent space'' where features correspond to higher-order features of the content.
\end{remark*}

It is known that this framework yields an undetectable and robust watermark for large language models \cite{CG24,GM24}.
This framework has also been demonstrated experimentally by \cite{GZS24}, which builds a robust undetectable watermarking scheme for generative image models (in particular, latent diffusion models) using essentially the same PRC as we consider here.

Not only are pseudorandom codes useful for constructing robust undetectable watermarks, but they are also necessary in a formal sense.
Consider a model that can be prompted to output a uniformly random binary string (e.g., one can ask a language model to output a random sequence of ``apple'' and ``orange''). 
Suppose we can undetectably watermark this model, with robustness to a given error channel.
This implies that on each query, the model outputs a fresh pseudorandom string such that the detector outputs ${\sf True}$ even if this string is subjected to this error channel.
Soundness (i.e., the low false positive rate) of the watermark ensures that unrelated strings decode to ${\sf False}$.
Thus, the model with this fixed prompt is \emph{exactly} an encoder for a zero-bit PRC, where the PRC decoder is the watermark detector.


\paragraph{PRC robustness and watermark robustness.}
Typical randomness recovery algorithms for generative AI are crucially imperfect, even on unedited generations.
If one wishes to have a \emph{robust} watermark, then randomness recovery will be still more inaccurate.
Therefore, it is essential for applications to watermarking that we use a PRC with strong robustness.
The watermark will be detectable as long as the composition of an adversary's modifications to the output, and any intrinsic error from randomness recovery, falls within the class of error channels tolerated by the PRC.

For example, a PRC with robustness to a constant rate of errors chosen by an adversary \emph{that observes only the given output} translates to a watermark with robustness to an adversary making a single query to the generation algorithm.
In practice, however, an adversary attempting to remove the watermark can make arbitrarily many generation queries, translating to a PRC error channel that observes many codewords.
Therefore we need our PRC to be adaptively robust if we wish to handle such watermark removal attacks.

Similarly, a watermark that is robust to an adversary with access to a watermark detection oracle necessitates a PRC that is robust to an adversary with a decoding oracle.
This robustness is satisfied by the ideal PRCs that we define here.
Similarly, an adversary given a watermark detection oracle \emph{and the watermark generation key} is handled by our CCA-secure PRCs.

\paragraph{On robustness to substitutions.}
In this work, we only consider PRCs with robustness to substitutions.
This is sufficient for some watermarking applications, like those where the PRC is embedded in a \emph{semantic} representation of the content, and general edits in the content itself translate to substitutions in the semantic space.
This is the case, for instance, in the image watermark of \cite{GZS24} because the models they consider map between images and a \emph{latent space} consisting of fixed-length vectors.
They embed a PRC codeword in this latent representation of the generated image, and their decoder maps the given image back to its latent representation before decoding the PRC.
Since these latent vectors are of fixed size, modifications to the image translate to changes to components, but not insertions or deletions.

For other applications --- especially watermarking language models --- it can be useful to have a PRC with robustness to more general forms of edits.
Random deletions were considered in \cite{CG24}, and oblivious deletions and insertions were considered in \cite{GM24} (over a polynomial-sized alphabet).
However, we note that both of these methods work \emph{by reduction to the binary substitution channel}.
Therefore improved results about substitution channels may translate to improved results for editing channels, although we do not investigate this here.

\paragraph{On the alphabet size.}
One can think of the watermark generation process as sampling each ``component'' of the output $y$ to be correlated with the corresponding bit of the random seed $r$.
The amount of entropy in a given component $y_i$ determines how much signal from the corresponding seed bit $r_i$ it contains.
In order to plant one symbol of a PRC codeword in each component with significant signal, we require $\Omega(\abs{\text{PRC alphabet}})$ entropy per component.

For language models, these ``components'' might be the tokens of the response.
Typical language models have on the order of one bit of entropy per word, necessitating a PRC with a constant-sized (ideally, binary) alphabet.\footnote{While it is possible to increase the entropy per component by taking each component to be a sequence of $k$ words, this harms robustness as changing one in every $k$ words now changes every symbol of the underlying codeword.}
Other watermarking applications will also suffer from the use of a larger alphabet.
Therefore, in this work we only consider \emph{binary-alphabet} PRCs.

\paragraph{Related work.}
We have referred to most of the relevant existing works, particularly those about pseudorandom codes, throughout this introduction.
For a more detailed discussion on related work on watermarking, see \Cref{section:related-work}.

\section{Technical overview}

\subsection{Organization} \label{subsection:techo-organization}
Adaptive robustness, ideal security, and CCA security define robustness for a PRC under various adversarial models, as shown in \Cref{table:definitions}.
See \Cref{fig:flow} for a visual representation of the structure of the paper.

We briefly recall the definition of a pseudorandom code in \Cref{subsection:techo-prcs} (and \Cref{subsection:prelims-prcs}).

In \Cref{subsection:techo-adaptive-robustness} (and \Cref{section:adaptive-robustness}), we show that a particular PRC construction based on LDPC codes is adaptively robust in the public-key setting.
This immediately implies that the same construction is adaptively robust in the secret-key setting.
This is the only section that is particular to any specific PRC construction; all of our results in later sections are are generic in the underlying PRC.

In \Cref{subsection:techo-ideal-prcs} (and \Cref{section:ideal-security}), we show that in the secret-key setting, any single-bit adaptively $\delta$-robust PRC can be converted to a $\delta$-robust ideal PRC with a linear information rate.

\Cref{subsection:techo-cca} (and \Cref{section:cca-security}) mirrors the preceding section, but in the public-key setting.
As is the case with standard encryption, in the public-key setting there is no longer a clear notion of ``ideal security,'' but we nonetheless present a definition we call ``CCA security'' in analogy with CCA-secure encryption.
We show that in the random oracle model, any public-key adaptively $\delta$-robust PRC can be converted to an $\Omega(\delta)$-robust CCA-secure PRC with a linear information rate.

\subsection{Pseudorandom codes} \label{subsection:techo-prcs}
We recall the definition of a public-key PRC.

\begin{definition*}
    Let $\Sigma$ be an alphabet and $\calE : \Sigma^* \to \Sigma^*$ be an error channel. A \emph{public-key PRC} with oblivious robustness to $\calE$ is described by efficient randomized algorithms $\Encode_{\pk} : \Sigma^k \to \Sigma^n$ and $\Decode_{\sk} : \Sigma^* \to \Sigma^k \cup \{\bot\}$, parameterized by keys $\sk, \pk$, satisfying the following criteria for every security parameter $\secpar$:
    \begin{itemize}
        \item (Oblivious robustness) For any message $\m \in \Sigma^k$,
        \[
            \Pr_{\sk,\pk}[\Decode_{\sk}(\calE(x)) = \m : x \from \Encode_{\pk}(\m)] \geq 1-\negl.
        \]
        \item (Soundness) For any fixed $c \in \Sigma^*$,
        \[
            \Pr_{\sk}[\Decode_{\sk}(c) = \bot] \geq 1 - \negl.
        \]
        \item (Pseudorandomness) For any polynomial-time adversary $\adv$,
        \[
            \abs{\Pr_{\pk}[\adv^{\Encode_{\pk}}(\secparam, \pk) = 1] - \Pr_{\pk,\calU}[\adv^{\calU}(\secparam, \pk) = 1]} \leq \negl,
        \]
        where $\adv^{\calU}$ means that the adversary has access to an oracle that, on any (even previously queried) input, responds with a freshly drawn uniform value in $\Sigma^n$.
    \end{itemize}
\end{definition*}

If the scheme can only encode a singular message (i.e. $k = 0$), then we call it a \emph{zero-bit} PRC.

Observe that the robustness condition in this definition does not allow the error channel $\calE$ to use information about the PRC keys, or even previously seen codewords, to choose the error.
Prior results on PRCs only achieve this kind of robustness, which we refer to as \emph{oblivious robustness}.

\subsection{Adaptively robust public-key PRCs based on LDPC codes} \label{subsection:techo-adaptive-robustness}
In this subsection we will outline our proof that the zero-bit LDPC-based PRCs of \cite{CG24} are adaptively robust in the public-key setting.
We then show how to build a single-bit adaptively robust PRC, using a different construction from theirs.
These results immediately imply the corresponding results in the secret-key setting.

\paragraph{The zero-bit LDPC-based PRC.}
Let us begin by recalling the zero-bit LDPC-based PRC construction.
It will be useful to define the set of all $t$-sparse vectors in $\F_2^n$,
\[
    \calS_{t,n} = \{w \in \F_2^n : \wt(w) = t\}.
\]
The secret key consists of a collection of $r = n^{\Omega(1)}$ random parity checks $w_1, \dots, w_r \from \calS_{t,n}$, for some $t = \Theta(\log n)$.
Arrange these parity checks into a matrix $\sk = H \in \F_2^{r \times n}$, which will serve as the secret detection key.
The public encoding key consists of a random matrix $\pk = G \in \F_2^{n \times d}$ such that $H G = 0$, where $d = \Theta(\log^2 n)$.
Let $\calC$ be the image of $G$.

Since this is a zero-bit PRC, we only need to describe a procedure for encoding 1.
To encode 1, the encoder outputs $c \oplus e^*$, where $c \from \calC$ and $e^* \from \calS_{\eta n, n}$ for some small constant $\eta > 0$.\footnote{In \cite{CG24}, the error was chosen to be i.i.d Bernoulli. For technical reasons, in this work we use random errors of fixed weight instead. In this overview, we also use a star to distinguish the encoding noise $e^*$ from the adversarial perturbation $e$.}
To decode a string $x$, the decoder computes $\wt(H x)$, where $\wt$ is the Hamming weight.
If $\wt(H x)$ is significantly less than $r/2$, then the decoder outputs 1; otherwise it outputs $\bot$.

For an appropriate choice of parameters, this scheme is robust to a constant rate of errors \emph{if the errors are chosen independently of $H$}.
In that case, each parity check has an $\Omega(1)^t$ bias towards 0, which is significant if $t$ is small enough.
And since there are $r = n^{\Omega(1)}$ independent parity checks in $H$, a Chernoff bound implies that detection fails with negligible probability.
This was essentially the argument used by \cite{CG24} to show that their zero-bit LDPC-based PRCs are robust to a constant rate of errors that are oblivious to the PRC keys $H, G$.

Of course, this argument breaks down when the errors are allowed to depend on $H$: Indeed, there is a simple attack that uses $H$ to select $o(n)$ errors that fool the detector.\footnote{The algorithm uses Gaussian elimination to find an assignment of the first $k = o(n)$ coordinates such that the first $k$ parity checks in $H$ are unsatisfied, if the remaining $n-k$ coordinates are all 0.}
Instead, we will show that this scheme is robust against an adversary that is given $G$, but not $H$.
Following \cite{CHS24}, we refer to this kind of robustness as \emph{adaptive robustness}.

\begin{definition*}[\Cref{definition:adaptive-robustness-pk}, public-key adaptive $\delta$-robustness, informal]
    A public-key pseudorandom code $(\Encode, \Decode)$ is \emph{adaptively $\delta$-robust} if, for all efficient adversaries $\adv$,
    \[
        \Pr_{\sk, \pk}[\Decode(\sk, c \oplus e) \ne m \text{ and } \wt(e) \le \delta n \mid (m, r, e) \from \adv(\pk), c = \Encode(\pk, m; r)] \le \negl[n].
    \]
\end{definition*}

In this definition, we allow the adversary to select a message $m$ (which must be $m=1$ for a zero-bit PRC) and randomness $r$ that together define a codeword $c = \Encode(\pk, m; r)$.
The adversary wins if they find a perturbation $e$, which has weight at most $\delta n$, such that $c \oplus e$ does not decode to $m$.

The secret-key definition is similar, except that the adversary does not get to choose the randomness $r$ and instead interacts with $\Encode$ only via oracle access.

\begin{remark*}
    In this work we only consider robustness to substitution channels.
    See the paragraph ``PRC error model'' in \Cref{subsection:relationship-wat} for a discussion on this point.
\end{remark*}

\paragraph{Adaptive robustness of the zero-bit LDPC-based PRC.}
The main technical challenge that we overcome in this work is in proving that the zero-bit LDPC-based PRC is adaptively robust, i.e., that it is robust even when the adversary knows $G$.
More formally, we need to show that any low-weight error vector $e$ that the adversary comes up with must satisfy significantly more than $1/2$ of the parity checks in $H$.
Since we know that there \emph{exist} bad $o(n)$-weight errors that depend on $H$, our proof strategy must crucially use the fact that $e$ is computed without knowledge of $H$.

Our first key observation is that, from the perspective of the adversary, $H$ is a uniformly random collection of $r$ vectors chosen from all $t$-sparse parity checks consistent with $G$.
That is, the rows of $H$ can equivalently be sampled at random from\footnote{We define $\calC^\perp = \{w \in \F_2^n : w \cdot c = 0 \ \forall c \in \calC\}$.}
\[
    P_{\calC,t} = \calC^\perp \cap \calS_{t,n},
\]
where $\calC$ is the image of $G$, \emph{at the time of decoding}.
This is formalized in \Cref{lemma:adv-knowledge}.
Observe that if we choose the dimension of $\calC$ as $d = t \log n / 2$,
\[
    \abs{P_{\calC,t}} \approx \binom{n}{t} \cdot 2^{-d} \approx 2^{t \log n / 2},
\]
which is super-polynomial in $n$ since $t = \Theta(\log n)$.
Because there are so many parity checks in $P_{\calC,t}$, we expect that it should be difficult to find low-weight errors that fool $P_{\calC,t}$.

Indeed, our proof will proceed by showing that for \emph{any} low-weight error $e$ (depending arbitrarily on $G$), a random parity check from $P_{\calC,t}$ is satisfied by $e$ with probability significantly greater than $1/2$.
The adaptive robustness of our scheme will then follow by a Chernoff bound over the choice of $H$.

We now turn to our proof that every low-weight error $e$ satisfies significantly more than half of the parity checks in $P_{\calC,t}$.
First, for $S \subseteq \F_2^\ell$ and $z \in \F_2^\ell$, we define
\begin{align*}
    \bias(z) &= \E_{i \from [\ell]}[(-1)^{1\{z_i=1\}}] \text{\quad and} \\
    \bias(S, z) &= \E_{w \from S}[(-1)^{1\{w \cdot z=1\}}].
\end{align*}
The key ingredient is the following lemma, which serves as the technical backbone of \Cref{section:adaptive-robustness}.

\begin{lemma*}[\Cref{lemma:omar}, informal]
    For any code $\calC \subseteq \F_2^n$, there is a $\gamma_{\calC} \in \R$ such that the following holds. For every $x \in \F_2^n$,
    \[
        \bias(P_{\calC,t}, x) = \gamma_{\calC} \sum_{c \in \calC} \bias(\calS_{t,n}, (x \oplus c)).
    \]
    Furthermore,\footnote{Technically, this part is \Cref{fact:parity-concentration}.} if $\calC$ is a random linear code, then $\gamma_{\calC} = 1 \pm \negl[n]$ with probability $1-\negl[n]$.
\end{lemma*}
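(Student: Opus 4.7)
The plan is to prove the identity by a short character-sum manipulation, and then reduce the quantitative claim on $\gamma_\calC$ to the cited \Cref{fact:parity-concentration}.

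First I would expand
\[
\bias(P_{\calC,t}, x) = \frac{1}{|P_{\calC,t}|} \sum_{w \in \calS_{t,n}} 1_{w \in \calC^\perp} \cdot (-1)^{w \cdot x},
\]
and then apply the standard character identity $1_{w \in \calC^\perp} = \frac{1}{|\calC|} \sum_{c \in \calC} (-1)^{w \cdot c}$, which holds because for any fixed $w \in \F_2^n$ the $\F_2$-linear map $c \mapsto w \cdot c$ on $\calC$ is either identically zero (precisely when $w \in \calC^\perp$) or balanced. Substituting and swapping the order of summation gives
\[
\sum_{w \in P_{\calC,t}} (-1)^{w \cdot x} = \frac{1}{|\calC|} \sum_{c \in \calC} \sum_{w \in \calS_{t,n}} (-1)^{w \cdot (c \oplus x)} = \frac{|\calS_{t,n}|}{|\calC|} \sum_{c \in \calC} \bias(\calS_{t,n}, c \oplus x),
\]
so that dividing by $|P_{\calC,t}|$ produces the stated identity with
\[
\gamma_\calC = \frac{|\calS_{t,n}|}{|\calC| \cdot |P_{\calC,t}|} = \frac{\binom{n}{t}}{2^d \cdot |P_{\calC,t}|},
\]
where $d = \dim(\calC)$. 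Crucially $\gamma_\calC$ depends on $\calC$ only through $|P_{\calC,t}|$ and is independent of $x$, as the lemma claims.

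For the concentration statement, for a random linear code of dimension $d$ every nonzero $w \in \F_2^n$ lies in $\calC^\perp$ with probability exactly $2^{-d}$, hence $\E[|P_{\calC,t}|] = \binom{n}{t} \cdot 2^{-d}$ and the identity above gives $\E[\gamma_\calC] \approx 1$. Thus the claim $\gamma_\calC = 1 \pm \negl[n]$ with probability $1 - \negl[n]$ is equivalent to a sharp multiplicative concentration of $|P_{\calC,t}|$ around its mean, which is precisely what \Cref{fact:parity-concentration} supplies.

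The main obstacle lies entirely in this concentration step; the Fourier identity itself is a routine calculation. In the regime $t = \Theta(\log n)$ and $d = \Theta(t \log n)$, the mean $\binom{n}{t}/2^d$ is super-polynomial, so a second-moment argument should suffice: distinct nonzero $w_1, w_2 \in \F_2^n$ are automatically $\F_2$-linearly independent, the events $w_i \in \calC^\perp$ are pairwise independent for a truly random linear code, so $\Var(|P_{\calC,t}|) \le \E[|P_{\calC,t}|]$ and Chebyshev then yields the desired multiplicative concentration. Packaging this calculation with the correct quantitative parameters is exactly what \Cref{fact:parity-concentration} handles, after which the lemma follows from the character-sum identity above.
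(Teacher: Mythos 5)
Your proof is correct and follows essentially the same route as the paper: both arguments expand $1\{w \in \calC^\perp\}$ via the character identity $2^{-\dim\calC}\sum_{c \in \calC}(-1)^{w\cdot c}$, swap the order of summation to obtain a sum of biases over shifts of $x$ by codewords, and read off $\gamma_\calC = \binom{n}{t}2^{-\dim\calC}/N_{\calC,t}$, reducing the concentration claim to the pairwise-independence/Chebyshev argument of \Cref{fact:parity-concentration}. The differences are purely notational (set-theoretic sums versus the paper's matrix notation with $W_t$ and $P_{\calC,t}$).
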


This lemma allows us to reason about the number of parity checks satisfied by $x$ under $P_{\calC,t}$, by reasoning instead about \emph{random parity checks} from $\calS_{t,n}$.
See \Cref{subsection:toolkit} for the proof of this lemma, which is short, elementary, and motivated by ideas from the analysis of Boolean functions.

If $x$ has low weight, then it turns out that the terms corresponding to non-zero $c$ typically have little contribution in the lemma.
This can be viewed as a consequence of the Johnson bound, which says that for a high-distance code there are not many codewords within any given Hamming ball of small radius.
Therefore, the lemma implies that\footnote{This approximation is not strictly true in general, but we prove it for the relevant regimes of $e$.}
\[
    \bias(P_{\calC,t}, e) \approx \bias(\calS_{t,n}, e).
\]
What we have described so far means that $P_{\calC,t} = \calC^\perp \cap \calS_{t,n}$ is enough to essentially capture the structure of all of $\calS_{t,n}$, for a random code $\calC$.
It is easy to see that $\bias(\calS_{t,n}, e) \approx \bias(e)^t$, so we can approximate this quantity as a function only of $\wt(e)$ (because $\bias(e)=1-2\wt(e)/n$).

The final step is to recall \Cref{lemma:adv-knowledge}, which says that we can sample the rows of $H$ as a random subset of $r$ parities from $P_{\calC,t}$ \emph{after} the adversary has decided on $e$.
By a Chernoff bound over the choice of $H$, it follows that $\bias(H e) \approx \bias(P_{\calC,t}, e)$. Together with the approximations $\bias(P_{\calC,t}, e) \approx \bias(\calS_{t,n}, e)$ and $\bias(\calS_{t,n}, e) \approx \bias(e)^t$, we finally have that
\begin{equation} \label{eq:techo-bias-approx}
    \bias(H e) \approx \bias(e)^t.
\end{equation}
With appropriate choice of parameters, this completes the proof of zero-bit adaptive robustness.

\paragraph{Adaptive robustness of single-bit LDPC-based PRCs.}
We now describe a single-bit PRC that is adaptively $\delta$-robust for any $\delta < 1/4$.
The reader might wonder why a zero-bit PRC cannot be immediately used as a single-bit PRC by re-interpreting $\bot$ as 0.
But with this re-interpretation, zero-bit robustness only requires that it is hard for an adversary to turn an encoding of 1 into an encoding of 0, while single-bit robustness \emph{also} requires that the adversary cannot turn an encoding of 0 into an encoding of 1.

Instead, our strategy is to generate two independent pairs of zero-bit adaptively $\delta$-robust PRC keys $(H_0, G_0)$ and $(H_1, G_1)$, and to define the encoding of $m \in \{0,1\}$ as a noisy sample from the image of $G_m$.\footnote{Actually we only allow the encoder to output \emph{non-zero} vectors. This is because the adversary in the public-key adaptive robustness game chooses the encoding randomness, and the zero vector would trivially violate robustness.}
The decoder outputs $m \in \{0,1\}$ if the given string decodes to 1 under the $H_m$ zero-bit decoder \emph{and} it decodes to $\bot$ under the $H_{1-m}$ zero-bit decoder.
If both zero-bit decoders output 1 or $\bot$, then our single-bit decoder cannot determine which bit is encoded, so it outputs $\bot$. 

Adaptive $\delta$-robustness of the zero-bit PRC implies that the adversary cannot find a $\delta n$-weight error that causes both zero-bit decoders to output $\bot$.
However, it says nothing about the possibility of causing both zero-bit decoders to output 1.
In other words, we still need to rule out the possibility that the adversary can sample a codeword under $G_0$ or $G_1$ and produce a nearby string that decodes to 1 under \emph{both} $H_0$ and $H_1$.

Let $\calC_m$ be the image of $G_m$ for $m \in \{0,1\}$.
The idea is to use the fact that $\calC_0$ and $\calC_1$ are well-separated with high probability: That is, the closest non-zero pair of codewords $c_0 \in \calC_0, c_1 \in \calC_1$ are approximately $n/2$-far with probability $1-\negl[n]$ over $G_0, G_1$.
Therefore, if the adversary produces $c \oplus e$ where $c \in \calC_m$ and $\wt(e) \le \delta n$, then $c \oplus e$ will be at least roughly $(1/2 - \delta) n$-far from any codeword in $\calC_{1-m}$.
For $\delta < 1/4$, this yields our separation: $c \oplus e$ is $\delta n < n/4$ far from $\calC_m$, and $(1/2-\delta) n > n/4$ far from $\calC_{1-m}$.

So the question is whether our parity check matrices $H_0, H_1$ will be able to observe this difference.
But fortunately, this was already addressed in \Cref{eq:techo-bias-approx}!
The only thing that remains is to account for parameters, making sure that the zero-bit decoder thresholds are set appropriately so as not to detect beyond the $\delta n$ radius desired.

\begin{remark*}
    While we construct zero-bit PRCs that are adaptively $\delta$-robust for any $\delta < 1/2$, it is not possible to construct single-bit PRCs for $\delta \ge 1/4$, as demonstrated by the following attack.
    The attacker draws random $x_0$ and $x_1$, encoding 0 and 1 respectively.
    By pseudorandomness, $x_0$ and $x_1$ must be equal on roughly half of their locations.
    Therefore, the adversary can craft $x'$ that differs from each of $x_0$ and $x_1$ on at most a $1/4$ fraction of locations.
    Since $x'$ cannot decode to both 0 and 1, robustness is violated for either $x_0$ or $x_1$.
\end{remark*}

\subsection{Ideal PRCs: the secret-key setting}
\label{subsection:techo-ideal-prcs}

So far, we've considered robustness against an adversary with access to only the encoder.
In this section, we give the adversary access to both encoding \emph{and} decoding oracles.\footnote{Since the adversary is only given to encoding and decoding oracles, this is the \emph{secret-key} setting. In the next section we will consider adversaries with access to an encoding key and a decoding oracle.}
An ideal PRC should retain both robustness and pseudorandomness in this setting.

Defining pseudorandomness requires some care in this case.
Of course, an adversary can distinguish a codeword from random simply by submitting the codeword as a decoding query.
Nonetheless, we find that robustness and pseudorandomness can be elegantly combined into a single definition in the secret-key setting, which we call \emph{ideal security}.
This definition uses the real/ideal world paradigm, common in cryptography.
Ideal security requires that no adversary can distinguish between the ideal world, where the challenger responds to encoding and decoding queries by comparing to previous responses; and the real world, where the challenger responds according to the PRC algorithms.

\paragraph{The ideal world.}
A PRC simply produces codewords that are pseudorandom, which can be decrypted even when subjected to errors.
Strings that are far from all observed codewords should decode to $\bot$.
Given this, an ideal $\delta$-robust PRC should satisfy the following requirements:
\begin{itemize}
    \item Responses to encoding queries should appear random.
    \item Any string that is within $\delta$ of an observed codeword should decode to its underlying message.
    \item Any string that is at least $\delta$-far from all observed codewords should decode to $\bot$.
\end{itemize}
These requirements \emph{fully specify} the behavior of the PRC in the ideal world.
Therefore, we define the ideal world as follows.
The challenger responds to each encoding query with a fresh uniformly random string, storing the queried message and response in memory.
The challenger responds to each decoding query by checking if the given string is close to any response in memory. If so, it returns the corresponding message; otherwise, it returns $\bot$.

\paragraph{Ideal security.}
A PRC satisfies ideal security if no adversary can distinguish between the real world and the ideal world.
Observe that ideal security constitutes a complete definition of a PRC: Whereas an adaptively robust PRC needs to satisfy separate definitions of soundness, pseudorandomness, and adaptive robustness, ideal security encompasses all of these properties at once.

\paragraph{Proving ideal security.}
We show that the existence of an adaptively $\delta$-robust single-bit PRC implies the existence of an ideal $\delta$-robust PRC with linear information rate.

We first show that the single-to-multi-bit PRC transformation from \cite{CG24} preserves adaptive robustness.
This transformation is simple.
Let $\PRC$ be any single-bit adaptively robust secret-key PRC, let $\PRG$ be a pseudorandom generator, and let $\ECC$ be an error-correcting code.
An encoding of $m$ takes the form
\begin{equation} \label{eq:single-to-multi-sk}
    \pi(\PRC.\Encode_{\sk}(r_1) || \ldots || \PRC.\Encode_{\sk}(r_\secpar), \PRG(r) \oplus \ECC(m)),
\end{equation}
where $r \gets \{0,1\}^\secpar$ is a fresh sample for each encoding and $\pi$ is a random permutation of the bits which is included as part of the key.
The decoder is straightforward.
The random permutation ensures that any errors introduced by the adversary are balanced across the blocks encoding the $r_i$'s and the error-corrected message.\footnote{This point is actually quite subtle. Because it is not possible to test whether an error is successful without the key, we cannot rely directly on pseudorandomness here. Instead, we use the fact that it \emph{is} possible to test whether the errors are balanced using only $\pi$, but not the underlying PRC key.}
Note also that if $\ECC$ has linear rate, then so does this resulting PRC (since $\secpar$ does not need to grow with the message length).

We now turn to showing that one can use any adaptively robust PRC with polynomial information rate, $\PRC$, to construct one satisfying ideal security.
The basic strategy is to build a scheme that satisfies ideal security in the case that the adversary makes only one decoding query.
Once this is established, complete ideal security will follow directly from a hybrid argument.

The main challenge is that, perhaps unintuitively, the PRC may sometimes be \emph{too} robust.
Ideal $\delta$-robustness requires a \emph{sharp} decoding threshold: any string that is $\delta$-far from all observed codewords must decode to $\bot$.
If our decoder accepts strings outside of this threshold, then it could accidentally leak information about the key!
Therefore, we need our decoder to be able to determine \emph{precisely} if a given string is within distance $\delta n$ of a previously seen codeword.

Suppose an adversary takes a codeword $c$ and adds some (possibly greater than $\delta n$-weight) error to obtain $c'$.
Our key idea is that if the decoder can somehow recover from $c'$ both the message and randomness used to generate $c$, then it can recover $c$ and exactly compute its distance from $c'$.
We define $\PRCsharp$ to allow exactly this, by drawing a random $r \from \{0,1\}^\secpar$ and letting the encoding of a message $\m$ under $\PRCsharp$ be\footnote{Note that this is essentially the Fujisaki-Okamoto transformation \cite{FO99}, but in a secret-key setting where a PRF suffices instead of a random oracle.}
\[
    c \gets \PRC.\Encode_{\sk}(r||\m; \PRF_{\sk}(r)),
\]
where $\PRF$ is a pseudorandom function.
In words, we encode a seed $r$ as part of the message, and use $\PRF_{\sk}(r)$ as the randomness in the encoding.
Pseudorandomness of $\PRF$ implies that codewords of $\PRCsharp$ are indistinguishable from those of $\PRC$.
Now suppose we are given $c'$ that is $\delta$-far from $c$ yet still decodes validly under $\PRC.\Decode_{\sk}$.
Our $\PRCsharp$ decoder will then recover $\m$ and $r$, recompute $c$, and observe that $c'$ is too far from $c$.
It then knows to output $\bot$.

We've now addressed the issue where $c'$ is too far from a previously seen $c$, but still decodes validly to the message encoded by $c$.
But there is still the possibility that the adversary produces a $c'$ that decodes to some message that it has never queried.
We address this issue by simply adding an authentication tag to the message.
That is, our final $\PRCsharp$ codewords take the form
\[
    c \gets \PRC.\Encode_{\sk}(r||\m||R_1; R_2),
\]
where $(R_1, R_2) = \PRF_{\sk}(r || \m)$.
Now, if a received string decodes to $r||\m||R_1$ under $\PRC.\Decode_{\sk}$, the $\PRCsharp$ decoder checks that $R_1 = \PRF_{\sk}(r || \m)$.
If so, it outputs $\m$; otherwise it outputs $\bot$.
Pseudorandomness of $\PRF$ implies that the adversary cannot produce an accepting $r, \m, R_1$ that were not included in a previous response.

Therefore, if the adversary produces $x$ that decodes to anything other than $\bot$, it must be the case that $x$ is $\delta$-far from some previously-seen codeword produced by the challenger.
This yields single-decoding-query ideal security, from which ideal security follows by a simple hybrid argument.

\begin{remark*}
    Multi-bit $\delta$-ideal security is only possible for $\delta < 1/4$, by the same example in the remark at the end of \Cref{subsection:techo-adaptive-robustness}.
    However, we proved that adaptive robustness is possible up to $1/2$ for \emph{zero-bit} PRCs.
    It would be interesting to construct zero-bit PRCs that satisfy $\delta$-ideal security for $\delta \ge 1/4$, but it appears that new ideas would be needed.
\end{remark*}

\subsection{CCA security: the public-key setting} \label{subsection:techo-cca}
In this subsection we will outline our definition and construction of CCA-secure PRCs.

\paragraph{Single-bit to multi-bit PRC transformation.} 
We first present a simple transformation from an adaptively robust single-bit PRC to an adaptively robust PRC with linear rate, in the public-key setting.
In the public-key setting, the transformation described earlier in \Cref{eq:single-to-multi-sk} is no longer robust because it relies on the secrecy of the permutation $\pi$.

Our solution is to first encode $r$ in an error-correcting code, and encode each bit of the resulting codeword separately under the single-bit PRC.
We also make the block encoding $r$ and the block encoding the message equal lengths.
Now, an adversary introducing a $\delta$ rate of errors can introduce at most a $2\delta$ rate of errors to either the message block or the $r$ block.
If the errors on the $r$ block are concentrated on any given single-bit PRC, then the error-correcting code handles them; if they are spread out among the single-bit PRCs, then the PRC robustness handles them.
This transformation is quite lossy in terms of robustness, which is at most $1/32$ regardless of the underlying PRC and error-correcting codes.
The information rate is also at most half that of the error-correcting code.
It would be interesting to come up with a better transformation, but we do not pursue this here.
In any case, we still achieve $\Omega(1)$-robustness and a linear information rate.

\paragraph{CCA security: definition.}
Our CCA (chosen codeword attack) security definition subsumes both pseudorandomness and adaptive robustness.
It also generalizes CCA security for pseudorandom encryption.
Roughly speaking, the security definition states that pseudorandomness of public-key PRC should be secure against computationally bounded adversaries even when given the public key and access to the decoding oracle.
This definition will be modeled similar to the private-key setting, except that we need to additionally take into consideration that the adversary can create its own codewords.
We formalize the security definition by again considering a real and a ``random'' experiment.
Whereas we refer to the latter as the ``ideal'' world in the secret-key setting, in the public-key setting we use a different word because it is not clearly ``ideal.''
In particular, in the public-key setting our definition of the random world \emph{still makes use of the real PRC decoder}.

\paragraph{The random world.}
In the random game, the adversary receives the following: 
\begin{itemize} 
\item The public encoding key $\pk$, 
\item Access to the random encoding oracle: the random encoding oracle on input a message $m$, outputs a random string $c$ as a codeword. 
\item Access to the corresponding decoding oracle: upon receiving a codeword $c'$, this oracle first checks if $c'$ is close to any of the codewords returned by the random encoding oracle. If so, it returns $m$, where $c$ was returned as a response to the encoding query $m$. If $c'$ is not close to any of the outputs of the encoding oracle then it responds according to the actual decoder on input $c'$.
\end{itemize} 

\paragraph{The real world.}
In the real game, the adversary receives the public encoding key $\pk$ and has access to the real encoding and decoding oracles. 

\noindent CCA security simply says that any computationally bounded adversary cannot distinguish whether it is participating in the random or the real game. 

\paragraph{Achieving CCA security generically.} We show how to achieve CCA security generically starting with any adaptively robust PRC with polynomial information rate in the random oracle model.
Our transformation is essentially identical to the ideal PRC transformation, except that we use a random oracle instead of a PRF.
This makes it very similar to the Fujisaki-Okamaoto (FO) transformation~\cite{FO99}, although the analysis is slightly different.
Specifically, we crucially rely upon the adaptive robustness property in the security proof unlike the FO transformation.

Suppose $\PRC$ is any adaptively robust pseudorandom code supporting multi-bit messages.
We construct a CCA-secure pseudorandom code $\PRC^{{\sf CCA}}$ as follows: 
\begin{itemize}

    \item The key generation algorithm of $\PRC^{{\sf CCA}}$ simply runs the key generation of the $\PRC$ to generate $(\pk,\sk)$. It also samples $F$ from a hash function family; $F$ will be modeled as a random oracle in the security proof. The new public key of $\PRC^{{\sf CCA}}$ is set to be $(\pk,F)$ and the new secret key is set to be $\sk$. 
    
    \item The encode algorithm of $\PRC^{{\sf CCA}}$, on input a message $m$, does the following. It first samples a $\lambda$-bit string $r$ uniformly at random. It then computes $F(m,r)$ to obtain a string that can be broken down into two parts $(R_1,R_2)$, each of length at least $\lambda$. It then encodes the new message $(m,r,R_2)$ using the encode algorithm of $\PRC$ and using the randomness $R_1$. Denote the resulting codeword to be $c$. 
    
    \item The decode algorithm, on input the secret key $\sk$ and the codeword $c$, does the following: it first runs the $\PRC$ decoding algorithm on $c$ to obtain $(m,r,R'_2)$. It then computes $F(m,r)$ to obtain $(R_1,R_2)$. It outputs $m$ if $R_2=R'_2$. Otherwise, it outputs $\bot$. 
    
\end{itemize}
\noindent To prove the CCA security of the above scheme, we undertake the following three steps. 
\par In the {\bf first step}, we observe the queries made by the adversary. Specifically, we consider the following event. 

\begin{quote}
\noindent \underline{${\sf NeverQueried}$}: The adversary, during the decoding query phase, submits a codeword $c$ that decodes to $(m,r,R_2)$ such that $(m,r)$ has never been queried by the adversary to $F$. 
\end{quote} 

\noindent We argue that ${\sf NeverQueried}$ only happens with negligible probability. Indeed, if the adversary has never queried $F$ on $(m,r)$ then the probability that it predicts $R_2$ is negligible in $\lambda$. 
\par This suggests that there are only two types of decoding queries $c$ that the adversary can submit. Either
\begin{itemize} 

\item $c$ is close to one of the codewords returned by the challenger to the adversary during the encoding phase, or

\item $c$ decodes to $(m,r,R_2)$, where $(m,r)$ is a random oracle query made by the adversary. 

\end{itemize}
\noindent This suggests that the challenger does not need the decoding key to answer the decoding queries at all! It can answer just by (a) observing the adversarial queries to the $F$ and, (b) by keeping tracking of the codewords it returns during the encoding phase. 
\par This leads us to the {\bf second step}. In the second step, the challenger does not use the decoding key to answer the decoding queries. Instead it uses an alternate decoding procedure, referred to as ${\sf AltDecode}$. Upon receiving a codeword $c$ during the decoding query phase, ${\sf AltDecode}$ first performs the following checks: 
\begin{itemize} 
    \item It checks if $c$ is close to any of the codewords returned during the encoding phase, 
    \item It checks if $c$ is to close to any of the codewords created by the adversary using the queries to $F$. In more detail, for every adversarial query $(m,r)$, create a codeword with $(m,r,R_2)$ as the message and $R_1$ as the randomness, where $(R_1,R_2)$ is the output of $F$ on $(m,r)$. Check if $c$ is close to any of the created codewords. 
\end{itemize}
If any of these two checks pass,\footnote{It could be that there is more than one codeword that is close to $c$. In this case, pick one of them at random.} return the message encoded in the codeword close to $c$. Else, return $\bot$. 
\par To argue that ${\sf AltDecode}$ simulates the use of the real decoding algorithm during the decoding phase, we need to argue that for every decoding query, the output of ${\sf AltDecode}$ and the real decoder is the same. If not, then we claim that the adaptive robustness property is violated. This is because the adversary efficiently came up with two different, but close, codewords that open to two different messages. Thus, from the adaptive robustness property, we can conclude that the outputs of ${\sf AltDecode}$ and the real decoder are the same for every decoding query. 
\par In the {\bf third step}, we invoke the pseudorandomness guarantee of $\PRC$ to switch all the codewords generated during the encoding phase to be uniformly random strings. In order to do this switch, it was crucial that the challenger was using ${\sf AltDecode}$, which in turn does not use any secret key, in the decoding phase. 
\par All the three steps combined prove the CCA security of $\PRC^{{\sf CCA}}$.

\tikzstyle{startstop} = [rectangle, rounded corners, minimum width=3cm, minimum height=1.5cm,text centered, draw=black, fill=red!10]
\tikzstyle{private} = [rectangle, rounded corners, minimum width=3cm, minimum height=1.5cm,text centered, draw=black, fill=green!10]
\tikzstyle{public} = [rectangle, rounded corners, minimum width=3cm, minimum height=1.5cm,text centered, draw=black, fill=blue!10]

\tikzstyle{process} = [rectangle, minimum width=3cm, minimum height=1cm, text centered, draw=black, fill=orange!30]
\tikzstyle{decision} = [diamond, minimum width=3cm, minimum height=1cm, text centered, draw=black, fill=green!30]
\tikzstyle{arrow} = [thick,->,>=stealth]

\begin{figure}[!htb]
\centering
{\small 
\begin{tikzpicture}[node distance=3cm]
\node (init) {Public-key zero-bit non-adaptive PRC from \cite{CG24}};
\node (start) [startstop, below of=init, yshift=1cm] {Public-key zero-bit adaptive PRC};
\node (in1) [startstop, below of=start] {Public-key single-bit adaptive PRC};
\node (pro1) [private, below of=in1, xshift=-4cm] {Secret-key optimal-rate adaptive PRC};
\node (dec1) [private, below of=pro1] {Ideal PRC with optimal rate};
\node (pro2) [public, below of=in1, xshift=4cm] {Public-key linear-rate adaptive PRC};
\node (dec2) [public, below of=pro2] {CCA-secure PRC with linear rate};
\draw [arrow] (init) -- node[right=2mm] {\Cref{subsection:zero-bit}} (start);
\draw [arrow] (start) -- node[right=2mm] {\Cref{subsection:single-bit}} (in1);
\draw [arrow] (in1) -- node[left=2mm] {\Cref{subsection:rate-boost-sk}} (pro1);
\draw [arrow] (pro1) -- node[left=2mm] {\Cref{subsection:ideal-security-proof}} (dec1);
\draw [arrow] (in1) -- node[right=2mm] {\Cref{subsection:rate-boost-pk}} (pro2);
\draw [arrow] (pro2) -- node[right=2mm] {\Cref{subsection:cca-scheme}} (dec2);
\end{tikzpicture}
}
\caption{Organization of the paper. The red boxes are particular to the LDPC-based public-key PRCs related to \cite{CG24}, the green boxes are generic in the secret-key setting, and the purple boxes are generic in the public-key setting.} \label{fig:flow}
\end{figure}
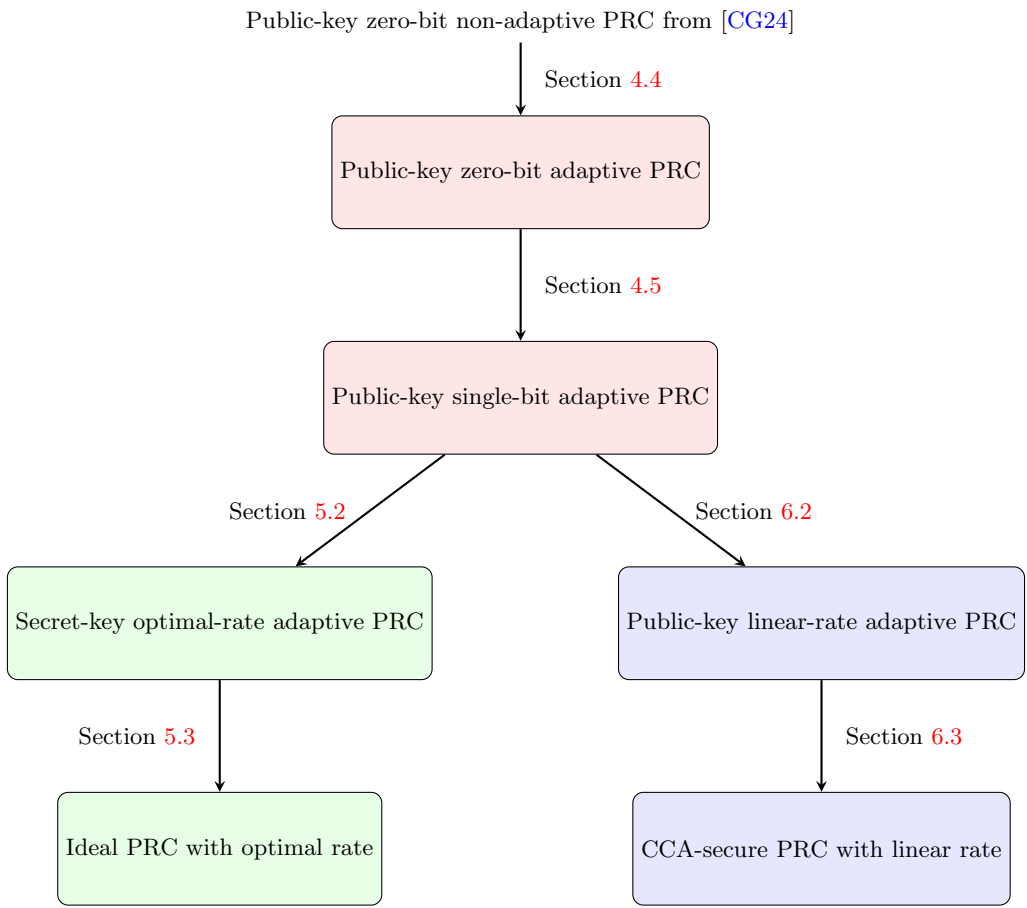

\section{Preliminaries} \label{section:prelims}

For a randomized algorithm $A(\cdot)$, we write $A(x;r)$ to denote the output of $A$ on input $x$ and randomness $r$.

For a set $X$, we define $X^* = \{(x_1, \dots, x_k) \mid x_1, \dots, x_k \in X \wedge k \in \Z_{\ge 0}\}$ to be the set of all strings over the alphabet $X$.
For a binary string $s \in X^*$, we let $s_i$ denote the $i^{\text{th}}$ symbol of $s$ and $\len s$ denote the length of $s$.

We write $[n] = \{1, \dots, n\}$.
Let $x \in \{0,1\}^n$ and let $\pi$ be a permutation over $[n]$.
We let $\PermBits(x, \pi)$ denote the function that outputs $x_{\pi(1)}|| \cdots || x_{\pi(n)} \in \{0,1\}^n$.

\begin{lemma}[Hypergeometric tail bounds \cite{hoeffding1994probability}] \label{theorem:hyper}
Let $X \sim \Hyp(N,K,n)$ and $p = K/N$. Then for any $0 < t < K/N$,
\begin{align*}
    \Pr \left[X \leq (p-t)n \right] &\leq e^{-2t^2 n}, \text{ and}\\
    \Pr \left[X \geq (p+t)n \right] &\leq e^{-2t^2 n}.
\end{align*}
\end{lemma}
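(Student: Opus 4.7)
The plan is to follow Hoeffding's classical approach from \cite{hoeffding1994probability}. The statement is the standard fact that the tails of a hypergeometric random variable obey the same exponential bound as those of the corresponding binomial, and the natural route is to reduce the sampling-without-replacement case to the i.i.d.\ Bernoulli case and then apply the usual Chernoff computation.

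First, I would represent $X$ as a sum of indicators. Performing $n$ draws without replacement from an urn with $K$ successes and $N-K$ failures, let $Z_i \in \{0,1\}$ indicate the outcome of the $i$-th draw, so $X = \sum_{i=1}^n Z_i$. The $Z_i$ are exchangeable and each has marginal $\Ber(p)$, but they are not independent.

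Second, I would prove the central MGF comparison: for every $s \in \R$,
\[
\E\!\left[e^{s X}\right] \;\le\; \E\!\left[e^{s Y}\right],
\]
where $Y = \sum_{i=1}^n Y_i$ and the $Y_i$ are i.i.d.\ $\Ber(p)$. This is the classical observation of Hoeffding that sampling without replacement is dominated in the convex order by sampling with replacement; it is established by conditioning and averaging a convex function over the uniform distribution on permutations of the drawn multiset, and then invoking convexity of $x \mapsto e^{sx}$. The same comparison then transfers to any convex $\phi$, but the MGF version is what we need.

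Third, with the MGF domination in hand, the upper tail follows from the standard Chernoff-Hoeffding calculation: for $s > 0$,
\[
\Pr[X \ge (p+t)n] \;\le\; e^{-s(p+t)n}\,\E[e^{s X}] \;\le\; e^{-s(p+t)n}\,\E[e^{s Y}] \;\le\; e^{-2t^2 n},
\]
where the final step is Hoeffding's inequality for sums of independent $[0,1]$-valued variables, applied after optimizing $s > 0$. The lower tail follows symmetrically, either by repeating the argument with $s < 0$ or by noting that $n - X \sim \Hyp(N, N-K, n)$ and using the upper tail on that variable. The main obstacle is the convex-ordering step, since the rest is a routine Chernoff calculation; however, this is a well-known classical result, and because the lemma is stated in the preliminaries with an explicit reference to \cite{hoeffding1994probability}, I would expect the proof to be omitted entirely and the bound simply invoked as a black-box fact.
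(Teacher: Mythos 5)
Your proposal is correct: the paper gives no proof of this lemma and simply invokes it as a known fact from the cited Hoeffding reference, exactly as you anticipated. The argument you sketch (exchangeable-indicator representation, Hoeffding's convex-order/MGF domination of sampling without replacement by i.i.d.\ sampling, then the standard Chernoff--Hoeffding optimization, with the lower tail via symmetry) is precisely the classical proof in that reference, so there is nothing to add.
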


\begin{lemma}[Chernoff bounds]
    Let $X_1, \ldots, X_n \in [0,1]$ be independent random variables. Let $\mu = \E\left[\sum_{i=1}^n X_i \right]$. Then for any $\delta \in (0,1)$:
    \begin{align*}
        \Pr \left[\sum_{i=1}^n X_i \geq (1+\delta)\mu \right] &\leq \exp \left(-\frac{\mu \delta^2}{3} \right) \text{ and}\\
        \Pr \left[\sum_{i=1}^n X_i \leq (1-\delta)\mu \right] &\leq \exp \left(-\frac{\mu \delta^2}{2} \right).
    \end{align*}
\end{lemma}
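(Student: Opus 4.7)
The plan is to use the standard Chernoff method, i.e., the exponential moment generating function combined with Markov's inequality. Writing $S = \sum_{i=1}^n X_i$, for any $t > 0$ I would apply Markov's inequality to the non-negative random variable $e^{tS}$ to get
\[
    \Pr[S \ge (1+\delta)\mu] = \Pr[e^{tS} \ge e^{t(1+\delta)\mu}] \le \frac{\E[e^{tS}]}{e^{t(1+\delta)\mu}}.
\]
By independence, $\E[e^{tS}] = \prod_i \E[e^{tX_i}]$, so the whole argument reduces to bounding each factor. I would exploit the constraint $X_i \in [0,1]$ via convexity of $x \mapsto e^{tx}$: on $[0,1]$ this function lies below the chord, giving $e^{tX_i} \le 1 + X_i(e^t - 1)$. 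Taking expectations and using $1 + y \le e^y$ yields $\E[e^{tX_i}] \le \exp(\E[X_i](e^t - 1))$, and multiplying over $i$ gives $\E[e^{tS}] \le \exp(\mu(e^t - 1))$.

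Next I would optimize the bound over $t > 0$. Substituting and minimizing, the natural choice is $t = \ln(1+\delta)$, which yields
\[
    \Pr[S \ge (1+\delta)\mu] \le \exp\!\bigl(\mu\bigl(\delta - (1+\delta)\ln(1+\delta)\bigr)\bigr).
\]
To match the claimed form, I then need the analytic inequality $(1+\delta)\ln(1+\delta) - \delta \ge \delta^2/3$ for $\delta \in (0,1)$. This is a one-variable calculus exercise: define $f(\delta) = (1+\delta)\ln(1+\delta) - \delta - \delta^2/3$, check $f(0) = 0$, and verify that $f'(\delta) = \ln(1+\delta) - 2\delta/3 \ge 0$ on $[0,1]$ by showing $f'(0) = 0$ and $f''(\delta) = 1/(1+\delta) - 2/3 \ge 0$ exactly when $\delta \le 1/2$, then handling $\delta \in [1/2, 1]$ by direct comparison at the endpoint.

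The lower tail is symmetric in spirit: I would apply the same Chernoff method with $t < 0$ (equivalently, bound $\Pr[-S \ge -(1-\delta)\mu]$) and obtain $\Pr[S \le (1-\delta)\mu] \le \exp(\mu(-\delta - (1-\delta)\ln(1-\delta)))$ after optimizing at $t = \ln(1-\delta)$. The remaining step is the inequality $(1-\delta)\ln(1-\delta) + \delta \ge \delta^2/2$ for $\delta \in (0,1)$, again verified by reducing to checking the sign of a derivative at zero and monotonicity. The main obstacle, if one can call it that, is just nailing down these two deterministic inequalities with constants $1/3$ and $1/2$ cleanly; everything else is routine. Since this is a standard textbook result, I would alternatively just cite a reference (e.g., Mitzenmacher–Upfal) rather than include the calculations in the paper body.
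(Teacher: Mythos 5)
Your proof is correct and is the standard exponential-moment argument; the paper itself states this Chernoff bound as a known fact without including a proof, so there is nothing to compare it against. Your concluding remark that one would ordinarily cite a reference rather than reproduce the moment-generating-function computation is exactly in line with the paper's treatment.

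One tiny nit: in the lower-tail step you write ``optimizing at $t = \ln(1-\delta)$,'' which is negative; this is consistent only if you are using $e^{tS}$ with $t<0$ rather than $e^{-tS}$ with $t>0$ as in the upper-tail case, so it is worth stating that sign convention explicitly. The two analytic inequalities, $(1+\delta)\ln(1+\delta)-\delta \ge \delta^2/3$ and $\delta+(1-\delta)\ln(1-\delta)\ge \delta^2/2$ on $(0,1)$, are both verified cleanly by your second-derivative argument (for the first, note $f'$ increases on $[0,1/2]$, decreases on $[1/2,1]$, and $f'(1)=\ln 2 - 2/3 > 0$, so $f'\ge 0$ throughout; for the second, $g''(\delta)=\delta/(1-\delta)\ge 0$ makes $g'$ monotone increasing from $g'(0)=0$).
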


\paragraph{Coding theory notation.}
Let $n$ be the dimension, let $t$ be an even integer, and let $d = \Theta(\log \binom{n}{t})$.
Let $W_t \in \F_2^{\binom{n}{t} \times n}$ be the matrix whose rows are all weight-$t$ parity checks.
For a linear code $\calC \subseteq \F_2^n$, let $P_{\calC, t}$ be the matrix whose rows are all weight-$t$ parity checks satisfied by $\calC$.
Let $N_{\calC, t}$ be the number of rows in $P_{\calC, t}$.
For $z \in \{0,1\}^m$, we define $\bias(z) = \frac{1}{m} \sum_{i=1}^m (-1)^{z_i}$.
Note that $\bias(z) = 1 - 2\wt(z)/n$ and $\wt(z) = (1/2 - \bias(z)/2) \cdot n$.

\paragraph{Cryptography preliminaries.}
A pseudorandom function is a function that behaves indistinguishably from a random function, from the perspective of any computationally-bounded adversary that makes only black-box queries to the function.
Pseudorandom functions are equivalent to one-way functions \cite{GGM86}, the minimal object of classical cryptography.

\paragraph{Pseudorandom function (PRF).} Let $\mathcal{F} = \{F_{\sk} : \{0,1\}^{\ell_1(\secpar)} \to \{0, 1\}^{\ell_2(\secpar)} \ | \ \sk \in \{0,1\}^\secpar\}$ be a family of functions. 
$\mathcal{F}$ is a PRF if $F_\sk$ is efficiently computable and for all polynomial-time distinguishers $D$,
\[\left| \Pr_{\sk \gets \{0,1\}^\secpar}\left[D^{F_\sk(\cdot)}(1^\secpar) = 1\right] - \Pr_{f} \left[D^{f(\cdot)}(1^\secpar) = 1 \right]\right| \leq \negl.\]
where $f$ denotes a random function from $\{0,1\}^{\ell_1(\secpar)}$ to $\{0,1\}^{\ell_2(\secpar)}$.

\subsection{Pseudorandom codes} \label{subsection:prelims-prcs}
We recall the definition of a public-key pseudorandom code (PRC) with oblivious robustness from \cite{CG24}.
We present only the public-key definition; the secret-key version is identical except that the public key is included in the secret key, and the adversary in the pseudorandomness condition is only given $\secparam$ as input.
Secret-key PRCs are defined fully in \cite{CG24}.

\begin{definition}[Public-key PRC] \label{def:pkPRC}
    Let $\Sigma$ be a fixed alphabet. A \emph{public-key pseudorandom error-correcting code} (abbreviated as public-key PRC) with (oblivious) robustness to a channel $\calE : \Sigma^* \to \Sigma^*$ is a triple of polynomial-time randomized algorithms $(\KeyGen, \Encode, \Decode)$ satisfying
    \begin{itemize}
        \item (Syntax) There exist functions $\ell_\dec, \ell_\enc, n, k : \N \to \N$ such that for all $\secpar \in \N$, $\KeyGen(\secparam) \in \{0,1\}^{\ell_\dec(\secpar)} \times \{0,1\}^{\ell_\enc(\secpar)}$, $\Encode(\secparam, \pk, \m) \in \Sigma^{n(\secpar)}$ takes inputs $\pk \in \{0,1\}^{\ell_\enc(\secpar)}$, $\m \in \Sigma^{k(\secpar)}$, and $\Decode(\secparam, \sk, x) \in \Sigma^{k(\secpar)} \cup \{\bot\}$ takes inputs $\sk \in \{0,1\}^{\ell_\dec(\secpar)}$, $x \in \Sigma^*$.
        \item (Oblivious robustness) For any $\secpar \in \N$ and any message $\m \in \Sigma^{k(\secpar)}$,
        \begin{align*}
            \Pr_{(\sk,\pk) \from \KeyGen(\secparam)}[\Decode(\secparam,\sk,\calE(x)) = \m : x \from \Encode(\secparam,\pk,\m)] \geq 1-\negl.
        \end{align*}
        \item (Soundness) For any fixed $c \in \Sigma^*$,
        \[
            \Pr_{(\sk,\pk) \from \KeyGen(\secparam)}[\Decode(\secparam,\sk,c) = \bot] \geq 1 - \negl.
        \]
        \item (Pseudorandomness) For any polynomial-time adversary $\adv$,
        \[
            \abs{\Pr_{(\sk,\pk) \from \KeyGen(\secparam)}[\adv^{\Encode(\secparam,\pk, \cdot)}(\secparam,\pk) = 1] - \Pr_{\substack{(\sk,\pk) \from \KeyGen(\secparam) \\ \calU}}[\adv^{\calU}(\secparam,\pk) = 1]} \leq \negl,
        \]
        where $\adv^{\calU}$ means that the adversary has access to an oracle that, on any (even previously queried) input, responds with a freshly drawn uniform value in $\Sigma^{n(\secpar)}$.
    \end{itemize}
\end{definition}

We say that a PRC is ``zero-bit'' if it can encode only a singular message ($\Sigma = \{1\}$ and $k=0$), or ``single-bit'' if it can encode two messages ($\Sigma = \{0,1\}$ and $k=1$).

For completeness, we include the original oblivious robustness definition from \cite{CG24}. 
However, in this work we focus on stronger notions of robustness, which in particular imply oblivious robustness to substitution channels.

\section{Adaptively robust public-key PRCs based on LDPC codes}
\label{section:adaptive-robustness}

This section is dedicated to proving that variants of (public-key) PRCs from \cite{CG24} satisfy the adaptive robustness notion (without decoder access) we introduce in this work. Correspondingly, in Section~\ref{sec:def-adaptive} we give our new notion of adaptive robustness for both the public-key and the secret-key PRC variants (where the former is clearly stronger than the latter). In Section~\ref{sec:cg24} we review the {\em zero-bit} (public-key) PRC from \cite{CG24}, in Section~\ref{subsection:toolkit} we prove some technical tools that we will use, in Section~\ref{subsection:zero-bit} we prove the (public-key, and hence, also secret-key) adaptive security of this construction. And, finally, in Section~\ref{subsection:single-bit} we show how to extend the resulting zero-bit  construction to a single-bit adaptively-secure (still public-key) PRC.

\subsection{Definitions}\label{sec:def-adaptive}
We define adaptive robustness in the secret-key setting via the following security game.

\noindent \underline{$\Grobustsk_{\adv, \PRC, \delta}(1^\secpar)$:}
\begin{enumerate}
    \item The challenger sets $\transcript = \emptyset$ and samples $\sk \gets \PRC.\KeyGen(1^\secpar)$.
    \item The adversary is allowed to make encoding queries. For each encoding query $m$, the challenger responds with $c \gets \PRC.\Encode(\sk, m)$ and sets $\transcript = \transcript \cup \{(m, c)\}$.
    \item The adversary sends the challenger $x$.
    \item The challenger computes $m' = \PRC.\Decode(\sk; x)$ (which could be $\bot$).
    \item If there exists $(m, c) \in \transcript$ such that $\wt(x \oplus c) \le \delta n$ and $m \ne m'$, then the adversary wins; otherwise the adversary loses.
\end{enumerate}

\begin{definition}[Secret-key adaptive robustness] \label{definition:adaptive-robustness-sk}
    We say that a secret-key pseudorandom code $\PRC$ is \emph{adaptively $\delta$-robust} if, for any efficient adversary $\adv$,
    \[
        \Pr[\adv \text{ wins } \Grobustsk_{\adv, \PRC, \delta}(1^\secpar)] \le \negl.
    \]
\end{definition}

The public-key setting is similar, except that the adversary gets the encoding key itself rather than merely oracle access to the encoder.
Since the adversary can therefore encode messages on their own, we require the adversary to submit a message and randomness that witness the failure of the decoder. 

\noindent \underline{$\Grobustpk_{\adv, \PRC, \delta}(1^\secpar)$:}
\begin{enumerate}
    \item The challenger samples $(\pk, \sk) \gets \PRC.\KeyGen(1^\secpar)$ and sends $\pk$ to the adversary.
    \item The adversary sends $(m, r, x)$ to the challenger.
    \item The challenger computes $c = \PRC.\Encode(\pk, m; r)$ and $m' = \PRC.\Decode(\sk, x)$. If $\wt(x \oplus c) \le \delta n$ and $m' \ne m$, then the adversary wins; otherwise the adversary loses.
\end{enumerate}

\begin{definition}[Public-key adaptive robustness] \label{definition:adaptive-robustness-pk}
    We say that a public-key pseudorandom code $\PRC$ is \emph{adaptively $\delta$-robust} if, for any efficient adversary $\adv$,
    \[
        \Pr[\adv \text{ wins } \Grobustpk_{\adv, \PRC, \delta}(1^\secpar)] \le \negl.
    \]
\end{definition}

Note that \Cref{definition:adaptive-robustness-pk} is a stronger definition than \Cref{definition:adaptive-robustness-sk} --- any scheme satisfying \Cref{definition:adaptive-robustness-pk} automatically satisfies \Cref{definition:adaptive-robustness-sk}.
We therefore devote this section to proving public-key adaptive robustness of the PRC from \cite{CG24}, which immediately implies the secret-key adaptive robustness of the same PRC.

At this point, a few remarks are in order about our choice of definitions.

\begin{remark*}
    The reader may wonder why we don't just have the adversary produce two nearby strings $x, x'$ that decode to different values.
    This would be an impossibly strong definition: The adversary could choose $x$ with slightly fewer errors than the scheme tolerates, and $x'$ with slightly more.
    A successful adversary must come up with a \emph{valid} codeword, together with a low-weight modification of it that decodes incorrectly.
\end{remark*}

\begin{remark*}
It is possible to define public-key robustness along the same lines as~\Cref{definition:adaptive-robustness-sk} (secret-key adaptive robustness).
That is, we could consider the following game: the adversary has adaptive access to the encoding oracle and later on, is expected to come up with a string $c$ that is close to one of the codewords, say $c'$, returned by the encoding oracle.
It wins if $c'$ is an encoding of $m$ and $c$ does not decode to $m$.
There are a couple of reasons behind our choice of the adaptive robustness definition in the public-key setting (\Cref{definition:adaptive-robustness-pk}). First,~\Cref{definition:adaptive-robustness-pk} is stronger than the public-key analogue of~\Cref{definition:adaptive-robustness-sk}, and we will see that our scheme satisfies this stronger notion anyways.
Second, the definition is more compact and easier to work with. In particular, we crucially invoke this stronger~\Cref{definition:adaptive-robustness-pk} in our later proofs about CCA security (see the proof of~\Cref{lem:hybrid2:hybrid3}).
\end{remark*}

\begin{remark*}
One can also similarly define a stronger version of our symmetric-key \Cref{definition:adaptive-robustness-sk}, where the attacker can also control the randomness for the encoding oracle (and not only the message).
Of course, our scheme will satisfy this stronger definition too (as it satisfies the public-key analog of this strengthening given in ~\Cref{definition:adaptive-robustness-pk}).
However, we did not choose to follow this route for several reasons.
First, unlike the public-key case, this definition is not significantly more compact or intuitive than our definition.
Second, we do not have any real-world motivation for this notion, and unlike the public-key case, we do not require the stronger notion for any proofs later on in the paper.
Third, unlike the public-key setting, there is a simple generic transformation from our \Cref{definition:adaptive-robustness-sk} to the stronger variant: instead of sampling randomness $r$ for $\PRC.\Encode$ directly, we sample auxiliary randomness $s$, and set $r = \PRF_{\sk}(m,s)$, where $\PRF_{\sk}$ is a pseudorandom function whose key is part of the overall secret key.
\end{remark*}

\subsection{The scheme}\label{sec:cg24}
Here we recall the LDPC-based zero-bit PRC construction from \cite{CG24}.
For technical reasons, we modify the definition slightly to use fixed-weight error instead of Bernoulli.

Let
\[
    \calS_{t,n} = \{s \in \F_2^n : \wt(s) = t\}
\]
be the set of all $t$-sparse vectors in $\F_2^n$, and
\[
    \calS_{t,r,n} = \{H \in \F_2^{r \times n} : \wt(H_{i,:}) = t\ \forall i \in [r]\}
\]
be the set of all $t$-row-sparse matrices in $\F_2^{r \times n}$.

Our zero-bit pseudorandom LDPC codes are parameterized by a public generator matrix $G \in \F_2^{n \times d}$ and a secret parity-check matrix $H \in \F_2^{r \times n}$. The sampling process for these matrices is described in \Cref{def:random-ldpc}.

\begin{definition}[Random LDPC code, {$\LDPC[n,d,t,r]$}] \label{def:random-ldpc}
    \sloppy
    For $n, d, t, r \in \N$, define the distribution $\LDPC[n,d,t,r]$ over $\F_2^{r \times n} \times \F_2^{n \times d}$ as follows:
    \begin{enumerate}
        \item[] $\LDPC[n,d,t,r]$:
        \item Sample $H \from \calS_{t,r,n}$, i.e. $H \in \F_2^{r \times n}$ is chosen to have i.i.d random $t$-sparse rows.
        \item Sample $G \from (\ker H)^d$, i.e. $G \in \F_2^{n \times d}$ is a random matrix subject to $H G = 0$.
        \item Output $(H,G)$.
    \end{enumerate}
    An $(n,d,t,r)$ \emph{random LDPC code} is a pair of matrices $(H, G) \from \LDPC[n,d,t,r]$.
\end{definition}

We now define our LDPC-based zero-bit PRC. Recall that a zero-bit PRC is one whose message space is just $\{1\}$.
The following construction differs slightly from that in \cite{CG24}, in that the error distribution is uniform over $S_{\eta n, n}$ instead of $\Ber(n, \eta)$.

\begin{construction}[Zero-bit public-key pseudorandom LDPC code, {$\LDPCPRC_0[n,d,t,r,\eta,\zeta]$}] \label{const:zero-bit-ldpc-prc}
    Let $n, d, t, r : \mathbb{N} \to \mathbb{N}$ and $\eta, \zeta : \N \to [0,1/2)$ be efficiently-computable functions of the security parameter. We define $\LDPCPRC_0[n,d,t,r,\eta,\zeta]$ by the following algorithms, where we leave the dependence of $n,d,t,r,\eta,\zeta$ on $\secpar$ implicit:
    \begin{itemize}
        \item $\KeyGen(\secparam)$: Sample $(H,G) \from \LDPC[n,d,t,r]$ and $z \gets \F_2^n$. Output $(\sk = (H,z), \pk = (G,z))$.
        \item $\Encode(\secparam,(G,z))$: Sample $u \from \F_2^d$, $e \from \calS_{\eta n, n}$. Output $Gu \oplus z \oplus e$.
        \item $\Decode(\secparam,(H,z),x)$: If $\wt(H (x \oplus z)) < \left(\frac{1}{2} - \zeta\right) \cdot r$, output 1; otherwise output $\bot$.
    \end{itemize}
\end{construction}

The original construction of \cite{CG24} is pseudorandom under the subexponential LPN assumption.
Since ours is the same except that we used fixed-weight error, ours is pseudorandom under the polynomially related \emph{exact LPN} assumption:

\begin{assumption}[Exact LPN assumption \cite{JKPT12}] \label{assumption:xLPN}
    For $\eta \in (0, 1/2)$ and $g : \N \to \N$, the $\xLPN_{g,\eta}$ assumption states that for every $n \in \N$ and every polynomial-time adversary $\adv$,
    \[
        \abs{\Pr_{\substack{A \from \F_2^{n \times g(n)}\\ s \from \F_2^{g(n)}\\ e \from \calS_{\eta n, n}}}[\adv(A, As \oplus e) = 1] - \Pr_{\substack{A \from \F_2^{n \times g(n)}\\ u \from \F_2^n}}[\adv(A, u) = 1]} \leq \negl[n].
    \]
\end{assumption}

Similarly to standard LPN, the exact LPN assumption implies that any polynomial number of samples of the form $(A,As \oplus e)$ are indistinguishable from uniformly random samples.

Importantly, the exact LPN assumption is \emph{equivalent} to the standard LPN assumption.

\begin{fact}[Proposition 2.3 from \cite{JKPT12}]
    The hardness of $\xLPN_{g,\eta}$ is polynomially related to the hardness of $\LPN_{g,\eta}$ as defined in \cite{CG24}.
\end{fact}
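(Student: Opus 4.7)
The plan is to establish polynomial-time reductions in both directions between the decisional problems $\LPN_{g,\eta}$ and $\xLPN_{g,\eta}$, following the approach of \cite{JKPT12}. The key quantitative input is that the Bernoulli distribution $\Ber(\eta)^n$, viewed via its weight, is a mixture whose mass on weight exactly $\eta n$ is $\Theta(1/\sqrt{n})$ (by Stirling's formula), and conditioned on this weight it is exactly the uniform distribution on $\calS_{\eta n, n}$. Thus a Bernoulli noise vector is, with noticeable probability, already an xLPN noise vector, and this statistical overlap is the bridge connecting the two problems.

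For the direction $\LPN \Rightarrow \xLPN$, the natural route is through the search versions. I would first invoke the standard Goldreich--Levin / Katz--Shin search-to-decision reduction for both $\LPN$ and $\xLPN$, which applies to any fixed noise distribution and costs only a polynomial factor. I would then reduce search-$\xLPN$ to search-$\LPN$: since typical LPN-search algorithms (of BKW or Arora--Ge type) depend only on the marginal noise rate and not on the precise joint distribution of the errors, a search-$\LPN$ algorithm invoked on an xLPN instance recovers $s$ with essentially the same guarantees, and the output can be verified by checking it against fresh samples. Composing these steps gives the desired reduction with polynomial loss.

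For the reverse direction $\xLPN \Rightarrow \LPN$, the simulation runs the other way. Given an $\xLPN_{g,\eta}$ oracle, I would build an $\LPN_{g,\eta}$ oracle by sampling a weight $k \sim \mathrm{Bin}(n, \eta)$ and producing a weight-$k$ error: the case $k = \eta n$ is handled directly, and the $O(\sqrt n)$ weight classes near $\eta n$ that carry noticeable Binomial mass are handled by a hybrid argument across neighboring weights (itself implemented via the xLPN oracle with slightly shifted parameters). Equivalently, one can again route through the search versions of the two problems, which keeps the accounting simple.

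The main obstacle is the decision-level direction $\LPN \Rightarrow \xLPN$: an xLPN distinguisher may behave adversarially on LPN samples whose weight is not exactly $\eta n$, so one cannot na\"ively feed LPN samples to it, and the $1/\sqrt n$ conditioning factor does not directly translate into a $1/\sqrt n$ distinguishing advantage. Routing through the search problems sidesteps this, but one must verify that each search-to-decision step is compatible with the relevant noise distribution and that the overall loss in advantage remains polynomial in $n$.
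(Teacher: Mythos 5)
This statement is not proved in the paper at all: it is imported verbatim as Proposition~2.3 of \cite{JKPT12}, so the only ``paper proof'' to compare against is the argument in that reference, which---like your outline---rests on the observation that $\Ber(\eta)^n$ places $\Theta(1/\sqrt{n})$ of its mass on the exact-weight slice (on which it is uniform over $\calS_{\eta n,n}$), combined with search/decision equivalences. At that level your plan is in the right spirit, but one of its load-bearing steps is not a proof.

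The genuine gap is the claim that ``since typical LPN-search algorithms (of BKW or Arora--Ge type) depend only on the marginal noise rate, a search-$\LPN$ algorithm invoked on an $\xLPN$ instance recovers $s$ with essentially the same guarantees.'' A reduction must handle an \emph{arbitrary} adversary, not representative algorithms: a hypothetical search-$\LPN$ solver with success probability $\varepsilon$ over Bernoulli noise could, a priori, fail precisely on the exact-weight slice, since that slice carries only $\Theta(1/\sqrt{n})$ of the Bernoulli mass, so no lower bound on its success against $\calS_{\eta n,n}$-distributed errors follows---this is exactly the cancellation issue you yourself flag at the decision level, and it applies unaddressed to this search-level step. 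Note also that the conditioning argument is clean in the \emph{opposite} direction, which is the one this paper actually uses ($\LPN$ hard $\Rightarrow$ $\xLPN$ hard): run the $\xLPN$ adversary on Bernoulli-noise samples; with probability $\Theta(1/\sqrt{n})$ the error has weight exactly $\eta n$, in which case the input is distributed exactly as an $\xLPN$ instance, and a candidate secret can be verified against fresh samples. As written, both of your paragraphs end up mapping an $\LPN$ solver (or $\LPN$ sample distribution) to the $\xLPN$ side, i.e.\ they argue ``$\xLPN$ hard $\Rightarrow$ $\LPN$ hard,'' while the implication needed here goes the other way and is obtained by applying the $\Theta(1/\sqrt{n})$ conditioning to an $\xLPN$ adversary, routed through the search problem via a \emph{sample-preserving} search-to-decision reduction valid for arbitrary (in particular, fixed-weight, non-product) error distributions---the standard Goldreich--Levin/Katz--Shin statements you invoke are formulated for Bernoulli noise, so this substitution needs to be made explicit. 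Your second paragraph's hybrid over $O(\sqrt{n})$ weight classes also quietly requires $\xLPN$ adversaries at shifted weight parameters, which the fixed-parameter statement does not provide without a further argument.
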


\subsection{The toolkit} \label{subsection:toolkit}
The basic idea is to use the fact that there are actually $n^{\Omega(\log n)}$ parity checks that are satisfied by a random $\calC$, even though our detector uses just $r = O(n)$ of them.
Since the adversary only sees $\calC$, from their perspective the actual parity checks used by our detector are chosen uniformly at random from $P_{\calC,t}$.
This is formalized in \Cref{lemma:adv-knowledge}.

\begin{lemma} \label{lemma:adv-knowledge}
    Let $H \in \F_2^{r \times n}$ be a random matrix where each row is $t$-sparse. Let $\calC$ be a random $d$-dimensional subspace selected from $\ker H$. Then for all $d$-dimensional subspaces $\calC_* \subseteq \F_2^n$ and all $H_*, H_*' \in \F_2^{r \times n}$ such that $\calC_* \subseteq \ker H_* \cap \ker H_*'$,
    \[
        \Pr[H = H_* \mid \calC = \calC_*] = \Pr[H = H_*' \mid \calC = \calC_*].
    \]
\end{lemma}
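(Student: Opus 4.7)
The plan is to apply Bayes' rule:
\[
\Pr[H = H_* \mid \calC = \calC_*] \;=\; \frac{\Pr[H = H_*] \cdot \Pr[\calC = \calC_* \mid H = H_*]}{\Pr[\calC = \calC_*]}.
\]
The denominator is common to the two probabilities we want to equate, so it suffices to argue that the numerators agree for any $H_*, H_*'$ with $\calC_* \subseteq \ker H_* \cap \ker H_*'$.

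For the prior term, since $H$ is drawn uniformly from $\calS_{t,r,n}$, we have $\Pr[H = H_*] = \Pr[H = H_*']$ whenever both matrices lie in $\calS_{t,r,n}$; otherwise both sides of the claimed identity are already $0$ and there is nothing to show. For the likelihood term, conditioned on $H = H_*$ the subspace $\calC$ is a uniformly random $d$-dimensional subspace of $\ker H_*$ (equivalently, $\calC = \mathrm{image}(G)$ for $G \sim (\ker H_*)^d$, after restricting to the full-column-rank event). A short counting argument in terms of Gaussian binomial coefficients then yields
\[
\Pr[\calC = \calC_* \mid H = H_*] \;=\; \frac{1}{\binom{\dim \ker H_*}{d}_2}
\]
whenever $\calC_* \subseteq \ker H_*$, so this probability depends on $H_*$ only through $\dim \ker H_*$. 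Plugging both facts into Bayes immediately gives the desired equality as soon as $\dim \ker H_* = \dim \ker H_*'$.

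The main obstacle is therefore justifying that the two kernel dimensions coincide for every relevant pair. In the parameter regime of the LDPC construction ($r$ polynomial and $r \ll n$), a uniformly chosen $H \in \calS_{t,r,n}$ has full row rank $r$ except with probability $\negl[n]$, so $\dim \ker H_* = \dim \ker H_*' = n-r$ in all but a negligible-measure set of cases and the two numerators match. The residual mass on rank-deficient $H_*$ can be absorbed into the failure probability of the concentration arguments later in \Cref{subsection:zero-bit}, so this caveat is harmless. The structural upshot is exactly what we need downstream: the posterior distribution of $H$ given $\calC$ is (effectively) uniform over those $t$-row-sparse matrices whose kernel contains $\calC$, which is how \Cref{section:adaptive-robustness} is free to ``resample'' $H$ after the adversary commits to an error vector $e$.
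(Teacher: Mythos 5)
Your proposal follows the same Bayes-rule skeleton as the paper's proof: uniformity of the prior gives $\Pr[H=H_*]=\Pr[H=H_*']$, and the problem is reduced to comparing the likelihoods $\Pr[\calC=\calC_*\mid H=H_*]$. Where you differ is that you actually evaluate the likelihood, $1/\binom{\dim\ker H_*}{d}_2$, and correctly note that it depends on $H_*$ through $\dim\ker H_*$, so that the claimed identity follows only when $\dim\ker H_*=\dim\ker H_*'$. The paper's proof simply asserts that the two likelihoods are equal ``since $\calC$ is uniform over $d$-dimensional subspaces of $\ker H$''; that assertion is valid exactly under the same equal-kernel-dimension hypothesis that you isolate, so your computation makes explicit an assumption the paper leaves silent.

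However, the way you close the resulting gap does not prove the lemma as stated. The statement is a pointwise identity quantified over \emph{all} $H_*,H_*'$ whose kernels contain $\calC_*$; pairs of unequal rank (say $H_*$ with a repeated row and $H_*'$ of full row rank) are legitimate instances, and for such a pair the two posteriors are proportional to $1/\binom{\dim\ker H_*}{d}_2$ and $1/\binom{\dim\ker H_*'}{d}_2$ respectively, hence genuinely unequal. A ``full rank except with probability $\negl[n]$, absorb the rest into later failure probabilities'' argument can only justify a statistically-close variant of the lemma; it cannot rescue the universal claim, and it silently changes what is being proved. The clean repair is to add the hypothesis $\dim\ker H_*=\dim\ker H_*'$ (equivalently, restrict to matrices of equal, e.g.\ full, row rank) to the statement --- this is all the downstream resampling argument uses --- and then, separately and where it is actually needed, argue that rank-deficient $H$ arises with negligible probability. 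Note also that this last claim, full row rank of a random $t$-row-sparse $H\in\F_2^{r\times n}$ with $r$ as large as $\Theta(n)$ except with probability $\negl[n]$, is itself nontrivial for sparse rows and is asserted in your write-up without justification; a crude union bound over the span of previously chosen rows does not suffice in that regime.
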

\begin{proof}
    Since $\calC$ is selected uniformly at random from all $d$-dimensional subspaces of $\ker H$,
    \[
        \Pr[\calC = \calC_* \mid H = H_*] = \Pr[\calC = \calC_* \mid H = H_*'].
    \]
    Since $H$ is selected uniformly at random to begin with, we also have $\Pr[H=H_*] = \Pr[H=H_*']$. The result follows from Bayes' rule.
\end{proof}

\Cref{lemma:adv-knowledge} reduces the problem to showing that decoding with $P_{\calC,t}$ is adaptively robust --- even though $P_{\calC,t}$ contains far too many parity checks for the decoder to actually use.
The following lemma is the key ingredient to showing that decoding with $P_{\calC,t}$ is adaptively robust.
Recall from \Cref{section:prelims} that $\bias(z) = \frac{1}{m} \sum_{i=1}^m (-1)^{z_i}$.
Note that $\bias(z) = 1 - 2\wt(z)/n$ and $\wt(z) = (1/2 - \bias(z)/2) \cdot n$.
Recall that we use $W_t \in \F_2^{\binom{n}{t} \times n}$ to denote the matrix whose rows are all weight-$t$ parity checks.

\begin{lemma} \label{lemma:omar}
    Let $N_{\calC,t}$ be the number of $t$-sparse parities consistent with $\calC$. For any vector $x \in \F_2^n$ and code $\calC$ with $N_{\calC, t} > 0$, we have that
    \[
        \bias(P_{\calC, t} x) = \frac{\binom{n}{t} \cdot 2^{-\dim(\calC)}}{N_{\calC, t}} \sum_{c \in \calC} \bias(W_t (x \oplus c)).
    \]
\end{lemma}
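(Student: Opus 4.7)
The plan is a direct Fourier-type computation, exploiting the standard identity that for a linear code $\calC \subseteq \F_2^n$ of dimension $d$ and any $w \in \F_2^n$,
\[
    \sum_{c \in \calC} (-1)^{w \cdot c} = \begin{cases} 2^d & \text{if } w \in \calC^\perp, \\ 0 & \text{otherwise}, \end{cases}
\]
which follows because the map $c \mapsto w \cdot c$ is either identically $0$ on $\calC$ (when $w \in \calC^\perp$) or a surjection onto $\F_2$ (otherwise).

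First I would unfold the right-hand side. By definition of $\bias$, each term $\bias(W_t(x \oplus c))$ equals $\frac{1}{\binom{n}{t}} \sum_{w \in \calS_{t,n}} (-1)^{w \cdot (x \oplus c)}$, since the rows of $W_t$ are indexed precisely by $\calS_{t,n}$. Summing over $c \in \calC$ and swapping the order of summation gives
\[
    \sum_{c \in \calC} \bias(W_t(x \oplus c)) = \frac{1}{\binom{n}{t}} \sum_{w \in \calS_{t,n}} (-1)^{w \cdot x} \sum_{c \in \calC} (-1)^{w \cdot c}.
\]

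Next I would apply the identity above to the inner sum. The inner sum is $2^{\dim(\calC)}$ when $w \in \calC^\perp$ and $0$ otherwise, so only the weight-$t$ parity checks in $\calC^\perp$ survive. These are exactly the rows of $P_{\calC,t}$, of which there are $N_{\calC,t}$. Therefore
\[
    \sum_{c \in \calC} \bias(W_t(x \oplus c)) = \frac{2^{\dim(\calC)}}{\binom{n}{t}} \sum_{w : w \in \calS_{t,n} \cap \calC^\perp} (-1)^{w \cdot x} = \frac{2^{\dim(\calC)} \cdot N_{\calC,t}}{\binom{n}{t}} \cdot \bias(P_{\calC,t} x).
\]
Dividing through yields the claimed identity. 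No step should be an obstacle: the whole argument is one line of Fourier orthogonality on $\F_2^n$ combined with unpacking the definition of $\bias$. The only thing worth double-checking is that the normalization in $\bias(P_{\calC,t} x)$ matches $1/N_{\calC,t}$, which it does precisely because $P_{\calC,t}$ has $N_{\calC,t}$ rows (and this is where the hypothesis $N_{\calC,t} > 0$ is used, to avoid dividing by zero).
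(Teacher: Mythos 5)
Your proof is correct and is essentially the same as the paper's: both rest on the Fourier orthogonality identity relating $\sum_{c\in\calC}(-1)^{w\cdot c}$ to the indicator $1\{w\in\calC^\perp\}$, unfold $\bias$ as an average of $(-1)^{w\cdot(\cdot)}$ over weight-$t$ vectors, and swap the two sums; you merely run the computation from the right-hand side to the left while the paper runs it in the other direction.
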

\begin{proof}
    First, observe that for any vector $v \in \F_2^n$, we have that
    \begin{equation*}
        1\{v \in \calC^\perp\} = 2^{-\dim(\calC)}\sum_{c \in \calC}{(-1)^{v \cdot c}} \ .
    \end{equation*}
    Using this identity, we have the following series of equalities:
    \begin{align*}
        N_{\calC,t} \cdot \bias(P_{\calC,t}x) &= \sum_{v \in \text{rows}(P_{\calC, t})}{(-1)^{v \cdot x}} \\
        &= \sum_{v \in \text{rows}(W_t) \cap \calC^\perp}{(-1)^{v \cdot x}} \\
        &= \sum_{v \in \text{rows}(W_t)}{(-1)^{v \cdot x} 1\{v \in \calC^\perp\}} \\
        &= \sum_{v \in \text{rows}(W_t)}{(-1)^{v \cdot x} \left( 2^{-\dim(\calC)} \sum_{c \in \calC}{(-1)^{v \cdot c}} \right)} \\
        &= 2^{-\dim(\calC)} \sum_{c \in \calC}{\left(\sum_{v \in \text{rows}(W_t)}{(-1)^{v \cdot (x \oplus c)}}\right)} \\
        &= 2^{-\dim(\calC)} \sum_{c \in \calC}{\binom{n}{t} \bias(W_t(x \oplus c))} \ . \qedhere
    \end{align*}
\end{proof}

The quantity outside of the sum, $\binom{n}{t} \cdot 2^{-\dim(\calC)} / N_{\calC, t}$, can be assumed to be 1 by \Cref{fact:parity-concentration}.
Therefore \Cref{lemma:omar} allows us to reason about $\bias(P_{\calC, t})$, which appears to be a hopelessly unwieldy quantity, by reasoning instead about random $t$-sparse parities.

\subsection{An adaptively robust zero-bit PRC} \label{subsection:zero-bit}
We will show in \Cref{theorem:parity-bias-lower-bound} that, if $\calC$ is selected as the span of $d$ random vectors, then $\bias(P_{\calC,t} x) \ge n^{\Omega(1)-1/2}$ for \emph{every} $x \in \F_2^n$ with $\bias(x) = \Omega(1)$; by \cite[Lemma 9]{CG24}, this also holds when $\calC$ is selected by the key generation algorithm described above.
By \Cref{lemma:adv-knowledge}, it follows that any $x$ the adversary chooses based only on knowledge of $\calC$ will be detected with high probability over the choice of $r$ parities from $P_{\calC,t}$.

\begin{fact} \label{fact:parity-concentration}
    If $\omega(1) \le t \le n^{o(1)}$ and $d \le (1-\Omega(1)) \cdot t \log n$, then
    \[
        \Pr_{G \from \F_2^{n \times d}}\left[N_{\Im(G),t} = (1 \pm \negl[n]) \cdot \binom{n}{t} \cdot 2^{-d}\right] = 1 - \negl[n].
    \]
\end{fact}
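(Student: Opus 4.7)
The plan is to apply the second moment method to the random variable $N_{\Im(G), t} = \sum_{w \in \calS_{t,n}} X_w$, where $X_w = \mathbf{1}\{w^T G = 0\}$ indicates that the weight-$t$ vector $w$ is a parity check for $\Im(G)$. Since the columns $g_1, \dots, g_d$ of $G$ are independent and uniform in $\F_2^n$, for any nonzero $w$ each inner product $w \cdot g_i$ is a uniform bit, so $\E[X_w] = 2^{-d}$ and hence $\mu := \E[N_{\Im(G), t}] = \binom{n}{t} \cdot 2^{-d}$ exactly.

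The key observation is that the indicators $\{X_w\}_{w \in \calS_{t,n}}$ are pairwise independent. For any distinct $w, w' \in \calS_{t,n}$, both vectors are nonzero, and in $\F_2$ two distinct nonzero vectors are automatically linearly independent. Hence for each random column $g_i$, the map $g_i \mapsto (w \cdot g_i,\ w' \cdot g_i)$ is surjective onto $\F_2^2$, and the pair is uniform; taking a product over $i \in [d]$ gives $\Pr[X_w = X_{w'} = 1] = 2^{-2d} = \E[X_w]\E[X_{w'}]$, so $\Cov(X_w, X_{w'}) = 0$. Therefore $\Var(N_{\Im(G), t}) = \sum_w \Var(X_w) \le \mu$.

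Chebyshev's inequality then gives $\Pr\bigl[\abs{N_{\Im(G), t} - \mu} \ge \eps \mu\bigr] \le 1/(\eps^2 \mu)$ for any $\eps > 0$, so the remaining task is just to verify that $\mu$ is super-polynomial in $n$ under the given parameter regime. Using $\binom{n}{t} \ge (n/t)^t$ together with the hypothesis $d \le (1-c)\, t \log n$ for some constant $c > 0$, one obtains $\mu \ge (n/t)^t \cdot n^{-(1-c)t} = n^{ct}/t^t$. The hypothesis $t \le n^{o(1)}$ gives $t^t = 2^{t \log t} = n^{o(t)}$, and then $t = \omega(1)$ yields $\mu \ge n^{\Omega(t)} = n^{\omega(1)}$. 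Choosing $\eps = \mu^{-1/3}$ makes both $\eps$ and $1/(\eps^2 \mu) = \mu^{-1/3}$ negligible in $n$, which is the claimed bound.

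The only conceptual step is the pairwise-independence observation; the rest is a routine second-moment calculation together with parameter bookkeeping. The one place that requires some care is ensuring that the product of $t^t \le n^{o(t)}$ (from $t = n^{o(1)}$) does not eat into the $n^{ct}$ gap coming from the slack in $d$, but this is immediate from the assumed asymptotic form of $d$ and $t$. No additional structural fact about random linear codes is needed beyond uniform independence of the columns of $G$.
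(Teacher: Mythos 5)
Your proof is correct and follows essentially the same approach as the paper's: decompose $N_{\Im(G),t}$ as a sum of indicator variables, observe pairwise independence to bound the variance, apply Chebyshev with $\eps = \mu^{-1/3}$, and verify the mean is super-polynomial from the parameter constraints. The only cosmetic difference is that you spell out the pairwise-independence argument (distinct nonzero vectors over $\F_2$ are linearly independent, making the pair of inner products with a uniform column jointly uniform), which the paper simply asserts.
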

\begin{proof}
    The proof is a simple application of Chebyshev's inequality, and is also shown in \cite[Lemma 11]{CG24}. We present it here for completeness.
    
    For uniformly random $G \from \F_2^{n \times d}$, let $X_w = 1\{w G = 0\}$ be random variables for each $w \in W_t$. Note that $N_{\Im(G), t} = \sum_{w \in W_t} X_w$, and for each $w \in W_t$, $\E[X_w] = 2^{-d}$ and $\Var[X_w] = 2^{-d} - 2^{-2d}$.

    Letting $\alpha = \binom{n}{t} \cdot 2^{-d}$,
    \[
        \E N_{\Im(G), t} = \sum_{w \in W_t} \E[X_w] = \alpha.
    \]
    Furthermore, $\{X_w\}_{w \in W_t}$ are pairwise independent. So by Chebyshev's inequality,\footnote{We use the following version of Chebyshev's inequality: for $N$ pairwise independent random variables $X_1, \dots, X_N$ where $\E X_1 = \cdots = \E X_N = \mu$ and $\Var X_1 = \cdots = \Var X_N = \sigma^2$, $\Pr[\abs{\sum_{i=1}^N X_i - N \mu} > \tau] \le N \sigma^2 / \tau^2$.}
    \begin{align*}
        \Pr[\abs{N_{\Im(G), t} - \alpha} > \tau] &\le \binom{n}{t} \cdot (2^{-d} - 2^{-2d}) / \tau^2 \\
        &\le \alpha / \tau^2.
    \end{align*}
    Let $\tau = \alpha^{2/3}$. We have
    \[
        \Pr[N_{\Im(G), t} = \left(1 \pm \alpha^{-1/3}\right) \alpha] \ge 1 - \alpha^{-1/3}.
    \]
    Invoking the assumptions that $t \le n^{o(1)}$ and $d \le (1-\Omega(1)) \cdot t \log n$,
    \begin{align*}
        \alpha &\ge \left(\frac{n}{t}\right)^t \cdot 2^{-d} \\
        &= 2^{t \log n - d - t \log t} \\
        &= 2^{t \log n - (1-\Omega(1)) \cdot t \log n - o(t \log n)} \\
        &= 2^{\Omega(t \log n)},
    \end{align*}
    which is super-polynomial by the assumption that $t = \omega(1)$. This completes the proof.
\end{proof}

\begin{fact} \label{fact:parity-bias}
    For any vector $x \in \F_2^n$ and any even $t \le \sqrt{n/2}$,
    \begin{equation} \label{eq:parity-bias}
        \bias(x)^t \cdot \left(1-\frac{2t^2/n}{\bias(x)^2}\right) \le \bias(W_t x) \le \bias(x)^t
    \end{equation}
\end{fact}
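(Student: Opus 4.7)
The plan is to interpret $y_i := (-1)^{x_i} \in \{\pm 1\}$ so that $p := \bias(x) = \frac{1}{n}\sum_{i} y_i$ and
\[
    \bias(W_t x) = \E_{|S|=t} \prod_{i \in S} y_i, \qquad \bias(x)^t = p^t,
\]
where $S$ ranges over uniform $t$-subsets of $[n]$. Since $p^t = \E \prod_{j=1}^t y_{I_j}$ when $I_1, \ldots, I_t$ are i.i.d.\ uniform on $[n]$, the task reduces to comparing ``sampling without replacement'' with ``sampling with replacement.''

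First I would set up the coupling. Let $A$ be the event that $I_1, \ldots, I_t$ are pairwise distinct. Conditioned on $A$, the tuple $(I_j)$ is a uniformly random ordering of a uniform $t$-subset, so $\E[\prod_j y_{I_j} \mid A] = \bias(W_t x)$. A birthday-style union bound over the $\binom{t}{2}$ pairs yields $\Pr[\lnot A] \leq \binom{t}{2}/n \leq t^2/(2n)$, which under the hypothesis $t \leq \sqrt{n/2}$ is at most $1/4$. The law of total expectation then gives
\[
    p^t \;=\; \Pr[A]\cdot\bias(W_t x) \;+\; \Pr[\lnot A]\cdot\E\!\left[\textstyle\prod_{j} y_{I_j} \,\big|\, \lnot A\right].
\]

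For the lower bound I would refine this by conditioning on the equality-partition $\pi$ of $[t]$ induced by $(I_j)$. Using $y_i^2 = 1$, the product collapses to $\prod_{B \in \pi^{\mathrm{odd}}} y_{I_B}$ over odd-sized blocks, and by symmetry $\E[\prod_j y_{I_j} \mid \pi] = \bias(W_{k_\pi}x)$ with $k_\pi := |\pi^{\mathrm{odd}}|$. The dominant non-singleton contribution comes from partitions with a single pair-collision, contributing total weight $\binom{t}{2}/n \cdot (1 + o(1))$ with $k_\pi = t-2$; higher-order collision patterns contribute $O((t^2/n)^2)$. Solving for $\bias(W_t x)$ and applying induction on even $t$ (with inductive approximation $\bias(W_{t-2}x) \approx p^{t-2}$) yields
\[
    \bias(W_t x) \;\geq\; p^t - \frac{2t^2}{n}\,p^{t-2} \;=\; p^t\!\left(1 - \frac{2t^2/n}{\bias(x)^2}\right),
\]
where the constant $2$ is chosen to absorb the subleading multi-collision corrections.

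For the upper bound $\bias(W_t x) \leq \bias(x)^t$, I would exploit the real-rootedness of the generating polynomial
\[
    \prod_{i=1}^n (1 + y_i z) \;=\; (1+z)^a (1-z)^b,
\]
where $a, b$ count the $y_i$ equal to $+1, -1$. Writing $q_k := \bias(W_k x) = e_k(y)/\binom{n}{k}$, Newton's inequalities give log-concavity $q_k^2 \geq q_{k-1} q_{k+1}$. Combined with $q_0 = 1$, $q_1 = p$, and even parity of $t$, iterating $q_k \le q_1 q_{k-1}$ along the chain $q_0, q_1, \ldots, q_t$ yields $q_t \leq p^t$ in the regime where intermediate $q_k$ remain non-negative; that regime is precisely controlled by the already-proved lower bound together with the hypothesis $\bias(x)^2 \ge 2t^2/n$. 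The main obstacle is the bookkeeping of higher-order collision partitions in the lower bound, where overlapping pair-collisions and larger cycles must be shown to fit within the $O(t^4/n^2)$ error budget, and the sign tracking of the intermediate $q_k$'s in the upper bound's log-concavity chain; both are tamed by the hypothesis $t \leq \sqrt{n/2}$, which keeps $\Pr[\lnot A] \leq 1/4$ and the perturbative expansion around $p^t$ well-behaved.
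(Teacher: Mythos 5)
Your starting point for the lower bound is the same as the paper's---rewriting $\bias(x)^t$ as the with-replacement average $\E\prod_{j}y_{I_j}$ and comparing it to the without-replacement average $\bias(W_t x)$---and your conditional identity $\E[\prod_j y_{I_j}\mid \pi]=\bias(W_{k_\pi}x)$ is correct. The genuine gap is the absorption step. The multi-collision partitions are weighted by $\bias(W_{k_\pi}x)$, which a priori you can only bound by $1$ in absolute value, so they contribute an \emph{additive} error of order $t^4/n^2$, not a multiple of $\bias(x)^{t-2}$. After the single-pair term (about $\tfrac{t^2}{2n}\bias(x)^{t-2}$, using the inductive upper bound at $t-2$) the budget left inside $\tfrac{2t^2}{n}\bias(x)^{t-2}$ is only about $\tfrac{3t^2}{2n}\bias(x)^{t-2}$, and $t^4/n^2$ is not dominated by this once $\bias(x)^{t-2}=o(t^2/n)$. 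So ``choosing the constant $2$ to absorb the subleading corrections'' does not go through in the small-bias regime, and your interlocked induction (the lower bound at $t$ needs the upper bound at $t-2$, while your upper bound only runs under an extra hypothesis, see below) delivers the inequality only when $\bias(x)^{t-2}=\Omega(t^2/n)$, not for all $x$ as the Fact asserts. The paper avoids the partition expansion entirely: it collapses the whole collision event into a single term $\bias(x)^{t-2}$ (treating the conditional law given a repeat as the with-replacement distribution on $t-2$ draws), writes $\bias(x)^t=Q_t\,\bias(W_tx)+(1-Q_t)\,\bias(x)^{t-2}$ with $Q_t\ge 1-t^2/n$, and reads off both bounds in two lines, with no induction and no case split on $\bias(x)$.

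Your upper bound is a genuinely different route (real-rootedness of $\prod_i(1+y_iz)$ and Newton/log-concavity of $q_k=\bias(W_kx)$), but as you yourself note it only yields $q_t\le \bias(x)^t$ when $\bias(x)^2\ge 2t^2/n$, and it additionally needs $\bias(x)>0$ (otherwise $q_1<0$ breaks the ratio chain; one must symmetrize $y\mapsto -y$) and positivity of \emph{all} intermediate $q_k$, including odd $k$, which your even-$t$ lower bound does not supply. The Fact claims the upper bound for every $x\in\F_2^n$, and that unrestricted form is what the paper uses later: inside \Cref{theorem:parity-bias-upper-bound} it is applied to $\bias(c'\oplus e\oplus c)$ for all codewords $c$, most of which have bias $o(1)$, and the small-bias branch of \Cref{fact:parity-lower-bound} also leans on the Fact outside your regime. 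Your instinct that the small-bias regime is where the statement becomes delicate is sound, but the proposal neither proves the Fact there nor supplies a weaker substitute that would suffice for those downstream uses, so as written it does not establish the stated result.
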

\begin{proof}
    Recall that $\bias(W_t x) = \E_{w \from W_t} (-1)^{v \cdot x}$.
    Let $W_t^*$ be the distribution over $\F_2^n$ defined by summing $t$ independent, random indicator vectors $e_i$.
    In other words, $W_t^*$ is the same distribution as $W_t$ except that we allow repeats. Let $Q_t = \Pr[\wt(w) = t \mid w \from W_t^*]$ be the probability that there are no repeated indices sampled under $W_t^*$. Note that $W_t^*$ conditioned on having no repeats is exactly $W_t$; and $W_t^*$ conditioned on having repeats is exactly $W_{t-2}^*$. We have
    \begin{align*}
        \bias(x)^t &= \E_{w \from W_t^*} (-1)^{w \cdot x} \\
        &= Q_t \E_{w \from W_t}[(-1)^{w \cdot x} \mid \wt(w)=t] + (1-Q_t) \E_{w \from W_t^*}[(-1)^{w \cdot x} \mid \wt(w) < t] \\
        &= Q_t \cdot \bias(W_t x) + (1-Q_t) \cdot \bias(x)^{t-2}.
    \end{align*}
    Rearranging and using the fact that $Q_t \ge 1-t^2/n$, we obtain the lower bound on $\bias(W_t x)$ as
    \begin{align*}
        \bias(W_t x) &= Q_t^{-1} \cdot \bias(x)^t + (1-Q_t^{-1}) \cdot \bias(x)^{t-2} \\
        &\ge \bias(x)^t + \left(1 - \frac{1}{1-t^2/n}\right) \cdot \bias(x)^{t-2} \\
        &\ge \bias(x)^t - \frac{2t^2}{n} \cdot \bias(x)^{t-2}.
    \end{align*}
    For the last inequality, we used that $t$ is even and $t \le \sqrt{n/2}$ and $1-1/(1-z) \ge -2z$ for $z \in [0,1/2]$. The upper bound on $\bias(W_t x)$ follows by maximizing $Q_t^{-1} \cdot \bias(x)^t + (1-Q_t^{-1}) \cdot \bias(x)^{t-2}$ over $Q_t \in [0,1]$, using the fact that $0 \le \bias(x)^t \le \bias(x)^{t-2}$.
\end{proof}

\begin{fact} \label{fact:parity-lower-bound}
    For any vector $x \in \F_2^n$ and any even $t = n^{o(1)}$,
    \[
        \bias(W_t x) \ge -n^{-(1/2-o(1)) \cdot t}.
    \]
\end{fact}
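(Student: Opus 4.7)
The plan is to derive this lower bound directly from Fact 4.3 by a short two-case argument on the magnitude of $b := \bias(x)$. Since $t = n^{o(1)}$, for all sufficiently large $n$ we have $t \le \sqrt{n/2}$, so Fact 4.3 applies and gives
\[
    \bias(W_t x) \ge b^t - \frac{2t^2}{n} \cdot b^{t-2}.
\]
The crucial observation is that $t$ is even, so both $b^t$ and $b^{t-2}$ are nonnegative regardless of the sign of $b$.

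I would then split into cases based on the magnitude of $b^2$ relative to $2t^2/n$. If $b^2 \ge 2t^2/n$, then
\[
    b^t - \frac{2t^2}{n} \cdot b^{t-2} = b^{t-2}\left(b^2 - \frac{2t^2}{n}\right) \ge 0,
\]
and the desired lower bound is trivially satisfied. If on the other hand $b^2 < 2t^2/n$, then $b^{t-2} \le (2t^2/n)^{(t-2)/2}$, so
\[
    b^t - \frac{2t^2}{n} \cdot b^{t-2} \ge -\frac{2t^2}{n} \cdot \left(\frac{2t^2}{n}\right)^{(t-2)/2} = -\left(\frac{2t^2}{n}\right)^{t/2}.
\]
In either case, $\bias(W_t x) \ge -(2t^2/n)^{t/2}$.

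Finally, I would convert the bound $-(2t^2/n)^{t/2}$ into the form stated in the fact. Taking logarithms base $n$, we have $\log_n (2t^2/n)^{t/2} = (t/2)(\log_n 2 + 2\log_n t - 1)$. Since $t = n^{o(1)}$, $\log_n t = o(1)$, so this exponent equals $-t/2 + o(t) = -(1/2 - o(1)) \cdot t$, yielding $(2t^2/n)^{t/2} = n^{-(1/2-o(1)) \cdot t}$ and hence the claim.

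The proof is essentially routine once Fact 4.3 is identified as the right starting point. The only subtlety is the need to use the evenness of $t$: without it, $b^{t-2}$ could be negative when $b < 0$, and the lower bound of Fact 4.3 would not yield a useful sign-agnostic inequality. I do not anticipate any substantial obstacles beyond this.
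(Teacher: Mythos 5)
Your proposal is correct and follows essentially the same route as the paper: both apply the lower bound of \Cref{fact:parity-bias}, split on whether $\bias(x)^2 \ge 2t^2/n$, bound the bad case by $-(2t^2/n)^{t/2}$, and use $t = n^{o(1)}$ to rewrite this as $-n^{-(1/2-o(1))\cdot t}$. Your explicit remark about the evenness of $t$ making $\bias(x)^t,\bias(x)^{t-2} \ge 0$ is the same point the paper uses implicitly.
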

\begin{proof}
    If $\bias(x)^2 \ge 2t^2/n$, then \Cref{fact:parity-bias} implies that $\bias(W_t x) \ge 0$.

    If $\bias(x)^2 < 2t^2/n$, then by \Cref{fact:parity-bias} we still have that
    \begin{align*}
        \bias(W_t x) &\ge -\frac{2t^2}{n} \cdot \bias(x)^{t-2} \\
        &\ge -\frac{2t^2}{n} \cdot \left(\frac{2t^2}{n}\right)^{t/2-1} \\
        &= -\left(\frac{2t^2}{n}\right)^{t/2} \\
        &\ge -n^{-(1/2-o(1)) \cdot t}. \qedhere
    \end{align*}
\end{proof}

Note that \Cref{theorem:parity-bias-lower-bound} only applies for \emph{even} values of $t$.

\begin{theorem} \label{theorem:parity-bias-lower-bound}
    Let $t = n^{o(1)}$ be even and $d \le (1/2 - \Omega(1)) \cdot t \log n$. Let $G \from \F_2^{n \times d}$ be uniformly random and $\calC$ be the column span of $G$. Then with probability $1-\negl[n]$ over $G$,
    \[
        \bias(P_{\calC, t} e) \ge (1-o(1)) \cdot \bias(e)^t \text{ for all } e \in \F_2^n \text{ s.t. } \bias(e) = \Omega(1).
    \]
\end{theorem}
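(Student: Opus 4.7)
My plan is to apply Lemma~\ref{lemma:omar} to rewrite $\bias(P_{\calC,t} e)$ as a normalized sum of $\bias(W_t(e \oplus c))$ over $c \in \calC$, and then to isolate the $c=0$ term using Fact~\ref{fact:parity-bias} while absorbing the nonzero terms via Fact~\ref{fact:parity-lower-bound}. The hypotheses $t = n^{o(1)}$ and $d \le (1/2 - \Omega(1)) \cdot t \log n$ should be exactly what is needed to make the nonzero codewords contribute an error that is small compared to $\bias(e)^t$.

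First, I would invoke Fact~\ref{fact:parity-concentration} to fix a high-probability event over $G$ on which the prefactor $\binom{n}{t} \cdot 2^{-d}/N_{\calC,t}$ in Lemma~\ref{lemma:omar} equals $1 \pm \negl[n]$. Crucially this event depends only on $G$, not on $e$, so any deterministic bound I establish afterward will automatically hold uniformly over all relevant $e$. Splitting the sum from Lemma~\ref{lemma:omar} at $c = 0$, the zero term is $\bias(W_t e)$; by Fact~\ref{fact:parity-bias} and $\bias(e) = \Omega(1)$, $t = n^{o(1)}$, this is at least $\bias(e)^t \cdot \bigl(1 - 2t^2/(n \bias(e)^2)\bigr) = (1-o(1)) \bias(e)^t$.

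Next I would bound the contribution of the nonzero codewords. By Fact~\ref{fact:parity-lower-bound}, each satisfies $\bias(W_t(e \oplus c)) \ge -n^{-(1/2-o(1)) t}$, and since $|\calC| - 1 \le 2^d \le n^{(1/2-\Omega(1)) t}$ by the assumption on $d$, their total contribution is at least
\[
-n^{(1/2 - \Omega(1))t} \cdot n^{-(1/2 - o(1))t} \ge -n^{-\Omega(t)}.
\]
Combining the two bounds and dividing by the $1 \pm \negl[n]$ prefactor yields $\bias(P_{\calC,t} e) \ge (1-o(1)) \bias(e)^t - n^{-\Omega(t)}$. Since $\bias(e) \ge c$ for some constant $c > 0$, we have $\bias(e)^t \ge c^t$, and the ratio $n^{-\Omega(t)}/c^t = (c^{-1} n^{-\Omega(1)})^t$ tends to $0$ for $n$ large (regardless of whether $t$ is bounded or grows with $n$), so the additive error folds into the $(1-o(1))$ factor.

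The main obstacle, as I see it, is making the second step quantitatively tight: the slack in $d \le (1/2 - \Omega(1)) \cdot t \log n$ must exactly cancel the $(1/2 - o(1)) t$ exponent coming from Fact~\ref{fact:parity-lower-bound}, leaving a strictly negative exponent of order $-\Omega(t)$. A secondary subtlety is that the universal quantification over $e$ cannot be handled by a union bound, since the number of candidate $e$ is $2^n$ while each Fact gives only an $n^{-\Omega(t)}$-type saving; instead one must exploit the fact that Facts~\ref{fact:parity-bias} and~\ref{fact:parity-lower-bound} are \emph{deterministic} in $e$, so the randomness over $G$ is used only once, to control the normalization.
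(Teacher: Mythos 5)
Your proposal matches the paper's proof essentially step for step: it invokes Lemma~\ref{lemma:omar} with Fact~\ref{fact:parity-concentration} to normalize, lower-bounds the $c=0$ term via Fact~\ref{fact:parity-bias} as $(1-o(1))\bias(e)^t$, and bounds the nonzero codewords' contribution by $-2^d \cdot n^{-(1/2-o(1))t} \ge -n^{-\Omega(t)}$ using Fact~\ref{fact:parity-lower-bound} and the assumption on $d$, exactly as in the paper. Your additional remark that the bounds are deterministic in $e$ so the randomness over $G$ is used only for the normalization correctly explains the uniformity over $e$, which the paper leaves implicit.
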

\begin{proof}
    By \Cref{lemma:omar} and \Cref{fact:parity-concentration}, it suffices to bound
    \[
        \sum_{c \in \calC} \bias(W_t (x \oplus c)) = \bias(W_t x) + \sum_{c \in \calC \setminus \{0\}} \bias(W_t(x \oplus c)).
    \]
    We bound the first and second terms separately. For the first term, \Cref{fact:parity-bias} implies that that for any $x$ such that $\bias(x) = \Omega(1)$,
    \begin{align*}
        \bias(W_t x) &\ge \bias(x)^t \cdot \left(1 - \frac{2t^2/n}{\bias(x)^2}\right) \\
        &= (1-O(t^2/n)) \cdot \bias(x)^t.
    \end{align*}
    For the second term, \Cref{fact:parity-lower-bound} and the assumption that $d \le (1/2 - \Omega(1)) \cdot t \log n$ imply that
    \begin{align*}
        \sum_{c \in \calC \setminus \{0\}} \bias(W_t(x \oplus c)) &\ge -2^d \cdot n^{-(1/2-o(1)) \cdot t} \\
        &\ge -n^{(1/2 - \Omega(1)) \cdot t} \cdot n^{-(1/2-o(1)) \cdot t} \\
        &= -n^{-\Omega(t)}. \qedhere
    \end{align*}
\end{proof}

\begin{corollary}
    For any $\delta \in (0,1/2)$ and $r=n^{\Omega(1)}$, there exist $\eta = \Omega(1)$, $t = \Theta(\log n)$, and $d = \Omega(\log^2 n)$ such that the zero-bit public-key pseudorandom code $\LDPCPRC_0[n, d, t, r, \eta, r^{-1/4}]$ is adaptively $\delta$-robust (\Cref{definition:adaptive-robustness-pk}).
\end{corollary}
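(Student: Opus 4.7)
The plan is to show that, with overwhelming probability over the key-generation randomness, no efficient adversary can produce a triple $(m, (u, e), x)$ that violates adaptive $\delta$-robustness. Writing $y := x \oplus (Gu \oplus z \oplus e)$ for the effective perturbation (so the winning condition forces $\wt(y) \le \delta n$) and setting $e' := y \oplus e$, the identity $HG = 0$ gives $H(x \oplus z) = H e'$. Hence the decoder outputs the correct message $1$ precisely when $\bias(H e') > 2 \zeta = 2 r^{-1/4}$, and the entire proof reduces to lower bounding $\bias(H e')$ for every $e'$ the adversary can induce.

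For parameters, writing $r = n^s$ with $s > 0$, I would take $\eta$ to be a small positive constant with $\eta + \delta < 1/2$, pick an even integer $t$ of the form $c \log n$ for a sufficiently small constant $c > 0$, and set $d = t \log n / 4$, so that $d = \Theta(\log^2 n)$ and the hypothesis $d \le (1/2 - \Omega(1)) \cdot t \log n$ of \Cref{theorem:parity-bias-lower-bound} is satisfied. The constraints $\wt(e) = \eta n$ and $\wt(y) \le \delta n$ together force $\wt(e') \le (\eta + \delta) n < n/2$, so $\bias(e') \ge 1 - 2(\eta + \delta) =: \alpha = \Omega(1)$ no matter what the adversary does.

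Given this setup, I would invoke \Cref{theorem:parity-bias-lower-bound}: with probability $1 - \negl[n]$ over $G$, every $e'$ with $\bias(e') \ge \alpha$ satisfies $\bias(P_{\calC, t} e') \ge (1 - o(1)) \alpha^t = \Omega(n^{-c \log_2(1/\alpha)})$. Choosing $c < s / (4 \log_2(1/\alpha))$ makes this quantity comfortably larger than $4 r^{-1/4}$. Then, since the adversary's $e'$ depends on $G$ but is committed to before $H$ is revealed, \Cref{lemma:adv-knowledge} lets me treat the rows of $H$ (conditional on $G$) as $r$ i.i.d.\ uniform samples from $P_{\calC, t}$; Hoeffding's inequality then transfers the bias bound, giving $\bias(H e') \ge \bias(P_{\calC, t} e') / 2 > 2 \zeta$ with probability $1 - \exp(-n^{\Omega(1)})$ over $H$. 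A union bound over the negligible bad-$G$ event and the Hoeffding failure event completes the argument.

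The main obstacle is balancing the parameters: $t$ must be $\Theta(\log n)$ to ensure that $P_{\calC, t}$ is super-polynomially large so that \Cref{theorem:parity-bias-lower-bound} applies meaningfully, yet small enough that $\alpha^t = n^{-\Theta(1)}$ remains polynomially larger than the decoding threshold $\zeta = r^{-1/4}$. The other subtlety is adaptivity --- one might worry about union-bounding over all possible adversarial $e'$, but since $e'$ is fixed before $H$ is sampled and $H \mid G$ is uniform over $r$-tuples of rows of $P_{\calC, t}$, a single per-$e'$ application of Hoeffding suffices.
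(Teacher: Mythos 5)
Your proposal is correct and follows essentially the same route as the paper: reduce to the total error $e'$ of weight at most $(\eta+\delta)n < n/2$, apply \Cref{theorem:parity-bias-lower-bound} to get $\bias(P_{\calC,t}e') \ge (1-o(1))\bias(e')^t$, and then use \Cref{lemma:adv-knowledge} plus a Chernoff/Hoeffding bound over the $r$ rows of $H$ to conclude $\bias(H e') > 2\zeta$. The only (immaterial) difference is calibration: the paper picks $t = \log(4r^{-1/4})/\log(1/2-\delta)$ so that $\bias(e')^t$ equals $4r^{-1/4}$ exactly, whereas you take $t = c\log n$ with $c$ small enough that $\alpha^t$ exceeds $r^{-1/4}$ polynomially, which if anything gives more slack in the concentration step.
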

\begin{proof}
    Let
    \begin{itemize}
        \item $\eta = 1/4-\delta/2$,
        \item $t = \log(4r^{-1/4}) / \log(1/2-\delta)$, and
        \item $d = (1/3) \cdot t \log n$.
    \end{itemize}
    In order to invoke \Cref{fact:parity-bias}, $t$ must be even, so we assume for simplicity that $4\log(4r^{-1/4}) / \log(1/2-\delta)$ is an even integer.
    
    Let $m', c, x, r$ be defined as in $\Grobustpk$ and suppose that $\wt(x \oplus c) \le \delta n$.
    Since $\LDPCPRC_0$ is a zero-bit PRC, we omit $m$ from $\Grobustpk$.
    We do not omit $m'$, however, because it could still potentially be $\bot$.
    
    Let $c^* = c \oplus e^*$ where $e^*$ is the $\eta n$-sparse noise added by $\PRC.\Encode(\pk; r)$.
    Let $e = x \oplus c^*$ be the combined error $c \oplus c^*$ and $x \oplus c$.
    By our choice of $\eta$ and the assumption that $\wt(x \oplus c) \le \delta n$,
    \begin{align*}
        \wt(e) &\le \wt(x \oplus c) + \wt(c \oplus c^*) \\
        &\le \delta n + \eta n \\
        &= \left(1/4 + \delta/2\right) \cdot n
    \end{align*}
    Equivalently, this means that $\bias(e) \ge 1/2 - \delta = \Omega(1)$.
    By \Cref{theorem:parity-bias-lower-bound}, we have that
    \begin{align*}
        \bias(P_{\calC,t} e) &\ge (1-o(1)) \cdot \bias(e)^t \\
        &\ge (1-o(1)) \cdot (1/2-\delta)^t \\
        &= (1-o(1)) \cdot 4 r^{-1/4},
    \end{align*}
    so $\Pr_{w \from P_{\calC,t}}[w \cdot e = 1] \le 1/2 - 2r^{-1/4}$.

    Now since $e$ is computed as a function of the public key alone, \Cref{lemma:adv-knowledge} implies that we can view $H$ as being sampled after $e$.
    Therefore each parity check $w$ in $H$ has $\Pr_{w \from P_{\calC,t}}[w \cdot e = 1] \le 1/2 - 2r^{-1/4}$, and by a Chernoff bound over the parity checks in $H$,
    \[
        \Pr[\wt(H e) > (1/2 - r^{-1/4}) \cdot r] \le \negl[n].
    \]
    This means that
    \[
        \Pr[\PRC.\Decode(\sk; x) = \bot] \le \negl[n],
    \]
    completing the proof.
\end{proof}

\subsection{An adaptively robust single-bit PRC} \label{subsection:single-bit}

Let us now state our single-bit public-key PRC construction.
It is essentially a combination of two zero-bit public-key PRCs, used to encode 0 and 1 separately.
The decoder checks whether \emph{exactly one} of the zero-bit detectors accepts, in which case our decoder outputs the corresponding bit.
While this construction is generic in the underlying zero-bit PRC, our proof of adaptive robustness relies on the particular structure of the LDPC-based PRCs.

\begin{construction}[Single-bit public-key pseudorandom LDPC code, {$\LDPCPRC_1[n,d,t,r,\eta,\zeta]$}] \label{const:single-bit-ldpc-prc}
    Let $n, d, t, r : \mathbb{N} \to \mathbb{N}$ and $\eta, \zeta : \N \to [0,1/2)$ be efficiently-computable functions of the security parameter. We define $\LDPCPRC_0[n,d,t,r,\eta,\zeta]$ by the following algorithms, where we leave the dependence of $n,d,t,r,\eta,\zeta$ on $\secpar$ implicit:
    \begin{itemize}
        \item $\KeyGen(\secparam)$: Sample $(H_0,G_0), (H_1, G_1) \from \LDPC[n,d,t,r]$ and $z \gets \F_2^n$. Output $(\sk = (H_0,H_1,z), \pk = (G_0,G_1,z))$.
        \item $\Encode(\secparam,(G_0,G_1,z),m)$: Sample $u \from \F_2^d \setminus \{0\}$, $e \from \calS_{\eta n, n}$. Output $G_m u \oplus z \oplus e$.
        \item $\Decode(\secparam,(H_0,H_1,z),x)$: For $m \in \{0,1\}$, if $\wt(H_m (x \oplus z)) < \left(\frac{1}{2} - \zeta\right) \cdot r$ and $\wt(H_{1-m} (x \oplus z)) > \left(\frac{1}{2} - \zeta\right) \cdot r$, output $m$; otherwise output $\bot$.
    \end{itemize}
\end{construction}

In order to prove adaptive robustness of \Cref{const:single-bit-ldpc-prc}, we need to show two things.
First, that the zero-bit decoders are robust up to errors of weight $\delta n < n/4$.
This follows from the same argument as in \Cref{subsection:zero-bit} (although with different parameters).
Second, we need to show that the zero-bit decoders will \emph{not} detect any codewords that are within $\delta n$ Hamming distance of a codeword from the other zero-bit code.
This is crucial for our decoder to know which of the zero-bit PRCs was used.
We prove this second claim in \Cref{theorem:parity-bias-upper-bound}, using our special sauce \Cref{lemma:omar}, the Johnson bound, and the fact that random linear codes are approximately unbiased.

\begin{lemma}[Random linear codes are $\varepsilon$-balanced] \label{lemma:rlcs-balanced}
    Let $\calC \subseteq \F_2^n$ be a random linear code of dimension $d$. Then with probability $1-\negl[d]$, $\calC$ is $\varepsilon$-biased for $\varepsilon = 2\sqrt{d/n}$, i.e.,
    \[
        \abs{\bias(c)} \le 2\sqrt{\frac{d}{n}} \text{ for all } c \in \calC \setminus \{0\}.
    \]
\end{lemma}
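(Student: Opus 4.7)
The plan is a standard Hoeffding-plus-union-bound argument. I would realize the random linear code $\calC$ as the column span of a uniformly random matrix $G \from \F_2^{n \times d}$; the event that $G$ has full column rank (and hence $\calC$ has dimension exactly $d$) occurs with probability $1 - \negl[d]$ for $n$ sufficiently large, so this representation is without loss of generality. Every nonzero codeword then takes the form $Gu$ for some nonzero $u \in \F_2^d$. The key observation is that for any \emph{fixed} nonzero $u$, the bits $(Gu)_1, \ldots, (Gu)_n$ are i.i.d.\ uniform over $\F_2$: coordinate $i$ of $Gu$ is an inner product of $u$ with the $i$th row of $G$, the rows are independent and uniform, and $u \ne 0$ makes each inner product an unbiased bit.

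Given this, I would apply Hoeffding's inequality to the $\pm 1$ sum $n \cdot \bias(Gu) = \sum_{i=1}^n (-1)^{(Gu)_i}$, which yields
\[
    \Pr\bigl[\,\abs{\bias(Gu)} \ge t\,\bigr] \;\le\; 2\exp(-nt^2/2)
\]
for every $t > 0$. Setting $t = 2\sqrt{d/n}$ makes the right-hand side at most $2 e^{-2d}$.

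Finally, I would union bound over the $2^d - 1$ nonzero $u \in \F_2^d$: the total probability that any nonzero codeword violates the desired bias bound is at most $2^d \cdot 2 e^{-2d} = 2(2/e^2)^d$, which is $\negl[d]$. Since every nonzero $c \in \calC$ equals $Gu$ for some nonzero $u$, this bounds the probability that any nonzero codeword has absolute bias exceeding $2\sqrt{d/n}$.

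There is no genuine obstacle in this argument; the only point that warrants care is verifying that each fixed nonzero $u$ does yield coordinates that are independent unbiased bits (used when applying Hoeffding), which follows immediately from the i.i.d.\ row structure of $G$. The constant $2$ in $\varepsilon = 2\sqrt{d/n}$ is exactly what one needs to beat the $2^d$ factor in the union bound with a strictly sub-unit base $2/e^2$.
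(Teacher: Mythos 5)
Your proof is correct and follows essentially the same route as the paper's: parameterize nonzero codewords as $Gu$ for nonzero $u \in \F_2^d$, apply a Chernoff/Hoeffding bound of the form $\Pr[\abs{\bias(Gu)} \ge \varepsilon] \le 2e^{-\varepsilon^2 n/2}$, and union bound over the $2^d - 1$ nonzero $u$. The only difference is that you explicitly note the full-rank event holds with probability $1-\negl[d]$, a minor point the paper leaves implicit.
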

\begin{proof}
    Let $G$ be the generator matrix for $\calC$.
    For any $z \in \F_2^d \setminus \{0\}$,
    \[
        \Pr_{G}[\abs{\bias(Gz)} > \varepsilon] \le 2 e^{-\varepsilon^2 n / 2}
    \]
    by a Chernoff bound. By a union bound over $z$,
    \[
        \Pr_{G}[\exists z \in \F_2^d \setminus \{0\} \text{ s.t. } \abs{\bias(Gz)} > \varepsilon] \le 2^{d - \varepsilon^2 n / 2 + 1} = \negl[d]
    \]
    if $\varepsilon = 2\sqrt{d/n}$.
\end{proof}

\begin{fact} \label{fact:rlc-closest}
    With probability $1-\negl[n]$ over random linear codes $\calC_0, \calC_1$ of dimensions $d_0, d_1$ where $\omega(\log n) \le d_0, d_1 \le o(n)$, for every $e$ we have
    \[
        \max_{\substack{c_0 \in \calC_0 \\ c_1 \in \calC_1 \setminus \{0\}}} \abs{\bias(c_0 \oplus c_1 \oplus e)} \le 1 - \bias(e) + 2 \sqrt{(d_0+d_1)/n}.
    \]
\end{fact}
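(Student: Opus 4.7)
My plan is to reduce the claim to two simpler pieces: a balanced-ness property for the linear combinations $c_0 \oplus c_1$ with $c_1 \ne 0$, and an elementary ``triangle inequality'' relating $\bias(u\oplus e)$ to $\bias(u)$ and $\bias(e)$. Once both are in hand, the fact follows by plugging $u = c_0 \oplus c_1$ into the triangle inequality and noting $1 - |\bias(e)| \le 1 - \bias(e)$.

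For the first piece, I would let $G_0 \in \F_2^{n \times d_0}$ and $G_1 \in \F_2^{n \times d_1}$ be the independent uniform generator matrices of $\calC_0$ and $\calC_1$, stack them into $G = [G_0 \mid G_1] \in \F_2^{n \times (d_0+d_1)}$, and run essentially the argument of \Cref{lemma:rlcs-balanced}, but restricted to vectors $z = (z_0, z_1)$ with $z_1 \ne 0$. For any such $z$, the product $G_1 z_1$ is uniform over $\F_2^n$ (since $G_1$ is uniform and $z_1 \ne 0$) and is independent of $G_0 z_0$, so $Gz = G_0 z_0 \oplus G_1 z_1$ is uniform. A Chernoff bound gives $\Pr[|\bias(Gz)| > \varepsilon] \le 2 e^{-\varepsilon^2 n / 2}$, and a union bound over the at most $2^{d_0+d_1}$ admissible $z$'s drives the failure probability to $\negl[n]$ for $\varepsilon := 2\sqrt{(d_0+d_1)/n}$, using $d_0 + d_1 \le o(n)$. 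Thus on a $(1-\negl[n])$-event, $|\bias(c_0 \oplus c_1)| \le \varepsilon$ holds simultaneously for every $c_0 \in \calC_0$ and $c_1 \in \calC_1 \setminus \{0\}$. The $c_1 \ne 0$ restriction is precisely what is needed to preserve uniformity of $Gz$, and matches the restriction in the statement.

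For the second piece, I would prove $|\bias(u \oplus e)| \le 1 - |\bias(e)| + |\bias(u)|$ for all $u, e \in \F_2^n$. The starting point is the elementary pair of Hamming-weight inequalities $\wt(u \oplus e) \ge |\wt(u) - \wt(e)|$ and $\wt(u \oplus e) \ge n - \wt(u) - \wt(e)$, both immediate from inclusion-exclusion. Substituting $\wt(x) = (1 - \bias(x)) n / 2$ turns these into $\bias(u \oplus e) \le 1 - |\bias(u) - \bias(e)|$ and $-\bias(u \oplus e) \le 1 - |\bias(u) + \bias(e)|$ respectively. Each right-hand side is at most $1 - (|\bias(e)| - |\bias(u)|)$ by $|a \pm b| \ge |a| - |b|$, which yields the claimed inequality.

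Combining the two pieces, with $u = c_0 \oplus c_1$ and $|\bias(u)| \le \varepsilon$ on the good event, I conclude
\[
    \abs{\bias(c_0 \oplus c_1 \oplus e)} \;\le\; 1 - |\bias(e)| + \varepsilon \;\le\; 1 - \bias(e) + 2\sqrt{(d_0+d_1)/n}
\]
uniformly in $e$. No step is particularly delicate; the only thing to watch is that the union bound in the first piece is taken only over $z$ with $z_1 \ne 0$, since otherwise the zero combination $c_0 = c_1 = 0$ would trivially spoil any balanced-ness bound. The assumption $d_0, d_1 \ge \omega(\log n)$ is not strictly needed for this proof (it is used elsewhere to make $\varepsilon$ comfortably larger than $1/\sqrt{n}$), while $d_0, d_1 \le o(n)$ is essential to keep the Chernoff bound non-trivial.
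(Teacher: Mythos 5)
Your overall architecture matches the paper's: a pointwise ``triangle inequality'' reducing $\abs{\bias(c_0\oplus c_1\oplus e)}$ to $\abs{\bias(c_0\oplus c_1)}$ plus $1-\bias(e)$, and a whp balancedness bound $\abs{\bias(c_0\oplus c_1)}\le 2\sqrt{(d_0+d_1)/n}$ over all $c_0\in\calC_0$, $c_1\in\calC_1\setminus\{0\}$. For the second piece the paper passes through the sum code $\calC_0\oplus\calC_1$ (arguing linear independence whp and invoking \Cref{lemma:rlcs-balanced}), whereas you run the Chernoff-plus-union-bound argument directly over the stacked generator $[G_0\mid G_1]$ restricted to $z=(z_0,z_1)$ with $z_1\neq 0$; that is a perfectly valid, slightly more self-contained variant that sidesteps the independence step.

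Two concrete slips, both fixable. First, your second ``elementary'' weight inequality $\wt(u\oplus e)\ge n-\wt(u)-\wt(e)$ is false: take $u=e$ of weight $n/4$, so the left side is $0$ and the right side is $n/2$. The bias bound you claim to derive from it, $-\bias(u\oplus e)\le 1-\abs{\bias(u)+\bias(e)}$, is in fact true, but it follows from the \emph{upper} bounds $\wt(u\oplus e)\le \wt(u)+\wt(e)$ and $\wt(u\oplus e)\le (n-\wt(u))+(n-\wt(e))$, not from the inequality you wrote. Simpler still, and all that the statement requires, is the single triangle inequality $\abs{\wt(u\oplus e)-\wt(u)}\le\wt(e)$, which immediately gives $\abs{\bias(u\oplus e)}\le\abs{\bias(u)}+1-\bias(e)$; this is exactly how the paper argues, and your sharper version with $1-\abs{\bias(e)}$ buys nothing here. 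Second, your hypothesis bookkeeping at the end is backwards: with $\varepsilon=2\sqrt{(d_0+d_1)/n}$ the union bound leaves failure probability about $2^{d_0+d_1}e^{-2(d_0+d_1)}$, which is $\negl[n]$ precisely because $d_0+d_1=\omega(\log n)$; the condition $d_0+d_1\le o(n)$ plays no role in that probability estimate (it only ensures $\varepsilon=o(1)$, which matters for the downstream use of the fact, not for its truth).
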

\begin{proof}
    First we use the fact that $\abs{\wt(c_0 \oplus c_1 \oplus e) - \wt(c_0 \oplus c_1)} \le \wt(e)$ to see that
    \begin{align*}
        \max_{\substack{c_0 \in \calC_0 \\ c_1 \in \calC_1 \setminus \{0\}}} \abs{\bias(c_0 \oplus c_1 \oplus e)} &= \max_{c \in \calC} \abs{1-2\wt(c_0 \oplus c_1 \oplus e)/n} \\
        &\le 2\wt(e)/n + \max_{\substack{c_0 \in \calC_0 \\ c_1 \in \calC_1 \setminus \{0\}}} \abs{1-2\wt(c_0 \oplus c_1)/n} \\
        &= 1-\bias(e) + \max_{\substack{c_0 \in \calC_0 \\ c_1 \in \calC_1 \setminus \{0\}}} \abs{\bias(c_0 \oplus c_1)}.
    \end{align*}
    Let $\calC = \{c_0 \oplus c_1 \mid c_0 \in \calC_0, c_1 \in \calC_1\}$.
    With probability $1-\negl[n]$, $\calC_0$ and $\calC_1$ are linearly independent, in which case $\calC$ is a random linear code of dimension $d_0+d_1$.
    So by \Cref{lemma:rlcs-balanced},
    \begin{align*}
        \max_{\substack{c_0 \in \calC_0 \\ c_1 \in \calC_1 \setminus \{0\}}} \abs{\bias(c_0 \oplus c_1)} &\le \max_{c \in \calC \setminus \{0\}} \abs{\bias(c)} \\
        &\le 2 \sqrt{(d_0+d_1)/n}
    \end{align*}
    with probability $1-\negl[n]$ over $\calC_0, \calC_1$, completing the proof.
\end{proof}

\begin{lemma}[Johnson bound adapted from {\cite[Equation 7.6]{GRS23}}] \label{lemma:johnson}
    Let $\calC \subseteq \F_2^n$ be a code of distance at least $(1-\delta) n/2$. Then for any $x \in \F_2^n$ and any $\tau > \sqrt{\delta}$,
    \[
        \abs{\{c \in \calC : \abs{\bias(x \oplus c)} \ge \tau\}} \le \frac{1-\delta}{\tau^2 - \delta}.
    \]
\end{lemma}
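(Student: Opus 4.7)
The plan is to prove the Johnson bound via the standard second-moment / Plotkin-style argument in the $\pm 1$ embedding. The point is that the distance hypothesis translates to the codewords being nearly orthogonal as real vectors, and then Cauchy–Schwarz pits the small pairwise inner products against the large correlations with $x$.

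First I will move to $\{-1,+1\}^n$ via the character map $\chi : \F_2^n \to \{-1,+1\}^n$ defined by $\chi(y)_i = (-1)^{y_i}$. Under this map $\langle \chi(y), \chi(y') \rangle = n \cdot \bias(y \oplus y')$, so the distance hypothesis $\dist(c,c') \geq (1-\delta) n/2$ for distinct $c,c' \in \calC$ becomes $|\langle \chi(c), \chi(c')\rangle| \leq \delta n$, and the event $|\bias(x \oplus c_i)| \geq \tau$ becomes $|\langle \chi(x), \chi(c_i)\rangle| \geq \tau n$. Enumerate the codewords witnessing the event as $c_1, \dots, c_N$ and set $u = \chi(x)$, $\epsilon_i = \operatorname{sign}(\bias(x \oplus c_i)) \in \{\pm 1\}$, and $v_i = \epsilon_i \chi(c_i)$. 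The sign flip ensures $\langle u, v_i \rangle \geq \tau n$ uniformly for all $i$, while the norms and pairwise bounds $\|v_i\|^2 = n$, $|\langle v_i, v_j\rangle| \leq \delta n$ (for $i \neq j$) are preserved.

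Next I will sandwich $\bigl\lVert \sum_i v_i \bigr\rVert^2$ between two bounds. Expanding the square gives the upper bound
\[
    \Bigl\lVert \sum_i v_i \Bigr\rVert^2 = \sum_i \|v_i\|^2 + \sum_{i \neq j} \langle v_i, v_j\rangle \leq Nn + N(N-1)\delta n.
\]
For the lower bound, Cauchy–Schwarz against $u$, combined with $\|u\|^2 = n$ and $\langle u, v_i\rangle \geq \tau n$, gives
\[
    \Bigl\lVert \sum_i v_i \Bigr\rVert^2 \geq \frac{\bigl(\sum_i \langle u, v_i\rangle\bigr)^2}{\|u\|^2} \geq \frac{(N \tau n)^2}{n} = N^2 \tau^2 n.
\]
Chaining the two estimates and dividing by $n$ yields $N^2 \tau^2 \leq N + N(N-1)\delta$, i.e.\ $N(\tau^2 - \delta) \leq 1 - \delta$, from which the claimed bound $N \leq (1-\delta)/(\tau^2 - \delta)$ follows since $\tau^2 > \delta$.

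There is no real obstacle here; the only subtle point is that the lemma uses $|\bias(x \oplus c)| \geq \tau$ rather than a one-sided inequality, so I must introduce the signs $\epsilon_i$ to uniformize the correlations with $u$ before invoking Cauchy–Schwarz. Once that is done, the argument is purely linear-algebraic and the constants fall out exactly as stated.
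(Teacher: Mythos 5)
The paper itself gives no proof of this lemma --- it is imported by citation from [GRS23, Eq.~7.6] --- so your argument stands on its own, and it has a genuine gap at the translation step. From $\dist(c,c') \ge (1-\delta)n/2$ you conclude $\abs{\langle \chi(c), \chi(c')\rangle} \le \delta n$, but the minimum-distance hypothesis only bounds this inner product from \emph{above}: $\langle \chi(c), \chi(c')\rangle = n\,\bias(c \oplus c') \le \delta n$. It gives no lower bound --- two codewords at distance close to $n$ have inner product close to $-n$. This matters exactly because of the sign flips you introduce: for a pair with $\epsilon_i \epsilon_j = -1$ you have $\langle v_i, v_j\rangle = -\langle \chi(c_i), \chi(c_j)\rangle$, and to keep this at most $\delta n$ you need the missing lower bound, i.e.\ you also need $\dist(c_i,c_j) \le (1+\delta)n/2$. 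So the expansion bound $\bigl\lVert \sum_i v_i \bigr\rVert^2 \le Nn + N(N-1)\delta n$ does not follow from the stated hypothesis, and this is the crux of the proof, not a cosmetic point.

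In fact the two-sided statement is false as literally written, so no argument closes this gap without strengthening the hypothesis: take $\calC = \{c, \bar{c}\}$ with $\bar{c} = c \oplus 1^n$, which has minimum distance $n \ge n/2$ (so $\delta = 0$ is admissible), and $x = c$, $\tau = 1$; both codewords satisfy $\abs{\bias(x \oplus c')} = 1 \ge \tau$, while the claimed bound is $1$. Your proof is correct verbatim under the stronger hypothesis that $\calC$ is $\delta$-biased --- equivalently, all pairwise distances lie in $[(1-\delta)n/2,\,(1+\delta)n/2]$, which is what gives the two-sided bound $\abs{\langle\chi(c),\chi(c')\rangle} \le \delta n$ you use --- and this is also what the paper actually has available where the lemma is applied, since there $\calC$ is an $\varepsilon$-balanced random linear code (\Cref{lemma:rlcs-balanced}). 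Alternatively, for the one-sided count $\{c : \bias(x \oplus c) \ge \tau\}$ (the form in GRS), your argument works with all $\epsilon_i = 1$ and no two-sided bound is needed. The Cauchy--Schwarz sandwich and the final algebra are fine; the fix is to either strengthen the hypothesis to $\delta$-bias or drop the absolute value, and to state which you are doing.
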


\begin{theorem} \label{theorem:parity-bias-upper-bound}
    Suppose that $\omega(1) \le t \le n^{o(1)}$ and $d = \varepsilon t \log n$ for some constant $\varepsilon \in (0,1/8)$. Let $G, G' \from \F_2^{n \times d}$ be uniformly random and $\calC, \calC'$ be the column spans of $G, G'$.
    Then with probability $1-\negl[n]$ over $G, G'$,
    \[
        \bias(P_{\calC, t} (c' \oplus e)) \le (1+o(1)) \cdot n^{4\varepsilon} \cdot \left(1 - \bias(e) + \sqrt{8d/n}\right)^t \text{ for all } c' \in \calC' \setminus \{0\}, e \in \F_2^n.
    \]
\end{theorem}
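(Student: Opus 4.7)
The plan is to combine Lemma~\ref{lemma:omar} with Fact~\ref{fact:parity-concentration} to express $\bias(P_{\calC,t}(c' \oplus e))$ as (up to a $(1 + o(1))$ factor) the sum $\sum_{c \in \calC} \bias(W_t(c \oplus c' \oplus e))$, bound each term via Fact~\ref{fact:parity-bias}, and then control the resulting $\ell^t$-type sum using Fact~\ref{fact:rlc-closest} together with the Johnson bound.

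Since $\omega(1) \le t \le n^{o(1)}$ and $d = \varepsilon t \log n$ with $\varepsilon < 1/8 < 1$, Fact~\ref{fact:parity-concentration} yields $N_{\calC,t} = (1 \pm \negl[n])\binom{n}{t} 2^{-d}$ with probability $1 - \negl[n]$ over $G$, so Lemma~\ref{lemma:omar} becomes
\[
\bias(P_{\calC,t}(c' \oplus e)) = (1 + o(1))\sum_{c \in \calC} \bias\bigl(W_t(c \oplus c' \oplus e)\bigr).
\]
Applying Fact~\ref{fact:parity-bias} (for $t$ even), each summand is at most $|\bias(c \oplus c' \oplus e)|^t$, reducing the task to showing that
\[
\sum_{c \in \calC} |\bias(c \oplus c' \oplus e)|^t \le (1+o(1))\,n^{4\varepsilon}\,B^t, \qquad B := 1 - \bias(e) + \sqrt{8d/n},
\]
holds for all $c' \in \calC' \setminus \{0\}$ and $e \in \F_2^n$, with probability $1 - \negl[n]$ over $G, G'$.

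Two uniform estimates drive the sum bound. First, Fact~\ref{fact:rlc-closest} (applied to the independent codes $\calC, \calC'$ of dimensions $d_0 = d_1 = d$, both of which lie in $(\omega(\log n), o(n))$) provides the uniform bound $|\bias(c \oplus c' \oplus e)| \le B$ for every $c \in \calC$, $c' \in \calC' \setminus \{0\}$, and $e \in \F_2^n$, with probability $1 - \negl[n]$ over $G, G'$. Second, Lemma~\ref{lemma:rlcs-balanced} implies $\calC$ is $2\sqrt{d/n}$-balanced with high probability, so its minimum distance is at least $(1 - 2\sqrt{d/n})n/2$; the Johnson bound (Lemma~\ref{lemma:johnson}) with $\delta = 2\sqrt{d/n}$ then gives, for any $\tau > \sqrt{2\sqrt{d/n}}$ and any $x \in \F_2^n$,
\[
\bigl|\{c \in \calC : |\bias(c \oplus x)| \ge \tau\}\bigr| \le 1/(\tau^2 - 2\sqrt{d/n}).
\]
I would then split the sum at the threshold $\tau^* := \sqrt{n^{-4\varepsilon} + 2\sqrt{d/n}}$, chosen so Johnson yields at most $n^{4\varepsilon}$ codewords of bias exceeding $\tau^*$, obtaining
\[
\sum_{c \in \calC} |\bias(c \oplus c' \oplus e)|^t \le n^{4\varepsilon} B^t + 2^d(\tau^*)^t.
\]

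The main obstacle is bounding the low-bias bulk term $2^d(\tau^*)^t = n^{\varepsilon t}(\tau^*)^t$ against the target $n^{4\varepsilon} B^t$, so that it can be absorbed into the $(1+o(1))$ slack. The constraint $\varepsilon < 1/8$ is essential here: it keeps $n^{\varepsilon t}$ small enough relative to the power $(B/\tau^*)^t$ for the bulk term to be subsumed, using the floor $B \ge \sqrt{8d/n}$ and the subexponential growth $t = n^{o(1)}$ to absorb polylogarithmic slack. Making this calibration between $\varepsilon$, $t$, $d$, and $B$ explicit, and verifying in particular that $\tau^*$ stays above the Johnson threshold $\sqrt{2\sqrt{d/n}}$ across the full range of $e$, is where the bulk of the calculation lies.
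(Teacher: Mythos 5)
Your proposal tracks the paper's proof step for step: Lemma~\ref{lemma:omar} with Fact~\ref{fact:parity-concentration} to reduce to $\sum_{c\in\calC}\bias(W_t(c\oplus c'\oplus e))$, Fact~\ref{fact:parity-bias} to replace each term by $|\bias(c\oplus c'\oplus e)|^t$, Lemma~\ref{lemma:johnson} together with the $\varepsilon$-balance of $\calC$ (Lemma~\ref{lemma:rlcs-balanced}) to truncate at a bias threshold of order $n^{-2\varepsilon}$, and Fact~\ref{fact:rlc-closest} to bound the surviving terms by $B^t$. Your threshold $\tau^*=\sqrt{n^{-4\varepsilon}+2\sqrt{d/n}}$ is a cosmetic variant of the paper's $n^{-2\varepsilon}$, and the last worry you raise is a non-issue: $\tau^*$ does not depend on $e$, and $(\tau^*)^2 = n^{-4\varepsilon}+2\sqrt{d/n}$ sits above the Johnson cutoff $2\sqrt{d/n}$ unconditionally.

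The step you defer --- absorbing the bulk term $2^d(\tau^*)^t = (1\pm o(1))\,n^{-\varepsilon t}$ into $(1+o(1))\,n^{4\varepsilon}B^t$ --- is not just a calculation to tidy up; as written, it fails for a nontrivial range of $e$. One has
\[
\frac{n^{-\varepsilon t}}{n^{4\varepsilon}B^t} \;=\; n^{-4\varepsilon}\!\left(\frac{n^{-\varepsilon}}{B}\right)^{\!t},
\]
and since $B$ can be as small as its floor $\sqrt{8d/n}=n^{-1/2+o(1)}$ while $\varepsilon<1/8<1/2$, the ratio diverges whenever $\wt(e)=o(n^{1-\varepsilon})$, so the two-term split does not yield the stated bound. (This is not something you caused: the paper's own proof displays the two component estimates and never verifies that the bulk term is dominated, so it has the same unexamined step; the theorem's quantifier ``for all $e\in\F_2^n$'' appears to be stronger than what the split argument delivers.) The argument does establish the bound with $B$ replaced by $\max\{B,\,n^{-\varepsilon}\}$, and that weaker form is all the downstream corollary in \Cref{subsection:single-bit} uses, since there $B$ is further relaxed to $1/4+\delta+\sqrt{8d/n}=\Omega(1)$. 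So: same approach as the paper, correctly identified ingredients, but the ``calibration'' you left open is a genuine step that requires either restricting $\bias(e)$ away from $1$ or weakening the conclusion as above.
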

\begin{proof}
    By \Cref{lemma:omar} (with \Cref{fact:parity-concentration,fact:parity-bias} applied), it suffices to bound
    \[
        \sum_{c \in \calC} \bias(c' \oplus e \oplus c)^t = \sum_{\substack{c \in \calC \\ \abs{\bias(c' \oplus e \oplus c)} \ge n^{-2 \varepsilon}}} \bias(c' \oplus e \oplus c)^t + \sum_{\substack{c \in \calC \\ \abs{\bias(c' \oplus e \oplus c)} < n^{-2 \varepsilon}}} \bias(c' \oplus e \oplus c)^t.
    \]
    Since $\dim \calC \le d$, the term on the right is at most $2^d \cdot n^{-2 \varepsilon t}$.
    
    By \Cref{lemma:johnson,lemma:rlcs-balanced} and \Cref{fact:rlc-closest}, with probability $1 - \negl[n]$ the term on the left is at most
    \begin{align*}
        \left(\frac{1-2\sqrt{d/n}}{n^{-4 \varepsilon}-2\sqrt{d/n}}\right) \cdot \max_{c \in \calC} \bias(c' \oplus e \oplus c)^t &\le \left(\frac{1}{n^{-4 \varepsilon}-2\sqrt{d/n}}\right) \cdot \left(1 - \bias(e) + \sqrt{8d/n}\right)^t \\
        &\le (1+o(1)) \cdot n^{4\varepsilon} \cdot \left(1 - \bias(e) + \sqrt{8d/n}\right)^t
    \end{align*}
    for all $c' \in \calC' \setminus \{0\}$ and $e \in \F_2^n$.
\end{proof}

\begin{corollary}
    For any $\delta \in (0,1/4)$ and $r = n^{\Omega(1)}$, there exist $\eta = \Omega(1)$, $t = \Theta(\log n)$, and $d = \Omega(\log^2 n)$ such that the single-bit public-key pseudorandom code $\LDPCPRC_1[n, d, t, r, \eta, 3r^{-1/5}/2]$ is adaptively $\delta$-robust (\Cref{definition:adaptive-robustness-pk}).
\end{corollary}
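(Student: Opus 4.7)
The plan is to reduce single-bit robustness to the zero-bit analysis from \Cref{subsection:zero-bit} combined with the new bias upper bound from \Cref{theorem:parity-bias-upper-bound}. Fix any adversarial output $(m, (u, e^*), x)$, so the intended codeword is $c = G_m u \oplus z \oplus e^*$. Define the effective error $e' = e^* \oplus (x \oplus c)$ and $c_m = G_m u$. The random shift $z$ cancels in both decoder computations: $H_m(x \oplus z) = H_m e'$ (since $H_m G_m = 0$) and $H_{1-m}(x \oplus z) = H_{1-m}(c_m \oplus e')$. The weight assumptions give $\wt(e') \le (\eta + \delta)n$.

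Take $\eta = 1/8 - \delta/2$, so $\wt(e')/n \le 1/8 + \delta/2$; equivalently $\bias(e') \ge 3/4 - \delta$ and $1 - \bias(e') \le 1/4 + \delta$. Because $\delta < 1/4$, I get the strict inequality $3/4 - \delta > 1/4 + \delta$, which provides the necessary slack. As in the zero-bit proof, I invoke \Cref{lemma:adv-knowledge} to view the rows of $H_m$ and $H_{1-m}$ as $r$ independent uniform samples from $P_{\calC_m, t}$ and $P_{\calC_{1-m}, t}$ drawn \emph{after} the adversary commits. Applying \Cref{theorem:parity-bias-lower-bound} to $\calC_m$ yields $\bias(P_{\calC_m, t}\, e') \ge (1-o(1))(3/4-\delta)^t$, which after a Chernoff bound over the $r$ rows ensures that $H_m$ accepts. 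Applying \Cref{theorem:parity-bias-upper-bound} to the pair $(\calC_{1-m}, \calC_m)$ with $c' = c_m$ (which is nonzero whenever $G_m$ has full column rank, i.e., with overwhelming probability) yields $\bias(P_{\calC_{1-m}, t}(c_m \oplus e')) \le (1+o(1)) \cdot n^{4\varepsilon}(1/4+\delta)^t$, ensuring that $H_{1-m}$ rejects.

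It remains to pick $t = \Theta(\log n)$ and $d = \varepsilon t \log n = \Theta(\log^2 n)$ for a sufficiently small $\varepsilon \in (0, 1/8)$ so that $(3/4 - \delta)^t$ exceeds the threshold $2\zeta = 3r^{-1/5}$ while $n^{4\varepsilon}(1/4+\delta)^t$ falls below it. Writing $r = n^\alpha$ and $t = c \log n$, this reduces to choosing $c$ in the interval
\[
\frac{\alpha/5 + 4\varepsilon}{\log_2(1/(1/4+\delta))} \;<\; c \;<\; \frac{\alpha/5}{\log_2(1/(3/4-\delta))},
\]
which is nonempty precisely because $3/4-\delta > 1/4+\delta$ (modulo the $4\varepsilon$ slack, which can be made arbitrarily small by shrinking $\varepsilon$). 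The main obstacle I expect is exactly this simultaneous parameter balancing: the threshold $\zeta$ must be small enough for Chernoff concentration to dominate, large enough that \Cref{theorem:parity-bias-upper-bound} can beat it, and yet small enough that \Cref{theorem:parity-bias-lower-bound} exceeds it. The strict inequality $\delta < 1/4$ is what makes all three constraints simultaneously satisfiable, and is what forces the distinction between this result and the $(1/2 - \varepsilon)$-robustness obtained in the zero-bit case.
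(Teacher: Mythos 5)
Your proposal is correct and follows essentially the same route as the paper: the same decomposition of the effective error $e' = e^* \oplus (x \oplus c)$, the same invocation of \Cref{lemma:adv-knowledge} to sample $H_m, H_{1-m}$ after $e'$, and the same split into \Cref{theorem:parity-bias-lower-bound} (for $H_m$ acceptance) and \Cref{theorem:parity-bias-upper-bound} applied to the pair $(\calC_{1-m},\calC_m)$ with $c' = G_m u \ne 0$ (for $H_{1-m}$ rejection), with $\eta = 1/8 - \delta/2$ as in the paper. The only cosmetic difference is that you tune $t$ within an interval while the paper fixes $t = (1/5)\log r$ and tunes $\varepsilon$ through $\delta'$; the $\sqrt{8d/n}$ term in \Cref{theorem:parity-bias-upper-bound} that you drop silently is indeed absorbable into the $(1+o(1))$ factor since $t\sqrt{8d/n} = O(\log^2 n/\sqrt{n}) = o(1)$.
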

\begin{proof}
    Let
    \begin{itemize}
        \item $\eta = 1/8-\delta/2$,
        \item $t = (1/5) \cdot \log r$,
        \item $\delta' = \log(1/(1/4+\delta))-1$,
        \item $\varepsilon = \min\{t \delta' / 4 \log n, 1/17\}$, and
        \item $d = \varepsilon t \log n$.
    \end{itemize}
    
    Let $\calC_0, \calC_1$ be the column spans of $G_0, G_1$, and $z$ be the one-time pad, sampled from $\LDPCPRC_1$.
    Let $m, r, c, x, m'$ be defined as in $\Grobustpk$ and suppose that $\wt(x \oplus c) \le \delta n$.

    We need to show that $x$ is detected under $H_m$ and not detected under $H_{1-m}$.
    Recall from \Cref{const:single-bit-ldpc-prc} that this means that $\bias(H_m (x \oplus z)) > 3r^{-1/5}$ and $\bias(H_{1-m} (x \oplus z)) < 3r^{-1/5}$.
    
    We will prove these two inequalities separately, but both cases will center on the total error $e$ induced by the adversary (including both the $\eta n$-sparse encoding noise, selected via $r$, and the $\delta n$-sparse attack).
    Let $c^* \in \calC_m \setminus \{0\}$ and $e$ be such that
    \[
        x = c^* \oplus e \oplus z
    \]
    and $\wt(e) \le (\delta + \eta) \cdot n = \left(1/8 + \delta/2\right) \cdot n$ (or equivalently, $\bias(e) \ge 3/4 - \delta$).

    \paragraph{Proof that $\bias(H_m (x \oplus z)) > 3r^{-1/5}$.}
    This part is essentially the same as \Cref{subsection:zero-bit}, except that we have set $t$ to be larger because we do not want to detect beyond $1/4$ errors anymore.
    
    By \Cref{theorem:parity-bias-lower-bound}, we have that with probability $1-\negl[n]$,
    \begin{align*}
        \bias(P_{\calC_m,t} (x \oplus z)) &= \bias(P_{\calC_m,t} e) \\
        &\ge (1-o(1)) \cdot \bias(e)^t \\
        &\ge (1-o(1)) \cdot (3/4-\delta)^t \\
        &= (1-o(1)) \cdot 2^{t \log(3/4-\delta)} \\
        &= r^{\log(3/4-\delta)/5} \\
        &\ge 4 r^{-1/5}
    \end{align*}
    because $\log(3/4-\delta) > \log(1/2) = -1$. By \Cref{lemma:adv-knowledge} and a Chernoff bound, it follows that
    \[
        \Pr[\bias(H_m (x \oplus z)) \le 3r^{-1/5}] \le \negl[n].
    \]

    \paragraph{Proof that $\bias(H_{1-m} (x \oplus z)) < 3r^{-1/5}$.}
    By \Cref{theorem:parity-bias-upper-bound}, with probability $1-\negl[n]$,
    \begin{align*}
        \bias(P_{\calC_{1-m},t} (x \oplus z)) &= \bias(P_{\calC_{1-m},t} (c^* \oplus e)) \\
        &\le (1+o(1)) \cdot n^{4\varepsilon} \cdot \left(1/4 + \delta + \sqrt{8d/n}\right)^t \\
        &\le (1+o(1)) \cdot n^{4\varepsilon} \cdot \left[(1/4 + \delta)^t + t \sqrt{8d/n}\right] \\
        &= (1+o(1)) \cdot \left[2^{4\varepsilon \log n - t \log(1/(1/4+\delta))} + n^{4\varepsilon} \cdot t \sqrt{8d/n}\right] \\
        &= (1+o(1)) \cdot \left[2^{4\varepsilon \log n - t \delta' - t} + n^{4\varepsilon} \cdot t \sqrt{8d/n}\right].
    \end{align*}
    Recall that $\delta' = \log(1/(1/4+\delta))-1$, which is positive since $\delta < 1/4$, and $\varepsilon = \min\{t \delta' / 4 \log n, 1/17\}$.
    Using $\varepsilon \le t \delta' / 4 \log n$ for the left term and $\varepsilon \le 1/17$ for the right, we have
    \begin{align*}
        \bias(P_{\calC_{1-m},t} (x \oplus z)) &\le (1+o(1)) \cdot \left[2^{-t} + r^{-1/4} \cdot t \sqrt{8d}\right] \\
        &\le 2 r^{-1/5}.
    \end{align*}
    By \Cref{lemma:adv-knowledge} and a Chernoff bound, it follows that
    \[
        \Pr[\bias(H_{1-m} (x \oplus z)) \ge 3r^{-1/5}] \le \negl[n]. \qedhere
    \]
\end{proof}

\section{Ideal PRCs: the secret-key setting} \label{section:ideal-security}

This section is dedicated to extending the notion of {\em secret-key} adaptively secure PRC to what we call {\em Ideal PRC}. The corresponding definition is given in Section~\ref{sec:def-ideal}. We then show how to {\em generically} 
build such a PRC from {\em any} (secret-key) single-bit adaptively secure PRC --- such as the one from 
Section~\ref{subsection:single-bit} --- in two-steps. First, in Section~\ref{subsection:rate-boost-sk} we 
generically show how to boost the information rate of adaptively-secure PRC to become essentially optimal, almost without sacrificing the robustness. Then, in Section~\ref{subsection:ideal-security-proof}, we (also generically) show how to upgrade the secret-key adaptive security from Section~\ref{sec:def-adaptive} to our notion of ideal security in Section~\ref{sec:def-ideal}, using any PRF (which is implied by the existence of PRC, and thus does not require a new assumption). Moreover, this transformation almost preserves the information rate and the robustness, resulting in nearly optimal-rate ideal PRCs. 

\subsection{Definition}\label{sec:def-ideal}
Consider the following security games. The goal of the adversary is to determine whether it is given oracle access to the actual detector of the PRC, or to an ideal decoding oracle.

\noindent \underline{$\Greal_{\adv, \delta, \PRC}(1^\secpar)$:}
\begin{enumerate}
    \item The challenger samples $\sk \gets \PRC.\KeyGen(1^\secpar)$.
    \item The adversary is allowed to make encoding and decoding queries:
    \begin{itemize}
        \item For each encoding query $m$, the challenger responds with $\PRC.\Encode(\sk, m)$.
        \item For each decoding query $x$, the challenger responds with $\PRC.\Decode(\sk, x)$.
    \end{itemize}
    \item The adversary returns a bit $\hat{b}$.
\end{enumerate}

\noindent \underline{$\Gideal_{\adv, \delta}(1^\secpar)$:}
\begin{enumerate}
    \item The challenger sets $\transcript = \emptyset$.
    \item The adversary is allowed to make encoding and decoding queries:
    \begin{itemize}
        \item For each encoding query $m$, the challenger responds with a fresh random string $c$ and sets $\transcript = \transcript \cup \{(m, c)\}$.
        \item For each decoding query $x$, the challenger responds with a random $m$ such that $(m, c) \in \transcript$ and $\wt(c \oplus x) \leq \delta n$; if no such $m$ exists, the challenger responds with $\bot$.
    \end{itemize}
    \item The adversary returns a bit $\hat{b}$.
\end{enumerate}

\begin{definition}[Ideal secret-key PRC security] \label{definition:ideal-security-sk}
    We say that a secret-key pseudorandom code $\PRC$ satisfies \emph{ideal $\delta$-robust security} if, for any efficient adversary $\adv$,
    $$\abs{\Pr[\Greal_{\adv,\delta,\PRC}(1^\secpar) = 1] - \Pr[\Gideal_{\adv, \delta}(1^\secpar) = 1]} \leq \negl.$$
\end{definition}

\subsection{Boosting the information rate}
\label{subsection:rate-boost-sk}

We recall the linear-rate PRC from \cite{CG24}:

\begin{construction}[Linear-rate public-key PRC] \label{const:linear-rate-prcs}
    Let $\PRC_1$ be a single-bit public-key PRC with block length $\secpar$. Let $(\Enc, \Dec)$ be any error-correcting code with block length $n \ge \secpar$ and messages of length $k$. Let $\PRG : \{0,1\}^\secpar \to \{0,1\}^n$ be any pseudorandom generator. We define $\PRC_k[\PRC_1, (\Enc, \Dec), \PRG]$ which is a $k$-bit public-key PRC as follows:
    \begin{itemize}
        \item $\KeyGen_k(\secparam)$: Sample $\sk' \gets \PRC_1.\KeyGen(1^\secpar)$ and a random permutation $\pi : [\secpar^2 + n] \to [\secpar^2 + n]$. Output $\sk = (\sk', \pi)$.
        \item $\Encode_k(\sk,\m)$: Given as input a message $\m \in \{0,1\}^k$, let $s \gets \{0,1\}^\secpar$, $r \gets \PRG(s)$, and
        \[
            x \gets \PRC_1.\Encode(\pk, r_1) || \ldots || \PRC_1.\Encode(\pk, r_\secpar) || \PRG(r) \oplus \Enc(\m).
        \]
        Output $\PermBits(x, \pi)$.
        \item $\Decode_k(\sk, c')$: Let $c = \PermBits(c', \pi^{-1})$.
        Parse $c = c_1 || \ldots || c_{\secpar} || c_{\secpar+1}$ as $\secpar$ length-$\secpar$ blocks followed by a length-$n$ block.
        The decoder then computes $y_i = \PRC_1.\Decode(\sk, y_i)$ for all $i \in [\secpar]$ and lets $y = y_1 || \ldots || y_{\secpar}$.
        It then outputs $\m \gets \Dec(\PRG(y) \oplus c_{\secpar+1})$.
    \end{itemize}
\end{construction}

Observe that asymptotically as the message length increases, the information rate of $\PRC_k$ approaches that of the underlying error-correcting code $(\Enc, \Dec)$.
There exist binary, linear-rate error-correcting codes that are robust to any rate of worst-case errors $\alpha \in (0, 1/4)$ --- for instance, see \cite{JST21}.

\begin{remark*}
    This PRC is public-key in the sense that codewords are pseudorandom even to an adversary that knows the public key. 
    However, it is only \emph{secret-key} adaptively robust, to an adversary that does \emph{not} know the public key.
    Later, in \Cref{subsection:rate-boost-pk}, we construct a linear-rate PRC that is public-key adaptively robust.
\end{remark*}

\begin{theorem} \label{theorem:multi-bit}
    If $\PRC_1$ and $(\Enc, \Dec)$ are both $\alpha$ secret-key adaptively robust, then $\PRC_k[\PRC_1, (\Enc, \Dec), \PRG]$ is $\alpha - o(1)$ secret-key adaptively robust.
\end{theorem}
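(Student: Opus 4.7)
The plan is to show that whenever the adversary outputs a string $x$ within $(\alpha - o(1))N$ of some queried codeword $c^{(j^*)}$ (where $N = \lambda^2 + n$), $\Decode_k$ recovers the associated $m_{j^*}$ with overwhelming probability. Let $e = x \oplus c^{(j^*)}$ and $e' = \PermBits(e, \pi^{-1})$, and partition $e' = e'_1 \| \cdots \| e'_\lambda \| e'_{\lambda+1}$ along the block boundaries of $\Encode_k$. The key deterministic observation is that if $\wt(e'_i) \le \alpha \lambda$ for every $i \in [\lambda]$ and $\wt(e'_{\lambda+1}) \le \alpha n$, then each single-bit block decodes to $r_{j^*}[i]$ by $\alpha$-adaptive robustness of $\PRC_1$, so $y = r_{j^*}$, the outer one-time pad cancels, and $\Dec$ recovers $m_{j^*}$ from $\Enc(m_{j^*}) \oplus e'_{\lambda+1}$ by $\alpha$-robustness of $(\Enc,\Dec)$. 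Thus the proof reduces to two sub-claims: (a) with overwhelming probability no block has too many errors, and (b) each $\PRC_1$ instance whose error is within the threshold decodes correctly.

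For (a), the strategy is to exploit that $\pi$ is entirely internal to $\sk$ and is independent of $\sk'$ and of the PRG seeds, so the standard hybrid yields pseudorandomness of $\PRC_k$'s outputs even to a distinguisher that is additionally handed $\pi$ (none of the hybrid steps --- replacing each $\PRC_1.\Encode_{\sk'}(\cdot)$ by a uniform $\lambda$-bit string via multi-query $\PRC_1$ pseudorandomness, then each $\PRG(s_j)$ by uniform, then each $\PRG(r_j)$ by uniform --- interacts with $\pi$, which only appears in the outer $\PermBits$). In the fully idealized hybrid, each encoding response is a uniform $N$-bit string, so $e$ is a function of the view alone and is information-theoretically independent of $\pi$; hence $e' = \PermBits(e, \pi^{-1})$ is a uniformly random rearrangement of the bits of $e$. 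Hypergeometric concentration (\Cref{theorem:hyper}) then gives $\Pr[\wt(e'_i) > \alpha \cdot b_i] \le \exp(-2 \cdot o(1)^2 \cdot b_i)$ for each block $i$ of size $b_i \in \{\lambda, n\}$, which is negligible provided the slack is chosen as $o(1) = \omega(\sqrt{(\log \lambda)/\lambda})$ --- easily compatible with the theorem statement. A union bound over the $\lambda + 1$ blocks and the at most $q$ choices of $j^*$ preserves negligibility, and the strengthened pseudorandomness transports this bound back to the real game: a distinguisher holding $\pi$ can decide the bad-block event from $\adv_k$'s output, so the probability is preserved up to negligible terms.

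For (b), I would use a standard block-guessing reduction to $\PRC_1$'s $\alpha$-adaptive robustness. The $\PRC_1$ adversary $\adv_1$ samples $(i^*, j^*) \gets [\lambda] \times [q]$ and a permutation $\pi$, simulates $\Grobustsk_{\adv_k, \PRC_k, \alpha - o(1)}$ for $\adv_k$ using its $\PRC_1$ encoding oracle for each of the $q\lambda$ single-bit encodings appearing across all queries (sampling seeds, $\PRG$ outputs, and $\Enc$ locally), and when $\adv_k$ outputs $x$ returns the $i^*$-th $\lambda$-bit block of $\pi^{-1}(x)$ as its own response. Whenever some $(i, j)$ has $\wt(e'_i) \le \alpha \lambda$ yet $\PRC_1.\Decode$ on the $i$-th block returns a bit different from $r_j[i]$, the guess matches with probability at least $1/(q\lambda)$, so by $\PRC_1$'s adaptive robustness the event has total probability at most $q\lambda \cdot \negl = \negl$. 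Combining (a) and (b) completes the deterministic chain above and bounds $\adv_k$'s winning probability by $\negl$.

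I expect the main obstacle to be step (a): pinning down that the hidden permutation $\pi$ really does behave as if sampled freshly after $\adv_k$'s output, which requires the strengthened pseudorandomness of $\PRC_k$ with $\pi$ declared public and a careful application to the bad-block predicate (which is not expressible as a function of the adversary's view alone). Once that is in place, the $\PRC_1$ guessing reduction and the ECC invocation follow standard templates, and the quantitative tuning is just to pick the $o(1)$ slack as $\omega(\sqrt{(\log\lambda)/\lambda})$ so that hypergeometric concentration dominates the union bound over blocks and queries.
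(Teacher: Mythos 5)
Your proposal is correct and follows essentially the same route as the paper's proof: idealize the codewords via the pseudorandomness of $\PRC_1$ and $\PRG$ so that the error vector is independent of the hidden permutation $\pi$, apply hypergeometric concentration to show errors cannot concentrate on any block, transfer this back to the real game via a distinguisher that samples (or is handed) $\pi$ itself, and then conclude by the adaptive robustness of $\PRC_1$ on each block and the worst-case robustness of the outer code. Your treatment is merely more explicit than the paper's in the block-guessing reduction for step (b) and in pinning down the $o(1)$ slack.
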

\begin{proof}
    First, observe that by pseudorandomness of $\PRC_1$ we can replace the bits $r_1, \ldots, r_\secpar$ encoded under $\PRC_1$ with $0^\secpar$.
    Then, by pseudorandomness of $\PRG$, we can replace $\PRG(r)$ with a uniformly random string.

    Let $\tilde{c}$ denote the output of $\PRC_k.\Encode(\sk,\m)$ with up to an $\alpha - \epsilon$ fraction of adversarial substitutions applied, for any constant $\epsilon > 0$.
    Recall that $\tilde{c}$ consists of a permutation applied to $\secpar+1$ blocks each of length at least $\secpar$.
    We will show that the adversary cannot find errors that concentrate on any block; we then complete the proof by invoking the adaptive robustness of $\PRC_1$ and $(\Enc, \Dec)$.


    For any $c$ such that $(m,c) \in \transcript$, let $e = x \oplus c$.
    $\PermBits(e, \pi^{-1})$ consists of $\ell+1$ blocks of length at least $\secpar$.
    If the permutation were to be chosen independently of $e$, the number of errors in block $i \in [\ell]$ is a random variable $X_i \sim \Hyp(\ell n' + n, (\alpha - \epsilon)(\ell n' + n), n)$, following the hypergeometric distribution. 
    The number of errors in the last block would be distributed as $X_{\ell+1} \sim \Hyp(\ell n' + n, (\alpha - \epsilon)(\ell n' + n), n')$.
    By \Cref{theorem:hyper}, for all $i \in [\ell+1]$,
    $$\Pr\left[X_i \geq \alpha n\right] \leq e^{-2\epsilon^2 n} = \negl.$$
    By a union bound, every block would have less than an $\alpha$ fraction of errors with overwhelming probability.

    We'll next show that if an adversary can find an error that concentrates on some block more than the above bounds, one can break pseudorandomness.
    
    Consider an adversary $\adv$ that is given oracle access to $\Encode_k(\sk, \cdot)$, and returns an error vector $e$ that has at least an $\alpha$ fraction of errors on some block with non-negligible probability.
    $\adv$ can be used to construct a distinguisher $\bdv$ breaking pseudorandomness of $\PRC_k$ as follows.
    $\bdv$ chooses a random permutation $\pi$. When $\adv$ queries $\m$ to $\Encode_k(\sk, \cdot)$, $\bdv$ forwards the corresponding queries $\m_i$ to its oracle (which is either $\Encode_1$ or the uniform distribution) to obtain responses $c_i$.
    It then lets $c = c_1|| \ldots || c_{\ell+1}$ and returns $c' = \PermBits(c, \pi)$.
    When $\bdv$ obtains an error vector $e$ from $\adv$, it inverts the permutation and checks if $e$ has at least an $\alpha$ fraction of errors on any block; if so, it knows it received PRC codewords rather than true randomness.

    We've thus shown that there are less than an $\alpha$ fraction of errors on every block.
    Furthermore, these errors are chosen by an efficient adversary, since $\adv$ and $\bdv$ together run in polynomial time.
    Therefore, adaptive robustness of $\PRC_1$ implies that $\Decode_1$ recovers every bit of $r$ correctly.
    Adaptive robustness of the error-correcting code implies that $\m$ is recovered by $\Dec$.
\end{proof}

\subsection{The non-malleable transformation and ideal security}
\label{subsection:ideal-security-proof}
Let $\PRC$ be a $k$-bit pseudorandom code with adaptive robustness up to $\delta$ errors. 
Let $F: \{0,1\}^* \to \{0,1\}^\secpar \times \{0,1\}^\secpar$ be a PRF.
Then there exists a secret-key PRC, say $\PRCsharp$, satisfying $\delta$-ideal security.

\noindent $\PRCsharp[\PRF, \PRC, \delta] = (\KeyGen^{sharp}, \Encode^{sharp}, \Decode^{sharp})$:
\begin{itemize}
    \item $\KeyGen^{sharp}(1^\secpar)$: Sample $\PRF.\sk \gets \PRF.\KeyGen(\secparam)$ and $\PRC.\sk \gets \PRC.\KeyGen(1^\secpar)$. Output $\sk = (\PRF.\sk, \PRC.\sk)$.
    
    \item $\Encode^{sharp}(\sk, \m)$: Sample $r \from \{0,1\}^\secpar$ and let $(R_1, R_2) = F_{\PRF.\sk}(r||\m)$.
     
    Output $\PRC.\Encode(\PRC.\sk, r||\m||R_2; R_1)$.
    
    \item $\Decode^{sharp}(\sk, c)$: Compute $r||\m||R_2 = \PRC.\Decode(\PRC.\sk, c)$ and let $(R_1, \tilde{R}_2) = F_{\PRF.\sk}(r||m)$. If
    \[
        \tilde{R}_2 = R_2 \text{ and }\wt(\PRC.\Encode(\PRC.\sk, r||\m||R_2; R_1) \oplus c) \le \delta n,
    \]
    output $\m$; otherwise, output $\bot$.
\end{itemize}

Now $\PRCsharp$ behaves almost identically to $\PRC$, except that it has a sharp decoding threshold. That is, any string that is more than $\delta$-far from a codeword will always be rejected. This is important because it allows our simulator to simply reject all inputs that are more than $\delta$ far from any of the codewords output so far.

\begin{theorem} \label{theorem:ideal-security-sk}
    Let $\PRC$ be any $k$-bit pseudorandom code that is $\delta$ adaptively robust, and let $\PRF$ be any pseudorandom function. Then $\PRCsharp[\PRF, \PRC, \delta]$ satisfies $\delta$ ideal security.
\end{theorem}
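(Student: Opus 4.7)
The plan is to prove $\delta$-ideal security via a hybrid argument interpolating between $\Greal_{\adv, \delta, \PRCsharp}$ and $\Gideal_{\adv, \delta}$. I would define four hybrids: $H_0 = \Greal_{\adv, \delta, \PRCsharp}$; $H_1$, which replaces the PRF $F_{\PRF.\sk}$ with a truly random function $f : \{0,1\}^* \to \{0,1\}^\secpar \times \{0,1\}^\secpar$; $H_2$, which further replaces the real decoder $\Decode^{sharp}$ with the alternate decoder $\Decode^{\mathsf{alt}}$ that, on input $c$, returns $m_i$ if the transcript contains some $(m_i, c_i)$ with $\wt(c \oplus c_i) \leq \delta n$ and $\bot$ otherwise; and $H_3 = \Gideal_{\adv, \delta}$, obtained from $H_2$ by replacing each encoder output with a uniformly random string. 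The $H_0 \to H_1$ step is immediate from PRF security. The $H_2 \to H_3$ step follows from pseudorandomness of $\PRC$: because $\Decode^{\mathsf{alt}}$ needs only the transcript and no secret key, the reduction can simulate the whole game using only an encoder oracle, and each $\PRCsharp$-codeword is a $\PRC$-encoding of a uniform message with uniform randomness (since $r$, $R_1$, $R_2$ are all uniform under random $f$ and distinct $r$'s).

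The core technical step is $H_1 \to H_2$, where I must show that on each decoding query $c$, $\Decode^{sharp}(c) = \Decode^{\mathsf{alt}}(c)$ with overwhelming probability. Disagreements split into two cases. \textbf{Case A:} $c$ lies within Hamming distance $\delta n$ of some $c_i$ in the transcript but $\Decode^{sharp}(c) \neq m_i$. The key observation is that if $\PRC.\Decode(c) = r_i \| m_i \| R_2^{(i)}$, then the tag check $\tilde{R}_2 = f(r_i \| m_i)_2 = R_2^{(i)}$ passes and the re-encoded codeword is exactly $c_i$, which is $\delta n$-close to $c$ by assumption, so $\Decode^{sharp}(c) = m_i$ -- a contradiction. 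Thus Case A forces $\PRC.\Decode(c) \neq r_i \| m_i \| R_2^{(i)}$, directly violating the adaptive $\delta$-robustness of $\PRC$ on the transcript entry $(r_i \| m_i \| R_2^{(i)}, c_i)$. \textbf{Case B:} $c$ is $\delta n$-far from every $c_i$ yet $\Decode^{sharp}(c)$ outputs some $m' \neq \bot$, meaning $\PRC.\Decode(c) = r' \| m' \| R_2'$ with $R_2' = f(r' \| m')_2$. If $(r', m')$ coincides with some encoding query's $(r_i, m_i)$ then the re-encoded codeword is $c_i$ and the proximity check forces $c$ close to $c_i$, contradicting Case B. Otherwise $f(r' \| m')_2$ is a fresh uniform $\secpar$-bit string essentially independent of $R_2'$, so the tag check passes with probability at most $2^{-\secpar}$, and a union bound over decoding queries finishes this case.

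I expect the main obstacle to be formalizing the Case A reduction, since $\Grobustsk[\PRC]$ allows the simulator only one challenge output while the adversary makes polynomially many decoding queries. I would handle this via an inner hybrid over the $j$-th decoding query: the simulator answers the first $j - 1$ decoding queries using $\Decode^{\mathsf{alt}}$ (which needs no secret key) and forwards the $j$-th query as its challenge to $\Grobustsk[\PRC]$. Conditioned on no prior bad events, $\Decode^{\mathsf{alt}}$ and $\Decode^{sharp}$ agree on queries $1, \dots, j - 1$, so the simulation faithfully reproduces $H_1$ up through query $j$, and the probability of a Case A bad event at query $j$ is bounded by the adaptive-robustness advantage of $\PRC$. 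Summing over $j$ yields a negligible total. A subtlety in Case B is that $(r', m')$ could be queried to $f$ via an earlier \emph{decoding} query rather than an encoding query; this case either coincides with an earlier bad event or leaks at most a single bit about $f(r' \| m')_2$, so a polynomial union bound preserves the $2^{-\secpar}$-style bound.
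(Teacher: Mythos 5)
Your proposal is correct and follows essentially the same route as the paper's proof: replace the PRF with a random function, then argue query-by-query (your inner hybrid over the $j$-th decoding query is exactly the paper's hybrids $\calH^{i-1}\to\calH^i$) that the real sharp decoder agrees with the transcript-only decoder via the same two cases — adaptive robustness of $\PRC$ when the query is $\delta n$-close to a transcript codeword, and the tag/proximity check otherwise — and finally invoke pseudorandomness of $\PRC$, which is possible precisely because the transcript-only decoder needs no key. Your explicit handling of $f$-values touched by earlier decoding queries in Case B is a minor refinement of a point the paper's sketch glosses over, but the argument is otherwise the same.
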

\begin{proof}[Proof Sketch]
    We use a hybrid argument. Suppose that the adversary makes at most $T$ queries to the decoding oracle. For $i \in 0, \dots, T$, define the following hybrid game.
    Let $\Encode^f$ and $\Decode^f$ denote the encoding and decoding algorithms of $\PRCsharp$, where its PRF is replaced with a random function $f$.
    
    \noindent \underline{$\calH^i_{\adv, \PRCsharp, \delta}(1^\secpar)$:}
    \begin{enumerate}
        \item The challenger samples $\sk \gets \PRCsharp.\KeyGen(1^\secpar)$
        \item The challenger sets $q = 1$ and $\transcript = \emptyset$.
        \item The adversary is allowed to make encoding and decoding queries.
        \begin{itemize}
            \item For each encoding query $m$:
            \begin{enumerate}
                \item The challenger computes $c \gets \PRCsharp.\Encode^f(\sk, m)$.
                \item The challenger lets $\transcript = \transcript \cup \{(m, c)\}$.
                \item The challenger responds with $c$.
            \end{enumerate}
            \item For each decoding query $x$:
            \begin{enumerate}
                \item If $q \le i$, the challenger responds with a random $m$ such that $(m, c) \in \transcript$ and $\wt(c \oplus x) \leq \delta n$; if no such $m$ exists, the challenger responds with $\bot$.
                \item If $q > i$, the challenger responds with $\PRCsharp.\Decode^f(\sk, x)$.
                \item The challenger increments $q$, setting $q = q + 1$.
            \end{enumerate}
        \end{itemize}
        \item The adversary returns a bit $\hat{b}$.
    \end{enumerate}

\paragraph{$\Greal to \calH^0$.} Observe that $\calH^0$ is identical to $\Greal$, but with the PRF replaced with a random function.
This is indistinguishable from $\Greal$ by security of the PRF.

\paragraph{$\calH^{i-1}$ to $\calH^i$.} For each $i \in [T]$, we show that the adversary's views in hybrids $\calH^{i-1}$ and $\calH^i$ are statistically close by showing that the response to the adversary's $i$th decoding query is the same in both cases.

Let $x$ be the adversary's $i$th decoding query, and let $\transcript$ be as defined in $\calH^i$ at the time that the adversary makes the $i$th decoding query.
We show consider two cases, showing that in both cases the response in $\calH^{i}$ is the same as in $\calH^{i-1}$ with high probability:
\begin{enumerate}
    \item There exists $(m,c) \in \transcript$ such that $\wt(c \oplus x) \leq \delta n$. In this case, with overwhelming probability, there is a unique such $m$ and $\PRCsharp.\Decode^f(\sk, x) = m$. This follows from adaptive robustness of $\PRCsharp$.
    \item There is no $(m,c) \in \transcript$ such that $\wt(c \oplus x) \leq \delta n$. In this case, because of the sharp decoding threshold of $\PRCsharp$, we have $\PRCsharp.\Decode^f(\sk,x) = \bot$ with overwhelming probability.
\end{enumerate}

\paragraph{Proof of (1).} 
Let $x_i$ be the adversary's $i^{\text{th}}$ decoding query, and consider the challenger's behavior in $\calH^i$ in the case that $\wt(c \oplus x_i) \leq \delta n$ for some $(m,c) \in \transcript$.
Let $m' = \PRCsharp.\Decode^f(\sk, x_i)$ be the challenger's response, and observe that by adaptive robustness, $m' = m$.
Note that we can invoke adaptive robustness (where the adversary is only given a single decoding query) because the responses to the decoding queries before $i$ can be computed from the transcript alone.

In $\calH^{i-1}$, the challenger responds with a random $m_j$ of the $(m_j, c_j)$ in the transcript such that $\wt(c_j \oplus x_i) \leq \delta n$.
By the above argument, $m_j = m'$ for all such $j$, and therefore the challenger responds with $m'$ in both $\calH^i$ and $\calH^{i-1}$.

\paragraph{Proof of (2).}
Let $x_i$ be the adversary's $i^{\text{th}}$ decoding query, and consider the challenger's behavior in $\calH^i$ in the case that $\wt(c \oplus x_i) > \delta n$ for all $(m,c) \in \transcript$.
The challenger first computes $r_i||m_i||R_2 = \PRCsharp.\Decode^f(\PRC.\sk, x_i)$.
It then computes $(R_1, \tilde{R_2}) \gets f(r_i||m_i)$.
Suppose for the sake of contradiction the challenger does not output $\bot$; this implies that $\tilde{R_2} = R_2$.
Furthermore, letting $\tilde{x_i} = \PRC.\Encode(\PRC.\sk, r_i||m_i||R_2; R_1)$, we have that $\wt(\tilde{x_i} \oplus c) \leq \delta n$.
If $r_i||m_i$ has not been queried to $f$ yet, $\tilde{R_2} = R_2$ holds with only negligible probability.
Therefore, $r_i||m_i$ must have been queried to $f$, and the challenger must have made this query in responding to one of the adversary's encoding queries.
That is, the challenger must have added to the transcript $c = \PRC.\Encode(\PRC.\sk, r_i||m_i||R_2; R_1)$ where $(R_1, R_2) = f(r_i||m_i)$.
Observe that $c = \tilde{x_i}$, which is within distance $\delta n$ of $x_i$; we have thus arrived at a contradiction.

Therefore, in $\calH^i$ the challenger responds with $\bot$ if $x_i$ is far from all codewords in the transcript, which is identical to the challenger's behavior in $\calH^{i-1}$.

\paragraph{$\calH^T$ to $\Gideal$.} 
Observe that $\calH^T$ is identical to $\Gideal$, except that in $\calH^T$ the challenger responds to encoding queries according to $\PRCsharp.\Encode^f$.
Observe that $\PRCsharp.\Encode^f$ is identical to $\PRC.\Encode$, provided that $\PRCsharp.\Encode^f$ never samples the same $r$ more than once.
As some $r$ is repeated with only negligible probability, $\calH^T$ is indistinguishable from a hybrid ${\calH^T}'$ where $\PRCsharp.\Encode^f$ is replaced by $\PRC.\Encode$.
Now, observe that the challenger in ${\calH^T}'$ makes only encoding (and not decoding) queries to $\PRC$; by pseudorandomness of $\PRC$ the challenger's responses are indistinguishable from uniform randomness.

This completes the proof.
\end{proof}

\section{CCA security: the public-key setting}
\label{section:cca-security}

This section is dedicated to extending the notion of {\em public-key} adaptively secure PRC to what we call {\em CCA-secure (public-key) PRC}.
The corresponding definition is given in Section~\ref{subsection:cca-definition}.
We then show how to {\em generically} build such a CCA-secure PRC from {\em any} public-key single-bit adaptively secure PRC --- such as the one from Section~\ref{subsection:single-bit} --- in two-steps.
First, in Section~\ref{subsection:rate-boost-pk} we generically show how to boost the information rate of adaptively-secure public-key PRC to become become a constant.
Unfortunately, this suffers from worse --- but still constant --- robustness as compared to the secret-key transformation described in~\Cref{subsection:rate-boost-sk}.
Then, in Section~\ref{subsection:cca-scheme}, we (also generically) show how to upgrade the public-key adaptive security from Section~\ref{sec:def-adaptive} to our notion of CCA security in Section~\ref{subsection:cca-definition}.
This transformation preserves the information and error rates of the corresponding adaptively-secure public-key PRC.
However, we are only able to prove its security in the random oracle model.

\subsection{CCA security definition}
\label{subsection:cca-definition}

\noindent We present a stronger definition of public-key PRC that captures both pseudorandomness and adaptive robustness.
We refer to this security definition as CCA (Chosen Codeword Attack) security, in analogy with the related Chosen Ciphertext Attack security definition of normal encryption schemes.
\par Consider the following security games defined for a pseudorandom code $\PRC=(\PRC.\KeyGen,\PRC.\Encode,\allowbreak \PRC.\Decode)$. The goal of the adversary is to determine whether it is given oracle access to the encryption oracle or to a random codeword oracle, even in the presence of the decoding oracle. Simultaneously, robustness is ensured by modifying the decoding oracle in the ``random case'' to always output the message for codewords close to previously returned codewords. 

\noindent \underline{$\Gpub_{\adv, \delta, \PRC}(1^\secpar,b)$:}
\begin{enumerate}
    \item The challenger samples $(\pk,\sk) \gets \PRC.\KeyGen(1^\secpar)$. It sends $\pk$ to $\adv$. 
    \item The adversary is allowed to make encoding and decoding queries. Depending on the value of the bit $b$, the challenger responds. 
    \begin{itemize} 
    \item If $b=0$: 
    \begin{itemize}
        \item For each encoding query $m$, the challenger responds with $\PRC.\Encode(\pk, m)$.
        \item For each decoding query $x$, the challenger responds with $\PRC.\Decode(\sk, x)$.
    \end{itemize}
    \item If $b=1$: 
    \begin{itemize}
        \item For each encoding query $m$, the challenger responds with a fresh random string $c$ and sets $\transcript = \transcript \cup \{(m, c)\}$.
        \item For each decoding query $x$, the challenger responds with a random $m$ such that $(m, c) \in \transcript$ and $\wt(c \oplus x) \leq \delta n$. If no such $m$ exists, the challenger responds with $\PRC.\Decode(\sk, x)$.
    \end{itemize}
    \end{itemize} 
    \item The adversary returns a bit $\hat{b}$.
\end{enumerate}

\begin{definition}[CCA security for PRCs] \label{definition:cca-security}
    We say that a secret-key pseudorandom code $\PRC$ satisfies $\delta$-\emph{CCA security} if, for any efficient adversary $\adv$,
    $$\abs{\Pr[\Gpub_{\adv,\delta,\PRC}(1^\secpar,0) = 1] - \Pr[\Gpub_{\adv, \delta,\PRC}(1^\secpar,1) = 1]} \leq \negl.$$
\end{definition}

\begin{remark*}
    Notice that when $\delta=0$, this definition is equivalent to the (simulation-based)  notion of CCA-security {\em with pseudorandom ciphertexts}, where the challenge ciphertext is either the encryption of the challenge message, or random. Normally, the attacker is prohibited from decrypting such challenge ciphertext, which is of course equivalent to modifying the decryption oracle to return the challenge message even in the random case. 
\end{remark*}

\subsection{Boosting the information rate}
\label{subsection:rate-boost-pk}
We now show that one can boost an adaptively robust single-bit public-key PRC to a one with a linear information rate.
This transformation incurs a significant reduction (from $1/4$ to at best $1/32$) in the (albeit still constant) error rate tolerated, which we do not attempt to optimize.

Let $\PRC_1 = (\PRC_1.\KeyGen, \PRC_1.\Encode, \PRC_1.\Decode)$ be a single-bit public-key PRC with codewords of length $\secpar$ that is adaptively robust up to error rate $\beta \in (0, 1/4)$. 
Let $\ECC_1 = (\Enc_1, \Dec_1)$ and $\ECC_2 = (\Enc_2, \Dec_2)$ be error-correcting codes from $\{0,1\}^{\secpar}$ to $\{0,1\}^{\ell j}$ for $\ell \geq \secpar$, and $\{0,1\}^{\secpar \ell}$ to $\{0,1\}^{\secpar \ell j}$, that are robust up to error rate $\alpha \in (0, 1/4)$. 
There are many standard codes which achieve this desired adaptive robustness for any message length, with $j$ equal to a constant.
We also let $\PRG : \{0,1\}^\secpar \to \{0,1\}^{\secpar \ell j}$ be a pseudorandom generator.

\par Then we define a public-key PRC, $\PRC_k = (\KeyGen_k, \Encode_k, \Decode_k)$, which for $k = \secpar \ell$ encodes $k$-bit messages into $2k j$-bit codewords, as follows:

\noindent $\PRC_k[\ECC_1, \ECC_2, \PRC_1, \PRG] = (\KeyGen_k, \Encode_k, \Decode_k)$:
\begin{itemize}
    \item $\KeyGen_k(1^\secpar)$: Output $(\pk,\sk) \gets \PRC_1.\KeyGen(1^\secpar)$.
    
    \item $\Encode_k(\pk, \m)$: Draw a random $r \gets \{0,1\}^{\secpar}$, and let $r' = \Enc_1(r)$.
    For each $i \in [k]$, let $x_i \gets \PRC_1.\Encode(\pk, r'_i)$.
    Output
    \[
    x_1 || \ldots || x_k || \PRG(r) \oplus \Enc_2(\m).
    \]
    
    \item $\Decode_k(\sk, c)$:
    Parse $c$ as $k$ blocks of length $n$, $x_1 || \ldots || x_{k}$, followed by a block $y$ of length $nk$.
    Let $s_i = \PRC_1.\Decode(\sk, x_i)$ for each $i \in [k]$.
    Let $r = \Dec_1(s_1 || \ldots || s_k)$.
    Output 
    \[
    \Dec_2(\PRG(r) \oplus y).
    \]
\end{itemize}

\begin{theorem} \label{theorem:rate-boost-pk}
    If $\ECC_1$ and $\ECC_2$ be error-correcting codes which are (worst-case) robust to substitutions up to rate $\alpha \in (0, 1/4)$, and $\PRC_1$ is a pseudorandom code which is adaptively robust to substitutions up to rate $\beta \in (0,1/4)$. Then $\PRC_k$ above is adaptively robust to substitutions, up to error rate $\alpha \beta/2$.
    Furthermore $\PRC_k$ has a linear information rate.
\end{theorem}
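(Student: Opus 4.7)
The proof has two parts: the information rate claim, which is immediate, and adaptive robustness, which is the heart of the argument. For the rate, note that a message of length $k = \secpar\ell$ is encoded into the $k$ PRC blocks of length $\secpar$ each (total $\secpar^2\ell$) concatenated with a length-$\secpar\ell j$ block, which up to the choice of parameters is a codeword of length $\Theta(k)$; the rate is thus $\Omega(1/j) = \Omega(1)$. The rest of this proof proposal concerns adaptive robustness.

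\textbf{Decomposition into two halves.} Write the codeword as $c = c_{\mathrm{fh}} \| c_{\mathrm{sh}}$, where $c_{\mathrm{fh}} = x_1\|\dots\|x_k$ is the concatenation of PRC blocks and $c_{\mathrm{sh}} = \PRG(r) \oplus \Enc_2(\m)$. Both halves have length $L = \Theta(k\secpar) = \Theta(\secpar\ell j)$ (up to matching parameters). Given an adversary that submits $(\m, r_{\adv}, x)$ with $e = x \oplus c$ of weight at most $\delta \cdot 2L = \alpha\beta L$, the total number of errors on either half is at most $\alpha\beta L$; equivalently, the rate on each half is at most $\alpha\beta < \alpha$. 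This immediately gives what we need for the second half: provided $r$ is correctly recovered from the first half, we can compute $\PRG(r) \oplus (c_{\mathrm{sh}} \oplus e_{\mathrm{sh}}) = \Enc_2(\m) \oplus e_{\mathrm{sh}}$, and since $\wt(e_{\mathrm{sh}})/L \leq \alpha\beta < \alpha$, the decoder $\Dec_2$ recovers $\m$.

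\textbf{First-half analysis via counting and union bound.} The interesting case is the first half. Classify each PRC block $i \in [k]$ as \emph{heavy} if the restriction $e^{(i)}$ of $e$ to block $i$ has weight exceeding $\beta\secpar$, and \emph{light} otherwise. Since the total error on the first half is at most $\alpha\beta L = \alpha\beta k\secpar$, at most $\alpha k$ blocks can be heavy. Next, for each $i \in [k]$, we argue that the light block $i$ decodes correctly with overwhelming probability. This proceeds by a reduction to the (public-key, single-block) adaptive robustness of $\PRC_1$: define $\bdv_i$ that receives the $\PRC_1$ public key $\pk$, internally simulates the $\PRC_k$ adversary $\adv$ (sampling $r$, computing $\Enc_1(r)$, running $\PRC_1.\Encode$ on every block, forming $c_{\mathrm{sh}}$, etc.), observes the output $(\m, r_{\adv}, x)$, and submits to its own challenger the triple $((\Enc_1(r))_i, r^{(i)}, x^{(i)})$, where $r^{(i)}$ is the randomness that $\Encode_k$ used for the $i$-th PRC block and $x^{(i)}$ is the $i$-th block of $x$. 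Whenever block $i$ is light and decodes incorrectly, $\bdv_i$ wins $\Grobustpk_{\bdv_i, \PRC_1, \beta}$, so by adaptive robustness of $\PRC_1$ this event occurs with negligible probability. A union bound over the $k = \poly(\secpar)$ blocks shows that with overwhelming probability, every light block decodes correctly.

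\textbf{Assembling the pieces.} Conditioning on this overwhelming event, the string $s_1\|\dots\|s_k$ obtained by applying $\PRC_1.\Decode$ to each block differs from $\Enc_1(r)$ in at most the positions corresponding to heavy blocks, i.e., in at most an $\alpha$ fraction of positions. Robustness of $\ECC_1$ then yields $r$, and the second-half argument above yields $\m$, contradicting $\PRC_k.\Decode(\sk, x) \neq \m$. The main obstacle is the reduction step: making sure $\bdv_i$ can faithfully simulate $\adv$'s view using only $\pk$ (which it can, since all other PRC blocks are generated from the same $\pk$ with fresh randomness, and everything else is public computation), and handling the fact that $\adv$'s chosen error on block $i$ may depend on all other blocks --- precisely why we reduce to \emph{public-key} adaptive robustness (Definition \ref{definition:adaptive-robustness-pk}) rather than the secret-key notion.
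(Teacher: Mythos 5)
Your proposal is correct and follows essentially the same route as the paper's proof: split the codeword into the two equal-length halves, observe the error rate on each half is at most $\alpha\beta$, note that at most an $\alpha$ fraction of $\PRC_1$ blocks can exceed error rate $\beta$, invoke adaptive robustness of $\PRC_1$ for the remaining blocks and robustness of $\ECC_1$, $\ECC_2$ to recover $r$ and then $\m$. The only difference is that you spell out the per-block reduction to public-key adaptive robustness and the union bound explicitly, which the paper leaves implicit.
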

\begin{proof}
    Let $x_1, \ldots, x_k, y$ denote the blocks of a given codeword as defined in $\Decode_k$.
    Recall that $x_1|| \ldots || x_k$ has length $nk$, and $y$ has length $nk$ as well.

    We'll first show that we can recover $r$ from $x_1, \ldots, x_k$.
    Since these blocks make up half of the codeword, the adversary can introduce a rate of $\alpha\beta$ errors here.
    Therefore, at most an $\alpha$ fraction of blocks $x_i$ have an error rate of greater than $\beta$.
    By adaptive robustness of $\PRC_1$, this $1-\alpha$ fraction of blocks all decode correctly to $r'_i$.
    By adaptive robustness of $\ECC_1$, decoding recovers $r$ from the $r'_i$'s.

    We now show that given $r$, one can recover the message from $y$.
    Since $y$ makes up half of the codeword, the adversary can introduce at most an $\alpha \beta < \alpha$ rate of errors to $y$.
    Therefore, adaptive robustness of $\ECC_2$ ensures that $\Dec_2(\PRG(r) \oplus y)$ yields $\m$.

    Finally, observe that $\PRC_k$ encodes $k$-bit messages using codewords of length $2k j$.
\end{proof}

\subsection{Construction and security proof}\label{subsection:cca-scheme}
\noindent Inspired by the Fujisaki-Okamoto (FO) transformation~\cite{FO99} from semantic security to CCA security, we present a transformation from public-key adaptively robust PRC to public-key CCA-secure PRC in the random oracle model.  

\renewcommand{\PRCsharppub}{{\sf PRC}^{\sf CCA}}
\paragraph{Construction.} Let $\PRC$ be a {\em public-key} pseudorandom code for multi-bit messages with adaptive robustness up to $\delta$ errors. Let ${\cal H}_{n,m}=\left\{H:\{0,1\}^n \rightarrow \{0,1\}^m \right\}$ be a hash function family. 
We define a public-key pseudorandom code $\PRCsharppub$, satisfying $\delta$-CCA security, below.

\begin{itemize}
    \item $\KeyGen^{{\sf CCA}}(1^\secpar)$: Sample $H \leftarrow {\cal H}_{n,m}$, where polynomials $n(\secparam) \geq \lambda$, $m(\secparam) \geq 2\lambda$ and $(\PRC.\pk,\PRC.\sk) \gets \PRC.\KeyGen(1^\secpar)$. Output $\sk = \PRC.\sk$ and $\pk = (\PRC.\pk, H)$.
    
    \item $\Encode^{{\sf CCA}}(\pk, \m)$: Sample $r \from \{0,1\}^\secpar$ and let $(R_1, R_2) = H(r||\m)$, where $R_1 \geq \lambda$ and $R_2 \geq \lambda$.
     
    Output $\PRC.\Encode(\PRC.\pk, r||\m||R_2; R_1)$.
    
    \item $\Decode^{{\sf CCA}}(\sk, c)$: Compute $r||\m||R_2 = \PRC.\Decode(\PRC.\sk, c)$ and let $(R_1, \tilde{R}_2) = H(r||m)$. If
    \[
        \tilde{R}_2 = R_2 \text{ and }\wt(\PRC.\Encode(\PRC.\pk, r||\m||R_2; R_1) \oplus c) \le \delta n,
    \]
    output $\m$; otherwise, output $\bot$.
\end{itemize}

\noindent We show that the above construction satisfies CCA security. 

\begin{theorem}
\label{thm:prcsharppub:cca}
$\PRCsharppub$ satisfies $\delta$-CCA security. 
\end{theorem}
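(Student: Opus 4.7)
The plan is to prove CCA security by a four-game hybrid argument, adapting the Fujisaki--Okamoto analysis to the adaptive robustness setting. Let $G_0 = \Gpub_{\adv, \delta, \PRCsharppub}(\secparam, 0)$ be the real game and $G_3 = \Gpub_{\adv, \delta, \PRCsharppub}(\secparam, 1)$ be the random game. In $G_1$ the challenger replaces $\Decode^{\sf CCA}$ by an alternate decoder $\mathsf{AltDecode}$ that does \emph{not} use the underlying $\PRC$ secret key: on input $x$, it scans the encoding-query transcript and also the list of adversarial random-oracle queries $(r\|m) \mapsto (R_1, R_2)$, reconstructs the candidate codewords $c^* = \PRC.\Encode(\PRC.\pk, r\|m\|R_2; R_1)$ from the latter, and returns the message associated with any stored or reconstructed codeword within Hamming distance $\delta n$ of $x$, or $\bot$ otherwise. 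In $G_2$ the challenger additionally replaces every encoding response with a fresh uniformly random string.

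The transitions $G_0 \approx G_1$ and $G_2 \approx G_3$ both rest on a single event bound combined with a reduction to adaptive robustness. Let $\mathsf{NeverQueried}$ be the event that, for some decoding query $x$, the inner decoder $\PRC.\Decode(\PRC.\sk, x)$ returns a triple $(r, m, R_2)$ with $R_2 = H(r\|m)_2$ but $(r\|m)$ was never submitted to $H$. Because $H$ is a random oracle and the second component of its output is at least $\lambda$ bits long, a union bound over polynomially many queries yields $\Pr[\mathsf{NeverQueried}] \le \negl$. Conditioning on its complement, whenever $\Decode^{\sf CCA}$ returns a non-$\bot$ value $m$ on input $x$, the pair $(r\|m)$ is in the adversary's $H$-query table, so $\mathsf{AltDecode}$ sees the corresponding reconstructed $c^*$ within $\delta n$ of $x$ and returns the same $m$. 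Conversely, if $\mathsf{AltDecode}$ finds some close candidate $c^*$ (from the transcript or from an $H$ query) while $\Decode^{\sf CCA}$ disagrees, then $(c^*, x)$ witnesses a violation of public-key adaptive $\delta$-robustness of $\PRC$: a reduction holding only $\PRC.\pk$ can simulate the full CCA experiment by lazily answering the random oracle and running $\PRC.\Encode(\PRC.\pk, \cdot; \cdot)$ itself, and then outputs $(r\|m\|R_2, R_1, x)$ as the forgery in $\Grobustpk$ (\Cref{definition:adaptive-robustness-pk}).

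For the middle transition $G_1 \approx G_2$, the $G_1$ challenger invokes $\PRC.\sk$ only to produce encoding-query responses, since $\mathsf{AltDecode}$ is key-free. A brief sub-hybrid first re-samples the encoding randomness $R_1$ for each encoding query uniformly and independently of $H$, programming $H(r\|m)$ after the fact to expose that fresh $R_1$; this is perfectly indistinguishable in the random oracle model because the adversary has not queried $H$ on the freshly sampled $r$ prior to the encoding query, except with negligible probability. In the resulting hybrid the encoding oracle is just $\PRC.\Encode(\PRC.\pk, \cdot)$ with independent fresh randomness, so we reduce directly to the pseudorandomness of $\PRC$: the reduction forwards each encoding query to its own challenge oracle (real or uniform) and answers decoding queries via $\mathsf{AltDecode}$, which the reduction can simulate using only $\PRC.\pk$, the stored transcript, and the $H$-query list. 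The main obstacle I anticipate is in the first and third transitions, where multiple candidate codewords may simultaneously lie within $\delta n$ of a single $x$: we must argue that they all carry the same message (else a pair of them directly contradicts adaptive robustness of $\PRC$) and set up the reduction so that the single witness $(r\|m\|R_2, R_1, x)$ is well-defined, while showing that all other oracle responses can be faithfully simulated from $\PRC.\pk$ together with a lazily-programmed random oracle.
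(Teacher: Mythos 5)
Your proposal follows essentially the same route as the paper's proof: the same ${\sf NeverQueried}$ bound from random-oracle unpredictability, the same key-free $\mathsf{AltDecode}$ (scanning the encoding transcript plus codewords reconstructed from the adversary's $H$-queries) justified by a reduction to public-key adaptive robustness (\Cref{definition:adaptive-robustness-pk}), and the same final switch of the encoding responses to uniform strings via pseudorandomness of $\PRC$, which is enabled precisely because decoding no longer needs $\PRC.\sk$. The paper merely books the abort events as separate hybrids and mirrors the $\mathsf{AltDecode}$ argument on the $b=1$ side, which you fold into your $G_0\approx G_1$ and $G_2\approx G_3$ transitions.

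Two steps are thinner than they need to be. First, in the middle transition you only rule out the adversary querying $H$ on $(r\Vert \m)$ \emph{before} the corresponding encoding query; your reduction to pseudorandomness also breaks down if the adversary queries $(r_i\Vert m_i)$ \emph{after} seeing $c_i$, since the reduction cannot program $H(r_i\Vert m_i)$ to be consistent with the challenge oracle's unknown (or, in the uniform case, nonexistent) encoding randomness. The paper handles this with the event ${\sf BadDecodingQuery}$: when codewords are uniform, each $r_i$ is information-theoretically hidden, so such a query occurs with probability at most $\ell/2^{\secpar}$, and a separate reduction shows that if it occurred with non-negligible probability in the pseudorandom world this would itself break pseudorandomness of $\PRC$. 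You need this (or an equivalent) argument for your ``program $H$ after the fact'' sub-hybrid to compose with the reduction. Second, your adaptive-robustness reduction must submit a single witness $(r\Vert\m\Vert R_2,\ R_1,\ x)$ in $\Grobustpk$, but without $\PRC.\sk$ it cannot tell on which decoding query $\mathsf{AltDecode}$ and the real decoder first disagree; the standard fix, used in the paper, is to guess that index uniformly among the $\ell$ decoding queries, losing a factor of $\ell$ in the advantage. You flag this as an obstacle but leave it unresolved; with these two patches your argument coincides with the paper's proof.
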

\begin{proof}
Suppose $\adv$ is a probabilistic polynomial time adversary participating in the experiment $\Gpub_{\adv, \delta, \PRCsharppub}(1^\secpar,b)$. We establish some notation first.
\begin{itemize}
    \item Let $\ell$ be the number of encoding and decoding queries made by $\adv$; we can assume that the number of encoding and decoding queries is the same without loss of generality. Recall that $\adv$ is allowed to intersperse the encoding and decoding queries arbitrarily. 
    \item Denote $(m_1,\ldots,m_{\ell})$ to be the sequence of (adaptive) encoding queries made by $\adv$. 
    \item For the $i^{th}$ encoding query $m_i$, the challenger does the following: it picks $r_i \from \{0,1\}^\secpar$. It then queries $(m_i,r_i)$ to the random oracle (RO) to obtain $H(m_i||r_i)=((R_1)_{i}||(R_2)_{i})$. Then it sets  $c_i=\Encode^{{\sf CCA}}(\pk,(r_i||m_i||(R_2)_{i});(R_1)_{i})$. It responds to $\adv$ with $c_i$. Thus, $(c_1,\ldots,c_{\ell})$ is the list of responses to the encoding queries. 
       \item Denote $Q_{Ch}^{{\sf RO}} = \left\{ (m_1,r_1),\ldots,(m_{\ell},r_{\ell})\right\}$ to be the set of RO queries made by the challenger during the encoding queries.
    \item Denote $(c'_1,\ldots,c'_{\ell})$ to be the sequence of (adaptive) decoding queries.
    \item For the $i^{th}$ decoding query $c'_i$, the challenger does the following: it returns the output of $\Decode^{{\sf CCA}}(\sk,c'_i)$ to $\adv$. In more detail, $\Decode^{{\sf CCA}}$ first computes $(m'_i,r'_i,(R_2)'_i) \leftarrow \PRC.\Decode(\sk,c'_i)$ and then returns $m'_i$ to $\adv$ if and only if $(R_2)'_i=(R_2)''_i$, where $H(m_i||r_i)=((R_1)''_i||(R_2)''_i)$. Otherwise, it returns $\bot$. Denote $Q_{Ch}^{{\sf Resp}}$ to be the set of responses to the adversarial decoding queries. That is, $Q_{Ch}^{{\sf Resp}}=\{m'_1,\ldots \allowbreak,m'_q \}$. Let $Q_{Ch}^{{\sf Dec}}$ be the set of decoded outputs of the adversarial decoding queries. That is, $Q_{Ch}^{{\sf Dec}}=\{(m'_1,r'_1,(R_2)'_1),\ldots \allowbreak,(m'_q,r'_q,(R_2)'_q) \}$. Note that $Q_{Ch}^{{\sf Resp}}$ can be obtained by considering the first components of all the elements in $Q_{Ch}^{{\sf Dec}}$.

    \item Denote $Q_{\adv}^{{\sf RO}} = \left\{(m''_1,r''_1),\ldots,(m''_Q,r''_Q) \right\}$ to be the set of RO queries made by the adversary.
    
\end{itemize}

\newcommand{\hybrid}{\mathsf{Hyb}}
\newcommand{\baddecodingquery}{\mathbf{BadDecQuery}}

\noindent We describe the hybrids below. 

\noindent \underline{{\bf Hybrid} $\hybrid_1$}: $\Gpub_{\adv, \delta, \PRCsharppub}(1^\secpar,0)$.

\noindent Before describing $\hybrid_2$, we define an event below. 

\begin{quote}
\noindent \underline{${\sf NeverQueried}$}: There exists $(m'_i,r'_i,(R_2)'_i)$ such that $(m'_i,r'_i)$ has never been queried by $\adv$ to the RO until the point where the adversary makes the $i^{th}$ decoding query $c'_i$. Recall that $(m'_i,r'_i,(R_2)'_i) \in Q_{Ch}^{{\sf Dec}}$. 
\end{quote} 


\noindent \underline{{\bf Hybrid} $\hybrid_2$}: This hybrid is identical to the previous hybrid except that if ${\sf NeverQueried}$ happens at any point, immediately abort. 

\begin{lemma}
\label{lem:hybrid1}
$\prob{{\sf NeverQueried}} =\negl$.
\end{lemma}
\begin{proof}
We first show that $\prob{{\sf NeverQueried}}$ is negligible. Note that the probability that we can predict the output of $H$ on $x$ where $x$ has never been queried before is at most $\frac{1}{2^{\lambda}}$. Since $\adv$ can submit $\ell$ decoding queries, the probability that $\adv$ can successfully predict the output of $H$ is at most $\frac{\ell}{2^{\lambda}}$.
\end{proof}

\begin{lemma}
\label{lem:hybrid1:hybrid2}
Hybrids $\hybrid_1$ and $\hybrid_2$ can be distinguished by $\adv$ with advantage at most $\negl$. 
\end{lemma}
\begin{proof}
Conditioned on ${\sf NeverQueried}$ never happening, hybrids $\hybrid_1$ and $\hybrid_2$ are identical. This observation along with~\Cref{lem:hybrid1} completes the proof. 
\end{proof}

\noindent \underline{{\bf Hybrid} $\hybrid_3$}: We first define an alternate decoding procedure, denoted by ${\sf AltDecode}$. 

\begin{quote}
\underline{${\sf AltDecode}$}: on input a codeword $c$ and a set $\{(m''_1,r''_1),\ldots,(m''_q,r''_q),m_{i_1},c_{i_1},\ldots,m_{i_{t}},c_{i_t}\}$, 
\begin{itemize} 
\item It first computes the codewords $\{{c}''_1,\ldots,{c}''_q\}$, where ${c}''_i = \PRC.\Encode(\PRC.\pk,r''_i||m''_i||(R_2)''_i;(R_1)''_i)$ and $H(r''_i||m''_i) = ((R_1)''_i,(R_2)''_i)$. 
\item It then checks\footnote{It could be the case that $c$ is close to more than one codeword in the set $\{c''_{1},\ldots,c''_q\} \cup \{c_{i_1},\ldots,c_{i_t}\}$. In this case, ${\sf AltDecode}$ picks one of the matched codewords at random.} if $\wt(c \oplus c''_j) \leq \delta n$ for some $j \in [q]$. It also checks if $\wt(c \oplus c_{i_k}) \leq \delta n$, for some $k \in [t]$. If such a $j$ (or $k$) exists then it outputs $m''_{j}$ (or $m_{i_k}$). Otherwise, it outputs $\bot$.
\end{itemize} 
\end{quote} 
\par This hybrid is identical to the previous hybrid except that the above alternate decoding procedure ${\sf AltDecode}$ is used to answer decoding queries. In more detail, whenever $\adv$ submits the $i^{th}$ decoding query $c'_i$, run ${\sf Altdecode}(c'_i,\allowbreak \{(m''_1,r''_1),\allowbreak \ldots,\allowbreak (m''_q,r''_q),m_{i_1},c_{i_1},\ldots,m_{i_t},c_{i_t}\})$, where $\{(m''_1,r''_1),\allowbreak \ldots \allowbreak,(m''_q,r''_q)\}$ is the set of recorded RO queries made so far by $\adv$ and $c_{i_1},\ldots,c_{i_t}$ is the set of codewords generated for the messages $m_{i_1},\ldots,m_{i_t}$ during the encoding queries made so far by $\adv$. Denote the result to be $m_{{\sf alt}}^{(i)}$. Also, run the real decoding procedure $\Decode(\sk,c)$ to obtain $m_{{\sf real}}^{(i)}$. If $m_{{\sf alt}}^{(i)} \neq m_{{\sf real}}^{(i)}$, abort the experiment. Otherwise, return $m_{{\sf alt}}^{(i)}$ to $\adv$. 


\begin{lemma}
\label{lem:hybrid2:hybrid3}
Assuming the adaptive robustness property of $\PRC$, hybrids $\hybrid_2$ and $\hybrid_3$ can be distinguished by $\adv$ with advantage at most $\negl$. 
\end{lemma}
\begin{proof}
Conditioned on $m_{{\sf alt}}^{(i)} = m_{{\sf real}}^{(i)}$, for every $i \in [\ell]$ -- i.e. the outputs of the alternate and the real decoding procedures are the same -- the hybrids $\hybrid_2$ and $\hybrid_3$ are identical. Consider the following event. 

\begin{quote}
\noindent \underline{${\sf UnEqual}$}: there exists $i^* \in [\ell]$ such that $m_{{\sf alt}}^{(i^*)} \neq m_{{\sf real}}^{(i^*)}$.
\end{quote}

\noindent We claim that $\prob{{\sf UnEqual}}$ is $\negl$. Suppose $\prob{{\sf UnEqual}}$ is non-negligible, we violate the adaptive robustness property of $\PRC$.

\par We design the following reduction that violates the adaptive robustness property of $\PRC$: 
\begin{itemize}
    \item It samples $\widehat{i} \xleftarrow{\$}  [\ell]$. 
    \item It employs lazy sampling to simulate the RO queries made by $\adv$.
    \item It duly forwards the public key it receives from the external challenger to $\adv$. 
   \item Encoding queries: for the $i^{th}$ query $m_i$ from $\adv$, the reduction samples $r_i$ uniformly at random and submits $(m_i,r_i)$ to the external challenger. It receives a codeword $c_i$ from the external challenger which it duly forwards to $\adv$.   
    \item Decoding queries: for the $i^{th}$ decoding query $c'_{i}$, it first checks if $i < \widehat{i}$. If so, it employs the alternate decoding procedure ${\sf AltDecode}$ to respond to decoding queries. For the $(\widehat{i})^{th}$ decoding query, it checks if $m_{{\sf alt}}^{(\widehat{i})} \neq m_{{\sf real}}^{(\widehat{i})}$, where $m_{{\sf alt}}^{(\widehat{i})}$ and $m_{{\sf real}}^{(\widehat{i})}$ are as defined in the description of $\hybrid_3$. If the check did not pass, it aborts the experiment. Otherwise, it sets $((m,r,R_2),R_1)$ as follows: (a) $c \leftarrow \PRC.\Encode(\PRC.\pk,(m,r,R_2);R_1)$, (b) $\wt(c,c'_{\widehat{i}}) \leq \delta n$ and, (c) $m=m_{{\sf alt}}^{(\widehat{i})}$. It sends $((m,r),\ (R_2),\ c'_{\widehat{i}})$ to the external challenger and ends the adaptive robustness experiment. In other words, after the $(\widehat{i})^{th}$ decoding query, the execution of the reduction ends. 
    \item RO queries: The reduction answers consistently according to the table generated during the lazy sampling procedure.
\end{itemize}
\noindent With probability $\frac{1}{\ell} \prob{{\sf UnEqual}}$, which is non-negligible by our assumption, the following events holds: $(\widehat{i})^{th}$ decoding query is the first decoding query where the event ${\sf UnEqual}$ holds. 

This shows that the reduction violates the adaptive robustness property of $\PRC$ with non-negligible probability, which is a contradiction.  
\end{proof}

\noindent \underline{{\bf Hybrid} $\hybrid_4$}: Sample $(c_1,\ldots,c_{\ell})$ instead from the uniform distribution. 

\begin{lemma}
Assuming the pseudorandomness of $\PRC$, the hybrids $\hybrid_3$ and $\hybrid_4$ can be distinguished by $\adv$ with advantage at most $\negl$. 
\end{lemma}
\begin{proof}

We define the following event. 

\begin{quote}
\noindent \underline{${\sf BadDecodingQuery}$}: If there exists $i$ such that $(m_i,r_i) \in Q_{\adv}^{{\sf RO}}$. 
\end{quote} 

\noindent We first claim that the probability that ${\sf BadDecodingQuery}$ happens in $\hybrid_3$ is $p=\negl$. 
\par Suppose not; that is, let $p$ be non-negligible.  We show that we can violate the pseudorandomness of $\PRC$. The reduction, that violates the pseudorandomness of $\PRC$, does the following: 
\begin{itemize}
    \item It employs lazy sampling to simulate the RO queries made by $\adv$.
    \item It duly forwards the public key it receives from the external challenger to $\adv$. 
    \item Encoding queries: for the $i^{th}$ query $m_i$ from $\adv$, the reduction samples $r_i$ uniformly at random and submits $(m_i,r_i)$ to the external challenger. It receives a codeword $c_i$ from the external challenger which it duly forwards to $\adv$.   
    \item Decoding queries: it employs the alternate decoding procedure ${\sf AltDecode}$ described in $\hybrid_3$ to respond to decoding queries. 
    \item RO queries: if the adversary ever queries $(m_i,r_i)$, the reduction stops the execution and outputs 1. Otherwise, it answers consistently according to the table generated during the lazy sampling procedure.   
\end{itemize}
\noindent If $\adv$ finishes its execution successfully then the reduction outputs 0. 
\par Note that the probability that the reduction outputs 1 in the case when the reduction receives pseudorandom codewords from the external challenger (as generated in $\hybrid_3$) is non-negligible. This is due to our hypothesis that the probability that ${\sf BadDecodingQuery}$ happens in $\hybrid_3$ is $p$, which is assumed to be non-negligible. On the other hand, the probability that the reduction outputs 1 in the case when the reduction receives uniformly random codewords from the external challenger is $\negl$. This is due to the fact that $\{r_1,\ldots,r_{\ell}\}$ is information-theoretically hidden from $\adv$. Hence, the probability that the adversary queries on $(m_i,r_i)$, for any $i$, is at most $\frac{\ell}{2^{\secpar}}$. Thus, the reduction successfully violates the pseudorandomness of $\PRC$, a contradiction. From this we can conclude that $p$ has to be $\negl$. 

\noindent We are now ready to prove the lemma. Suppose the hybrids $\hybrid_3$ and $\hybrid_4$ can be distinguished by $\adv$ with probability $\varepsilon$. We prove by contradiction that $\varepsilon$ is non-negligible. Consider the following reduction: 
\begin{itemize}
    \item It employs lazy sampling to simulate the RO queries made by $\adv$.
    \item It duly forwards the public key it receives from the external challenger to $\adv$. 
    \item Encoding queries: for the $i^{th}$ query $m_i$ from $\adv$, the reduction samples $r_i$ uniformly at random and submits $(m_i,r_i)$ to the external challenger. It receives a codeword $c_i$ from the external challenger which it duly forwards to $\adv$.   
    \item Decoding queries: it employs the alternate decoding procedure described in $\hybrid_3$ to respond to decoding queries. 
    \item RO queries: if the adversary ever queries $(m_i,r_i)$, the reduction aborts. Otherwise, it answers consistently according to the table generated during the lazy sampling procedure.   
\end{itemize}
\noindent After the execution of $\adv$, the output of the reduction is set to be the output of $\adv$.  
\par Since the probability that ${\sf BadDecodingQuery}$ happens is negligible in both the hybrids $\hybrid_3$ and $\hybrid_4$, the probability that the reduction aborts (in both the cases) is negligible. Conditioned on the reduction never aborting, the simulation of $\adv$ by the reduction in the case when $\{c_1,\ldots,c_{\ell}\}$ are pseudorandom (resp., uniform) is identical to $\hybrid_3$ (resp., $\hybrid_4$). Thus, the probability that the reduction violates the pseudorandomness of $\PRC$ is negligibly close to $\varepsilon$, which is non-negligible. This contradicts the pseudorandomness of $\PRC$. Thus, $\varepsilon$ is negligible, which completes the proof. 

\end{proof}

\noindent \underline{{\bf Hybrid} $\hybrid_5$}: This is identical to $\Gpub_{\adv, \delta, \PRCsharppub}(1^\secpar,1)$ except that if  ${\sf NeverQueried}$ ever happens, abort. \\

\begin{lemma}
Hybrids $\hybrid_4$ and $\hybrid_5$ can be distinguished by $\adv$ with advantage at most $\negl$.     
\end{lemma}
\begin{proof}
This proof is similar to the proof of~\Cref{lem:hybrid2:hybrid3}. 
\end{proof}

\noindent \underline{{\bf Hybrid} $\hybrid_6$}: $\Gpub_{\adv, \delta, \PRCsharppub}(1^\secpar,1)$. \\

\begin{lemma}
Hybrids $\hybrid_5$ and $\hybrid_6$ can be distinguished by $\adv$ with advantage at most $\negl$.     
\end{lemma}
\begin{proof}
This proof is similar to the proof of~\Cref{lem:hybrid1:hybrid2}. 
\end{proof}

\noindent This completes the proof of~\Cref{thm:prcsharppub:cca}. 
\end{proof}

\paragraph{Acknowledgments.}
Omar Alrabiah was supported in part by a Saudi Arabian Cultural Mission (SACM) Scholarship, NSF CCF-2210823 and V.\ Guruswami's Simons Investigator Award.
Prabhanjan Ananth was supported by NSF CNS-2329938 and NSF Career-2341004.
Miranda Christ was supported by a Google CyberNYC grant, an Amazon Research Award, and NSF grants CCF-2312242, CCF-2107187, and CCF-2212233.
Yevgeniy Dodis was partially supported by NSF grant CNS-2055578, and a gift from Google.
Sam Gunn was partially supported by a Google PhD fellowship.
Any opinions, findings and conclusions or recommendations expressed in
this material are those of the authors and do not necessarily reflect
the views of the United States Government, Amazon, Google, or any other
supporting organization.

We thank Shivam Nadimpalli for helpful conversations in the early stages of this work.

\newpage
\bibliographystyle{alpha}
\bibliography{00-references.bib}

\newpage
\appendix

\section{Related work} \label{section:related-work}
We are motivated by a line of work on embedding hidden patterns into the outputs of generative models.
To our knowledge, the first of these works was \cite{ZDR19} in the context of steganography using text.
Later, \cite{KJGR21} observed that the scheme of \cite{ZDR19} could be made secure by applying pseudorandom encryption (which can be built from one-way functions).
However, both of these works require full knowledge of the prompt and that there are no modifications to the text, rendering them inapplicable to watermarking.

Due to the rise of generative AI, there has been a renewed interest in watermarking.
Starting with \cite{Aar22,KGW+23}, which focus on language models, there has emerged a line of work that embeds watermarks by modifying the randomness used in these generative algorithms.
In \cite{CGZ24}, they showed that it is possible to modify the randomness of a language model such that the distribution of the watermarked model is \emph{computationally indistinguishable} from that of the original model.
This indistinguishability property, called \emph{undetectability}, implies that the watermark does not degrade the quality of the model under any efficiently computable metric.
Undetectability is identical to steganographic secrecy, and an undetectable multi-bit watermark implies stateless, robust steganography.

While the watermark of \cite{CGZ24} achieves a very strong quality guarantee, it has only very weak \emph{robustness} --- that is, a small number of edits can remove the watermark from the text.
This is because it relies on using a security-parameter-length substring of the text to seed a pseudorandom function used to embed a watermark in the remainder of the response.
If any word in this substring is changed, the seed cannot be recovered at detection time and the watermark disappears.
Therefore the \cite{CGZ24} scheme is only robust against a very constrained adversary that may crop a response to a sufficiently long contiguous substring, but not replace words.

In \cite{FGJ+23}, they introduce a notion of publicly-detectable watermarks.
While it is of course possible to publish the secret watermarking key for any scheme to make it ``publicly detectable,'' the feature that distinguishes \cite{FGJ+23} is that the watermark is \emph{unforgeable} even by an adversary who knows the detection key.
However, their notion of public detectability directly contradicts their definition of robustness --- an issue that other works overcome by introducing two separate watermark detection keys \cite{CG24,ZZXR24}.
Furthermore, this scheme (even the version that uses an error-correcting code) has the same level of very-weak robustness as that of \cite{CGZ24}, and is only undetectable under the assumption that every response from the model has high entropy per fixed-length ``chunk'' of tokens.

Other text watermarks such as \cite{ZALW24,KTHL24} achieve stronger robustness, to even a constant rate of edits, but at the cost of significant degradation in quality.
This state of affairs suggested to many a fundamental tradeoff between quality and robustness of text watermarks.

Pseudorandom codes (PRCs) were introduced to overcome this trade-off, allowing for undetectable watermarks with robustness to even a constant rate of substitutions and random deletions \cite{CG24}.
In the same paper, they also demonstrate that PRCs immediately allow for unforgeable publicly-detectable watermarks, overcoming the aforementioned issues with \cite{FGJ+23}.
Since \cite{CG24}, a few works have used pseudorandom codes and their watermarking framework to construct language model watermarks \cite{GM24,GG24}, as well as an image model watermark \cite{GZS24}.

As stronger robustness of a PRC directly translates to stronger watermark robustness, some works have focused on constructing PRCs with stronger robustness guarantees.
In \cite{CHS24}, they consider a notion of adaptive robustness of a watermark, where the errors are allowed to depend on previously seen watermarked responses.
They prove that the scheme of \cite{CGZ24} is adaptively robust, but as discussed above, that scheme is only very-weakly robust.
They conjecture that the PRC of \cite{CG24} is adaptively robust to even a constant rate of substitutions.

Subsequently, Golowich and Moitra \cite{GM24} showed how to construct binary PRCs from an alternative assumption --- namely, the existence of a family of $(\log n)$-juntas that is hard to learn.
They then presented a powerful and generic transformation that converts any binary-alphabet PRC into a polynomial-sized-alphabet PRC with robustness to non-adaptive edits.

The most recent work on constructing PRCs is that of Ghentiyala and Guruswami \cite{GG24}.
They first show that, assuming just the existence of one-way functions, there exist PRCs with robustness to any non-adaptive channel that introduces any $o(1)$ rate of errors.
They then construct PRCs under alternative notions of pseudorandomness, where the distinguisher is space-bounded or is allowed $o(1)$ distinguishing advantage.
They also present an interesting variant of the codes from \cite{CG24} that permits pseudorandomness to be based on a wider range of assumptions.

Although work on watermarks offering provable guarantees has focused largely on language models, the framework of \cite{CG24} is applicable to generative AI more broadly.
For instance, \cite{GZS24} used PRCs to build a practical image watermark for diffusion models that is quality-preserving.
Therefore, improving PRCs is an avenue for improving watermarks \emph{in general}.

\end{document}